\newtheorem{theorem}{Theorem}
\newtheorem{corollary}{Corollary}[theorem]
\newtheorem{lemma}{Lemma}
\newtheorem{definition}{Definition}
\newtheorem{example}{Example}
\newtheorem{remark}{Remark}
\newcommand{\X}{\mathcal{X}}
\newcommand{\bx}{\mathbf{x}}
\newcommand{\bX}{\mathbf{X}}
\newcommand{\Or}{\mathcal{O}}
\title{Be a Leader or Become a Follower: \\The Strategy to Commit to with Multiple Leaders \\ (Extended Version)}
\author{
	Matteo Castiglioni$^1$\footnote{Equal Contribution}
	\and
	Alberto Marchesi$^{1\ast}$
	\And
	Nicola Gatti$^1$
	\affiliations
	$^1$ Politecnico di Milano, Piazza Leonardo da Vinci 32, Milano, Italy 
	\emails
	\{matteo.castiglioni, alberto.marchesi, nicola.gatti\}@polimi.it
}
\begin{document}

\maketitle

\begin{abstract}
We study the problem of computing \emph{correlated strategies to commit to} in games with \emph{multiple} leaders and followers. To the best of our knowledge, this problem is widely unexplored so far, as the majority of the works in the literature focus on games with a single leader and one or more followers. The fundamental ingredient of our model is that a leader can decide whether to participate in the commitment or to defect from it by taking on the role of follower. This introduces a preliminary stage where, before the underlying game is played, the leaders make their decisions to reach an \emph{agreement} on the correlated strategy to commit to. We distinguish three solution concepts on the basis of the constraints that they enforce on the agreement reached by the leaders. Then, we provide a comprehensive study of the properties of our solution concepts, in terms of existence, relation with other solution concepts, and computational complexity.
\end{abstract}

\section{Introduction}\label{sec:intro}

%
%

Over the last years, Stackelberg games are receiving an increasing attention from the algorithmic game theory community, thanks to their many applications in real-world scenarios, such as in security~\cite{tambe2011security}.
In the classical Stackelberg setting~\cite{Stackelberg34:Marktform}, there is a leader with the ability to play before the other player, who acts as follower by observing the realization of the leader's strategy.
In this work, we follow a different approach, where the leader looks for a \emph{strategy to commit to}~\cite{conitzer2006computing}, and the follower observes the leader's mixed strategy, without knowing its actual realization.
An interpretation of this setting is provided by~\citeauthor{von2010leadership}~[\citeyear{von2010leadership}]: any (\emph{underlying}) game is extended as a sequential game in which the leader plays first, having a continuum of choices corresponding to mixed-strategy commitments.
%

The majority of the works in the literature focus on games with a single leader and a single follower~\cite{conitzer2006computing,von2010leadership}.
In this setting, the leader seeks for a utility-maximizing mixed strategy to commit to, while the follower plays a best response to the commitment.
This model has been largely studied, especially for security applications~\cite{paruchuri2008playing,KiekintveldJTPOT09,an2011guards}.

Some works also study games with a single leader and multiple followers.
\citeauthor{conitzer2011commitment}~[\citeyear{conitzer2011commitment}] introduce a model where the leader commits to a \emph{correlated strategy} and, accordingly, she draws recommendations for the followers, who must obey the incentive constraints of correlated equilibria~\cite{aumann1974subjectivity}.
The authors show that, in normal-form games, an optimal correlated strategy to commit to can be computed in polynomial time.
Other works study situations where the followers play a \emph{Nash equilibrium}~\cite{nash1951non} in the game resulting from the leader's mixed-strategy commitment~\cite{von2010leadership,coniglio17pessimistic,algo2018computing,de2018computing,Marchesi18:leadership,ijcai19cong}.
However, these models usually result in intractable computational problems even with a fixed number of followers.

Settings including multiple leaders are widely unexplored in the literature.
In spite of this, many real-world applications naturally involve more than one player with competitive advantages, playing the role of leader.
Some scenarios are, \emph{e.g.}, network platforms with premium (prioritized) users, markets where a group of firms forms a price-determining dominant cartel
, and political elections in which some candidates choose policy positions in advance of challengers.

Restricted to the security context, there are some works addressing games with multiple uncoordinated defenders (leaders)~\cite{smith2014multidefender,lou2015equilibrium,laszka2016multi,lou2017multidefender,gan2018stackelberg}.  
However, differently from our work, they all enforce Nash-like constraints on the leaders' strategies.
Moreover, their models suffer from two major drawbacks: (i) an exact equilibrium may not exist, and (ii) they strongly rely on problem-specific structures arising in security problems. 

The operations research literature provides further works on multi-leader-follower settings, under the name of \emph{mathematical programs with equilibrium constraints}~\cite{luo1996mathematical}.
%
%
They assume that both leaders and followers are subject to Nash constraints, with the latter playing in the game resulting from the leaders' strategies~\cite{leyffer2010solving,kulkarni2014shared}.
%
%
Furthermore, other works from the same field focus on oligopoly models where the leaders select the level of investment to maximize profits~\cite{demiguel2009stochastic}.
%
%
All these works considerably depart from ours, as they use fundamentally different models and lack thorough game theoretic and computational studies.

\subsubsection{Original Contributions}
%
%
We introduce a new way to apply the Stackelberg paradigm to any finite (underlying) game.
Our approach extends the idea of commitment to \emph{correlated strategies} in settings involving multiple leaders and followers, generalizing the work of \citeauthor{conitzer2011commitment}~[\citeyear{conitzer2011commitment}].
%
%
The crucial component of our framework is that a leader can decide whether to participate in the commitment or to defect from it by becoming a follower. 
This induces a preliminary \emph{agreement stage} that takes place before the underlying game is played, where the leaders decide, in turn, whether to {opt out} from the commitment or not.
%
We model this stage as a sequential game, whose size is factorial in the number of players.
%
%
%
%
%
%
%
%
%
%
Our goal is to identify commitments guaranteeing some desirable properties that we define on the agreement stage.
%
The first one requires that the leaders do not have any incentive to become followers.
It comes in two flavors, called \emph{stability} and \emph{perfect stability}, which are related to, respectively, Nash and subgame perfect equilibria of the sequential game representing the agreement stage.
The second property is also defined in two flavors, namely \emph{efficiency} and \emph{perfect efficiency}, both enforcing Pareto optimality with respect to the leaders' utility functions, though at different levels of the agreement stage. 

We introduce three solution concepts, which we generally call  \emph{Stackelberg correlated equilibria} (SCEs).
They differ depending on the properties they call for.
Specifically, SCEs, SCEs \emph{with perfect agreement} (SCE-PAs), and SCE-PAs \emph{with perfect efficiency} (SCE-PAPEs) require, respectively, stability and efficiency, perfect stability and efficiency, and both perfect stability and perfect efficiency.

First, we investigate the game theoretic properties of our solution concepts.
We show that SCEs and SCE-PAs are guaranteed to exist in any game, while SCE-PAPEs may not.
Moreover, we compare them with other solution concepts.

Then, we switch the attention to the computational complexity perspective.
We show that, provided a suitably defined \emph{stability oracle} is solvable in polynomial time, an SCE optimizing some linear function of leaders' utilities (such as the leaders' social welfare) can be computed in polynomial time, even in the number of players.
%
The same holds for finding \emph{an} SCE-PA, while we prove that computing an optimal SCE-PA is an intractable problem. 
Nevertheless, in the latter case, we provide an (exponential in the game size) upper bound on the necessary number of queries to the oracle.
%

In conclusion, we study which classes of games admit a polynomial-time stability oracle, focusing on those with \emph{polynomial type}~\cite{papadimitriou2008computing}.
We show that the problem solved by our oracle is strictly connected with the weighted deviation-adjusted social welfare problem introduced by~\citeauthor{leyton2011ellipsoid}~[\citeyear{leyton2011ellipsoid}].
As a result, we get that our oracle is solvable in polynomial time in all the game classes where the same holds for the problem of finding an optimal correlated equilibrium.~\footnote{Full proofs of all the results are in Appendices~\ref{sec:props_existence_app},~\ref{sec:prop_relations_app},~\ref{sec:computational_app},~and~\ref{sec:games}.}
%

\section{Preliminaries}\label{sec:prelim}

In this section, we introduce some basic concepts about games and their equilibria used in the rest of the paper.
%

\subsection{Finite Games and Their Equilibria}

A \emph{(finite) game} $G$ is a tuple $(N,\{S_p\}_{p \in N}, \{u_p\}_{p \in N})$, where $N = \{1,\ldots,n\}$ is a finite set of players, $S_p$ is a finite set of player $p$'s \emph{strategies} or \emph{actions}, and $u_p : S \to \mathbb{R}$ is player $p$'s \emph{utility}, defined over the set of \emph{strategy profiles} $S = \bigtimes_{p \in N} S_p$.
Given $s \in S$, let $s_{-p}\in S_{-p} =  \bigtimes_{q \in N \setminus \{p\}} S_q$ be the partial profile obtained by dropping player $p$'s strategy $s_p$ from $s$, so that $s = (s_p, s_{-p})$.
We call $\X = \Delta(S)$ the set of \emph{correlated distributions} defined over strategy profiles, \emph{i.e.}, each $x \in \X$ satisfies $\sum_{s \in S} x(s) = 1$ and $x(s) \geq 0$ for all $s \in S$.
Moreover, overloading notation, $u_p(x) =\sum_{s \in S} x(s) u_p(s)$ is player $p$'s expected utility in $x \in \X$.

A correlated distribution $x \in \X$ is a \emph{correlated equilibrium} (CE)~\cite{aumann1974subjectivity} if, for every player $p \in N$ and strategies $s_p \neq s_p' \in S_p$, the following constraint holds: 
\begin{equation}\label{eq:incentive_ce}
	\sum_{s_{-p} \in S_{-p}} x(s_p,s_{-p}) \left( u_p(s_p,s_{-p}) - u_p(s_p',s_{-p})  \right) \geq 0.
\end{equation}
%
We can interpret a CE in terms of a mediator who draws some strategy profile $s \in S$ from a publicly known distribution $x$, and, then, it privately communicates each recommendation $s_p$ to player $p$.
The distribution is an equilibrium if no player has an incentive to deviate from the recommendation, as made formal by the \emph{incentive constraints} of Eq.~\eqref{eq:incentive_ce}.
%
%
%
%
%
Moreover, a \emph{Nash equilibrium} (NE)~\cite{nash1951non} is a CE $x \in \X$ that can be written as the product distribution of players' \emph{mixed strategies}, \emph{i.e.}, $x(s) = \prod_{p \in N} x_p(s_p)$ for all $s \in S$, where each $x_p\in \Delta(S_p)$ is a probability distribution over strategies $S_p$ denoting a player $p$'s mixed strategy.

In the following, we denote with $\X^\textsc{CE}_P \subseteq \X$ the set of correlated distributions that satisfy the incentive constraints of Eq.~\eqref{eq:incentive_ce} only for a subset of players $P \subseteq N$.
Clearly, $\X^\textsc{CE} = \X^\textsc{CE}_N$ is the set of CEs of the game.
%

Different classes of games are employed depending on how strategies and utilities are represented.
The most common representation is the \emph{normal form}, which encodes each utility function $u_p$ as an $n$-dimensional matrix indexed by $s \in S$.
%
Thus, the size of a normal-form game is exponential in the number of players.
Many other representations have been introduced in the literature.
In Section~\ref{sec:games_main}, we are interested in those with \emph{polynomial type}~\cite{papadimitriou2008computing}, where the number of players and the number of strategies are bounded by polynomials in the size of the game.
%
%
Many important classes of games admit a polynomial-type representation, such as graphical games~\cite{kearns2013graphical}, polymatrix games~\cite{eaves1973polymatrix}, anonymous games~\cite{blonski2000characterization}, and congestion games~\cite{rosenthal1973class}.

\subsection{Stackelberg Games and Equilibria}

Any finite game has a Stackelberg counterpart where some of the players are \emph{leaders} and the others are \emph{followers}.
The former have the ability to commit to a course of play beforehand, while the latter decide how to play after observing the commitment~\cite{von2010leadership}.

\begin{definition}\label{def:stackelberg_game}
	Given a finite game $G$, a \emph{Stackelberg game (SG)} is a tuple $(G,L,F)$ where $L$ and $F$ are the sets of \emph{leaders} and \emph{followers}, respectively, with $N = L \cup F$.
\end{definition}

%
In single-leader single-follower SGs, the follower best responds to the leader's mixed-strategy commitment~\cite{conitzer2006computing,von2010leadership}.
%
%
%
\begin{definition}\label{def:stackelberg_eq}
	Given an SG $(G, \{1\}, \{2\})$, a leader's mixed strategy $x_1 \in \Delta(S_1)$ defines a \emph{Stackelberg equilibrium (SE)} if it maximizes $u_1$ given that, for each $x'_1 \in \Delta(S_1)$, the follower plays an $x_2(x'_1) \in \Delta(S_2)$ maximizing $u_2$.~\footnote{In the literature, different SE concepts are defined depending on how the follower is assumed to break ties. The \emph{strong} and \emph{weak} SEs are two notable cases~\cite{breton1988sequential}, where the follower is assumed to break ties in favor and against the leader, respectively.} 
\end{definition}

%
The multi-follower case unfolds in different scenarios depending on how the followers are assumed to play.
%
%
\citeauthor{conitzer2011commitment}~[\citeyear{conitzer2011commitment}] study what they call \emph{optimal correlated strategies to commit to}, where the leader commits to a utility-maximizing correlated distribution satisfying the incentive constraints (Eq.~\eqref{eq:incentive_ce}) for the followers only.
Formally:

\begin{definition}\label{def:commit_correlated}
	Given an SG $(G,\{ 1 \}, N\setminus \{ 1 \})$, $x \in \X$ is an \emph{optimal correlated strategy to commit to} if it maximizes the leader's utility $u_1(x)$ over the set $\X^\textsc{CE}_{N \setminus \{1\}}$.
\end{definition}

In our work, we pursue the approach of~\citeauthor{conitzer2011commitment}~[\citeyear{conitzer2011commitment}], rather than letting the followers play an NE, as done, \emph{e.g.}, by~\citeauthor{von2010leadership}~[\citeyear{von2010leadership}].    
Indeed, while the two models provide the same leader's utility in single-follower SGs (corresponding to that in an SE), the latter may be strictly better in SGs with two or more followers (see~\cite{conitzer2011commitment} for an example).

\section{Multi-Leader-Follower Stackelberg Games}

%
%
%
We address SGs with multiple leaders and followers.
%
%
%
%
The key components of our approach are the following.
First, we allow the leaders to decide whether to participate in the commitment or to defect from it by taking on the role of followers.
This is modeled by the \emph{agreement stage} of the SG, whose result is the formation of an {agreement} involving a subset of the leaders.
Second, in the spirit of CEs, we introduce a correlation device that, after the agreement, draws recommendations and privately communicates them to the players.
Following~\citeauthor{conitzer2011commitment}~[\citeyear{conitzer2011commitment}], we assume that the leaders involved in the agreement commit to play their recommendations, while the followers obey to the usual incentive constraints of CEs (see Eq.~\eqref{eq:incentive_ce}).
The correlation device may adopt different distributions depending on the sequence of defections that determined the agreement, and these distributions are publicly known.
Our goal is to design the device, so as to achieve some desirable properties of the commitment, which we formally describe in the rest of the section.

Before going into our main definitions, we introduce some useful notation.
Given a subset of players $P \subseteq N$, we denote with $\Pi_P$ the collection of ordered subsets of $P$, including the empty set $\varnothing$.
Given $\pi \in \Pi_P$ and $p \in P \setminus \pi$, we let $\pi p$ be the ordered set obtained by appending $p$ at the end of $\pi$.
We use $\bx = [x_\pi]$ to denote a vector of correlated distributions $x_\pi \in \X^\textsc{CE}_{\pi \cup F}$, one per ordered subset of leaders $\pi \in \Pi_L$,
%
while $\bX = \bigtimes_{\pi \in \Pi_L} \mathcal{X}^\textsc{CE}_{\pi \cup F}$ is the set of all such vectors.
In words, $\pi \in \Pi_L$ represents a sequence of leaders' defections in the agreement stage, while $\bx$ defines the publicly known correlated distributions adopted by the correlation device, with $x_\pi$ being the one used when the sequence of defections is $\pi$.

\begin{definition}\label{def:stackelberg_game_for_x}
	Given a vector of distributions $\bx = [x_\pi] \in \bX$, an SG $(G,L,F)$ is structured in the following two stages:
	\begin{itemize}[nolistsep,itemsep=0mm]
		\item \textbf{Agreement.} It goes on in rounds. In a given round, each leader, in turn, decides between \textsc{Opt-In} and \textsc{Opt-Out}.\footnote{We assume that the leaders are asked to take a decision according to some ordering, \emph{e.g.}, $p \in L$ decides before $q \in L$ if $p < q$.}
		All the decisions are perfectly observable.
		If a player chooses \textsc{Opt-Out}, then she leaves the set of leaders becoming a follower, and a new round starts.
		The stage ends when, during a round, all remaining leaders decide to \textsc{Opt-In}.
		The result is the ordered subset $\pi\in \Pi_L$ of leaders who decided to \textsc{Opt-Out}.~\footnote{The agreement stage is finite 
			as there are at most $|L|$ rounds and each round involves at most $|L|$ decisions.
		}~\footnote{Our results do not rely on the protocol implemented in the agreement stage. Others could be adopted, with the only requirement that they must record in which order the leaders do \textsc{Opt-Out}.}
		%
		%
		%
		%
		%
		%
		%
		%
		%
		\item \textbf{Play.} The correlation device draws some $s \in S$ according to the publicly known correlated distribution $x_\pi$. Then, each player is privately told her recommendation and the underlying game $G$ is played, with the leaders in $L \setminus \pi$ sticking to their recommendations.
	\end{itemize}
\end{definition}

\begin{remark}
	The agreement stage of an SG can be represented as a sequential (i.e., tree-form) game involving the leaders.
	In such game, the players play in turn, according to some fixed order, with only two actions available at each decision point: \textsc{Opt-In} and \textsc{Opt-Out}.
	When a player chooses \textsc{Opt-Out}, then she never plays anymore.
	The game ends after a sequence of \textsc{Opt-In} actions performed by all leaders who have not selected \textsc{Opt-Out} yet.
	Thus, each leaf of the game corresponds to the ordered subset $\pi \in \Pi_L$ representing the sequence of leaders who performed \textsc{Opt-Out} on the path to the leaf.
	Players' payoffs are defined by $u_p(x_\pi)$ for $p \in L$.
	See Figure~\ref{table:example}~(Right) for an example of sequential-game-representation of the agreement stage.
\end{remark}

Next, we introduce some desirable properties that the distributions of the correlation device should satisfy.
In the following definitions, we assume that an SG $(G,L,F)$ is given.

First, we introduce \emph{stability}.
In words, we require that the leaders in $L$ do not have any incentive to become followers.
We introduce two different notions of stability, as follows.

\begin{definition}\label{def:stability}
	Given $\bx = [x_\pi] \in \bX$, for any $\pi \in \Pi_L$, $x_\pi$ is \emph{stable} if, for every $p \in L \setminus \pi$, $u_p(x_\pi)\geq u_p(x_{\pi p})$. Moreover:
	\begin{itemize}[nolistsep,itemsep=0mm]
		\item $\bx$ is \emph{stable} if $x_\varnothing$ is stable;
		\item $\bx$ is \emph{perfectly stable} if $x_\pi$ is stable for every $\pi \in \Pi_L$.
	\end{itemize}
\end{definition}

We denote with $\bX^\textsc{S} \subseteq \bX$ and $\bX^\textsc{PS} \subseteq \bX$ the sets of stable and perfectly stable distributions, respectively.

%
%

\begin{remark}
	The rationale behind stability is that of NE.
	Indeed, $\bx \in \bX$ is stable if and only if each leader playing \textsc{Opt-In} is an NE of the sequential game representing the agreement stage.
	Intuitively, this is because, if $\bx \in \bX$ is stable, each leader must not have any incentive to play \textsc{Opt-Out} given that the other leaders always play \textsc{Opt-In}.
\end{remark}

%
%
%

\begin{remark}
	The rationale behind perfect stability is that of subgame perfection.
	Indeed, $\bx \in \bX$ is perfectly stable if and only if each leader playing \textsc{Opt-In} is a subgame perfect equilibrium of the agreement stage.
	The reason is that perfect stability requires that playing \textsc{Opt-In} is optimal at any decision point of the sequential game.
\end{remark}

The second property that we look for is \emph{efficiency}.
We require that the correlated distributions of the correlation device are \emph{Pareto optimal} with respect to the utility functions of the leaders who decided to \textsc{Opt-In}.
Given $\bX' \subseteq \bX$, for $\pi \in \Pi_L$, we use $\mathcal{P}_{L \setminus \pi}(\bX')$ to denote the set of Pareto optimal correlated distributions in the set $\{ x'_\pi \mid \bx' = [x'_\pi] \in \bX' \}$, where the objectives are the functions $u_p$, for $p \in L \setminus \pi$.
Formally:

\begin{definition}\label{def:efficiency}
	Given $\bx = [x_\pi] \in \bX' \subseteq \bX$, for any $\pi \in \Pi_L$, $x_\pi$ is \emph{efficient} on the set $\bX'$ if $x_\pi \in \mathcal{P}_{L \setminus \pi}(\bX')$.
	Moreover:
	\begin{itemize}[nolistsep,itemsep=0mm]
		\item $\bx$ is \emph{efficient} on $\bX'$ if $x_\varnothing$ is efficient on $\bX'$;
		\item $\bx$ is \emph{perfectly efficient} on $\bX'$ if $x_\pi$ is efficient on $\bX'$ for every $\pi\in\Pi_L$.
	\end{itemize} 
\end{definition}

%
%
%
%

We introduce three different solution concepts for our SGs, which we refer to as \emph{Stackelberg correlated equilibria} (SCEs).
They differ on the types of stability and efficiency that they prescribe.
Formally:

\begin{definition}\label{def:sce}
	Given an SG $(G,L,F)$, $\bx \in \bX$ is an:
	\begin{itemize}[nolistsep,itemsep=0mm]
		\item \emph{SCE} if it is efficient on the set $\bX^\textsc{S}$;
		\item \emph{SCE with perfect agreement (SCE-PA)} if it is efficient on the set $\bX^\textsc{PS}$;
		\item \emph{SCE with perfect agreement and perfect efficiency (SCE-PAPE)} if it is perfectly efficient on the set $\bX^\textsc{PS}$.
	\end{itemize}
\end{definition}

%
%
%
%
%

We denote with $\bX^\textsc{SCE}$, $\bX^\textsc{SCE-PA}$, and $\bX^\textsc{SCE-PAPE}$ the sets of SCEs, SCE-PAs, and SCE-PAPEs, respectively.



%

\begin{example}
	Consider the SG in Figure~\ref{table:example}, where $L = \{1,2\}$ and $F = \varnothing$.
	Let $\bx=[x_\pi]$ be such that $x_\varnothing(s_{1,1},s_{2,1})=1$, $x_{\{2\}}(s_{1,5},s_{2,1})=1$, and $x_\pi(s_{1,1},s_{2,2})=1$ for all the other $\pi \in \Pi_L$.
	%
	Clearly, $x_\pi \in \X_\pi^\textsc{CE}$ for all $\pi \in \Pi_L$.
	Moreover, being $x_\varnothing$ stable and Pareto optimal, $\bx$ is an SCE.
	Observe that, if player 2 performs \textsc{Opt-Out}, $\bx$ prescribes an irrational behavior to player 1, as $u_1(x_{\{2\}}) =0$, while she gets $1$ by doing \textsc{Opt-Out}.
	Thus, $\bx$ is not perfectly stable, as playing \textsc{Opt-In} must be optimal at any decision point of the agreement stage.
	For instance, $\bx'=[x'_\pi]$ with $x'_\varnothing(s_{1,2},s_{2,1})=1$ and $x'_\pi(s_{1,3},s_{2,2})=1$ for every other $\pi \in \Pi_L$ is an SCE-PA.
	However, notice that $\bx'$ is not an SCE-PAPE since $x'_{\{2\}}$ does not maximize player 1's utility.
	%
	Instead, $\bx''=[x''_\pi]$ with $x''_\varnothing(s_{1,4},s_{2,1})=1$, $x''_{\{2\}}(s_{1,3},s_{2,1})$, and $x''_\pi(s_{1,4},s_{2,2})=1$ for all the other $\pi \in \Pi_L$ is an SCE-PAPE.
\end{example}

\begin{figure}[!htp]
	\centering
	\begin{minipage}{0.125\textwidth}
		{\renewcommand{\arraystretch}{1.1}\setlength{\tabcolsep}{2pt}\begin{tabular}{c|c|c|}	
				\multicolumn{1}{c}{}&\multicolumn{1}{c}{$s_{2,1}$}  & \multicolumn{1}{c}{$s_{2,2}$}  \\\cline{2-3}
				$s_{1,1}$ & $5,0$ & $1,2$ \\\cline{2-3}
				$s_{1,2}$ & $4,1$ & $1,2$  \\\cline{2-3}
				$s_{1,3}$ & $2,1$ & $1,1$ \\\cline{2-3}
				$s_{1,4}$ & $3,2$ & $1,3$  \\\cline{2-3}
				$s_{1,5}$ & $0,0$ & $0,0$  \\\cline{2-3}
		\end{tabular}}
	\end{minipage}
	\begin{minipage}{0.35\textwidth}
		\centering
		\includegraphics[width=\textwidth]{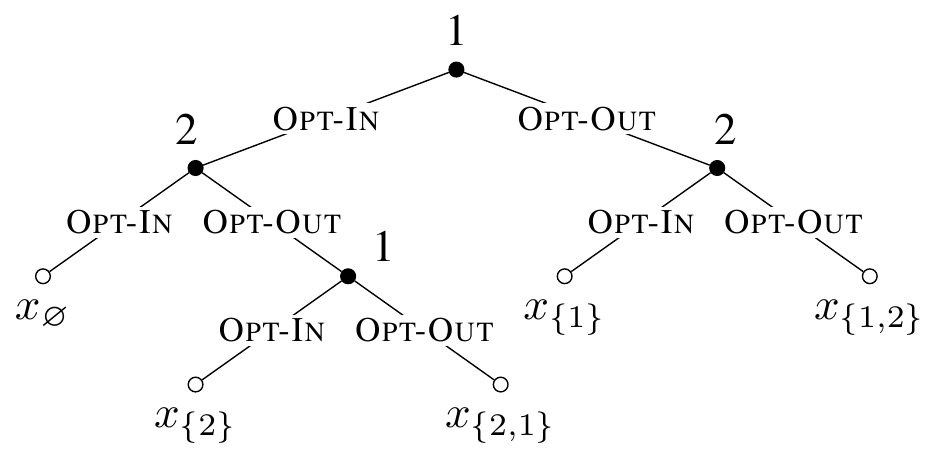}
	\end{minipage}
	\caption{\emph{Left}: Example of two-player normal-form SG with $L = \{1,2\}$. \emph{Right}: Sequential game representing its agreement stage.}
	\label{table:example}
\end{figure}

\section{On the Existence of SCEs}\label{sec:props_existence}

We investigate the existence of our solution concepts in general SGs.
We show that SCEs and SCE-PAs always exist, while we provide an SG where there is no SCE-PAPE.

The fundamental step for proving our existence results (Theorem~\ref{thm:existence_sce}) is to show that (i) $\bX^\textsc{S}$ and $\bX^\textsc{PS}$ are polytopes, and (ii) they are non-empty.
The latter point is a direct consequence of the fact that all vectors $\bx = [x_\pi] \in \bX$ with $x_\pi=x$ for some CE $x \in \X^\textsc{CE}$ are perfectly stable.

\begin{restatable}{theorem}{theoremtwo}
\label{thm:existence_sce}
	Every SG admits an SCE and an SCE-PA.
\end{restatable}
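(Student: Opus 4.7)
The plan is to prove existence of both an SCE and an SCE-PA by first establishing that $\bX^\textsc{S}$ and $\bX^\textsc{PS}$ are non-empty compact polytopes, and then extracting a Pareto-optimal point via the standard trick of maximizing a strictly positive linear combination of the leaders' utilities. The theorem statement and the paragraph preceding it already point at exactly these two ingredients, so the main work is to fill them in carefully.

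First I would verify the polytope structure. The ambient set $\bX = \bigtimes_{\pi \in \Pi_L}\X^\textsc{CE}_{\pi \cup F}$ is a product of polytopes, since each $\X^\textsc{CE}_{\pi \cup F}$ is cut out of the simplex $\X$ by the finitely many linear incentive inequalities of Eq.~\eqref{eq:incentive_ce} restricted to players in $\pi \cup F$. The stability requirements $u_p(x_\pi)\geq u_p(x_{\pi p})$ are linear in the entries of $x_\pi$ and $x_{\pi p}$, hence linear in $\bx$. Imposing them for $\pi=\varnothing$ and all $p\in L$ yields $\bX^\textsc{S}$, while imposing them for every $\pi\in\Pi_L$ and $p\in L\setminus\pi$ yields $\bX^\textsc{PS}$; both are therefore polytopes. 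For non-emptiness, I would pick any $x\in\X^\textsc{CE}$ (a correlated equilibrium of $G$ always exists, e.g.\ any Nash equilibrium qualifies) and set $x_\pi = x$ for every $\pi\in\Pi_L$. Since relaxing incentive constraints to a subset of players enlarges the feasible set, $\X^\textsc{CE}\subseteq\X^\textsc{CE}_{\pi\cup F}$ for every $\pi$, so this $\bx$ lies in $\bX$. Moreover $u_p(x_\pi)=u_p(x)=u_p(x_{\pi p})$, i.e.\ every stability inequality holds with equality, giving $\bx\in\bX^\textsc{PS}\subseteq\bX^\textsc{S}$.

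Finally I would derive the two existence claims at once. On the non-empty compact set $\bX^\textsc{S}$, the continuous linear objective $\bx\mapsto\sum_{p\in L} u_p(x_\varnothing)$ attains its maximum at some $\bx^\ast$. I claim $x^\ast_\varnothing\in\mathcal{P}_L(\bX^\textsc{S})$: otherwise there would be $\by\in\bX^\textsc{S}$ with $y_\varnothing$ Pareto-dominating $x^\ast_\varnothing$, hence $\sum_{p\in L} u_p(y_\varnothing) > \sum_{p\in L} u_p(x^\ast_\varnothing)$, contradicting optimality. Thus $\bx^\ast$ is an SCE. Running the identical argument on $\bX^\textsc{PS}$, which is also a non-empty polytope by the same construction, yields an SCE-PA. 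The only delicate point, rather than a true obstacle, is the claim that $\bX^\textsc{PS}$ is non-empty: one must check that the constant-across-$\pi$ embedding of a CE really satisfies every perfect-stability inequality, which is precisely the reason for choosing a genuine CE (so incentive constraints hold for \emph{all} players, including those in every $\pi\cup F$) rather than merely an element of some $\X^\textsc{CE}_P$.
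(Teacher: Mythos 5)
Your proposal is correct and follows essentially the same route as the paper: show $\bX^\textsc{S}$ and $\bX^\textsc{PS}$ are polytopes (the paper's Lemma~\ref{lem:polytope}), exhibit non-emptiness via the constant vector $x_\pi = x$ for a CE $x \in \X^\textsc{CE}$ (which the paper also notes is perfectly stable because every stability inequality holds with equality), and then extract a Pareto-optimal $x_\varnothing$ — your explicit step of maximizing $\sum_{p\in L} u_p(x_\varnothing)$ is exactly the mechanism the paper invokes implicitly (and states in a footnote) when it asserts that a point with $x_\varnothing \in \mathcal{P}_L(\bX^\textsc{PS})$ exists over a non-empty polytope.
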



\begin{restatable}{prop}{propositionone}
	There are SGs with no SCE-PAPE.
	%
	%
	%
\end{restatable}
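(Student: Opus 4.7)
\medskip\noindent\textbf{Proof plan.}
The plan is to exhibit a concrete SG with two leaders and no followers in which every perfectly stable $\bx$ fails to be perfectly efficient. In such a game the agreement tree has only three non-trivial subgames $\varnothing,\{1\},\{2\}$, and Pareto optimality of $x_{\{p\}}$ collapses to scalar maximisation of the remaining leader's utility $u_{-p}$ over $\X^\textsc{CE}_{\{p\}}$ (up to the stability condition at $\{p\}$, which can always be met by taking $x_{\{p,-p\}}$ to be a CE minimising $u_{-p}$). Letting $M_{-p}=\max\{u_{-p}(x): x\in\X^\textsc{CE}_{\{p\}}\}$ and $\alpha_p=\min\{u_p(x): x\in\X^\textsc{CE}_{\{p\}},\ u_{-p}(x)=M_{-p}\}$, any candidate SCE-PAPE must then satisfy $u_p(x_\varnothing)\geq\alpha_p$ for $p\in\{1,2\}$ by stability at the root.

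I would then design the payoffs of $G$ so that the pair $(\alpha_1,\alpha_2)$ lies strictly above the Pareto frontier of $\X$ with respect to $(u_1,u_2)$: no $x\in\X$ satisfies both $u_1(x)\geq\alpha_1$ and $u_2(x)\geq\alpha_2$ simultaneously. Since the projection of $\bX^\textsc{PS}$ onto the $\varnothing$-coordinate is contained in $\X$, this makes the stability lower bounds at $\varnothing$ infeasible whenever $x_{\{1\}},x_{\{2\}}$ are chosen Pareto-optimally, and hence $\bX^\textsc{SCE-PAPE}=\varnothing$.

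The main obstacle is the joint calibration in the second step: the game must be engineered so that the one-sided incentive constraints defining $\X^\textsc{CE}_{\{p\}}$ push each $u_{-p}$-maximiser onto a profile that also gives leader $p$ a large utility (forcing $\alpha_p$ to be large), while the full Pareto frontier of $\X$ remains strictly southwest of the point $(\alpha_1,\alpha_2)$ in utility space. A workable recipe is to include, for each leader $p$, a ``commitment''-style action that pays leader $-p$ highly only when matched against $p$'s own best response, combined with an asymmetric outside-option structure that flattens the Pareto frontier of $\X$ at a low aggregate $u_1+u_2$. Verifying non-existence for the resulting game then reduces to a finite check on the extreme points of $\X$ together with the linear programs defining $\X^\textsc{CE}_{\{1\}}$ and $\X^\textsc{CE}_{\{2\}}$, which pins down $M_1,M_2,\alpha_1,\alpha_2$ explicitly and confirms that $(\alpha_1,\alpha_2)$ sits above the frontier.
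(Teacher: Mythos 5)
Your proposal is a plan, not a proof: for a non-existence statement the entire content is the concrete game plus its verification, and you never exhibit one. Worse, the calibration you defer to ``the main obstacle'' is exactly where the difficulty lives, and it is far from clear that it can be met with two leaders at all. Note that your own definitions force a strong necessary condition: the witness of $M_{-p}$ is a point $x^\ast\in\X^\textsc{CE}_{\{p\}}\subseteq\X$ with payoff vector $(\alpha_p,M_{-p})$ (for $p=1$, say, the feasible point $(\alpha_1,M_2)$), so for $(\alpha_1,\alpha_2)$ to be Pareto-infeasible you need $\alpha_2>M_2$ and, symmetrically, $\alpha_1>M_1$. Since $M_p\geq\max_{x\in\X^\textsc{CE}}u_p(x)$, this means \emph{each} leader's payoff as a follower under the other's optimal correlated commitment must strictly exceed her own optimal commitment value and her best CE payoff --- both players must simultaneously and strictly prefer the follower role. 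Every natural attempt to engineer this runs into the fact that whenever leader $-p$'s maximisation of $u_{-p}$ over $\X^\textsc{CE}_{\{p\}}$ is unconstrained by $p$'s incentive constraints it lands on $-p$'s favourite outcome (making $\alpha_{-p}>M_{-p}$ impossible), while whenever those constraints bind, $\alpha_p$ collapses to $p$'s best deviation payoff, which tends to be dominated by $M_p$. You give no evidence that a game realising your recipe exists, and the paper itself does not find one with two leaders.

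There is also a definitional gap: perfect efficiency requires $x_{\{p\}}\in\mathcal{P}_{L\setminus\{p\}}(\bX^\textsc{PS})$, i.e.\ maximisation of $u_{-p}$ over the \emph{projection} of $\bX^\textsc{PS}$ onto the $\{p\}$-coordinate, not over all of $\X^\textsc{CE}_{\{p\}}$. That projection excludes any $x$ that cannot be completed to a perfectly stable vector (in particular, the root stability constraints feed back into which $x_{\{p\}}$ are admissible), so the forced value of $u_{-p}(x_{\{p\}})$ may be strictly below your $M_{-p}$, and then the lower bound $u_p(x_{\{p\}})\geq\alpha_p$ --- the engine of your contradiction --- no longer follows. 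The paper avoids both problems by using \emph{three} leaders (Table~\ref{table:no_sce_pape}): a third player whose deviation to $s_{3,3}$ guarantees her a utility of $1$ propagates, via stability, a constraint $u_3\geq 1$ into every subgame, and this constraint combined with Pareto optimality over the \emph{two} remaining leaders pins each of $x_{\{1\}}$ and $x_{\{2\}}$ to a unique pure profile giving the defector a utility of $2$; root stability then demands $u_1(x_\varnothing),u_2(x_\varnothing)\geq 2$, which no distribution achieves. If you want to salvage your approach, you should either produce and fully verify a two-leader game meeting your conditions, or adopt a third leader as the device that forces the post-defection distributions to reward the defector.
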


\begin{proof}[Proof sketch]
	Consider the SG in Table~\ref{table:no_sce_pape}, where $L=\{1,2,3\}$ and $F=\varnothing$.
	Any $\bx = [x_\pi] \in \bX^\textsc{SCE-PAPE}$ must be such that, for every $x_\pi$ with player 3 in $\pi$, $u_3(x_\pi) = 1$ (as player 3 always gets $1$ by deviating to $s_{3,3}$). 
	%
	%
	Given the definition of stability and player 3's incentive constraints, $x_{\{1,2\}}$ and $x_{\{2,1\}}$ must always recommend $s_{3,3}$ to player 3.
	%
	%
	Moreover, by stability and efficiency, $x_{\{1\}}$ must always recommend $(s_{1,1},s_{2,2},s_{3,2})$, where player 1 gets a utility of $2$.
	Similarly, $x_{\{2\}}$ must always recommend $(s_{1,2},s_{2,2},s_{3,1})$ and, thus, player 2 receives a utility of $2$.
	Thus, for stability, $x_\varnothing$ must satisfy $u_1(x_\varnothing), u_2(x_\varnothing) \geq 2$, which is impossible.
\end{proof}

As a result, in the rest of this work we focus on SCEs and SCE-PAs.
We remark that the non-existence of SCE-PAPEs implies that, under the requirements of perfect stability and perfect efficiency, there cannot be an agreement involving all the leaders.
This does not rule out the possibility that some subsets of leaders can still reach an agreement.
However, these cases are much more involved, as the actual group of leaders reaching an agreement inevitably depends on the rules of the protocol implemented in the agreement stage.

\begin{table}[t]
	\centering
	{\renewcommand{\arraystretch}{1.1}\setlength{\tabcolsep}{2pt}\begin{tabular}{cc|c|c|}
			& \multicolumn{1}{c}{} & \multicolumn{1}{c}{$s_{2,1}$}  & \multicolumn{1}{c}{$s_{2,2}$} \\\cline{3-4}
			& $s_{1,1}$ & $0,2,0$ & $2,0,0$ \\\cline{3-4}
			& $s_{1,2}$ & $0,2,0$ & $1,2,1$ \\\cline{3-4}
			& \multicolumn{1}{c}{} & \multicolumn{2}{c}{$s_{3,1}$}\\
	\end{tabular}}
	{\renewcommand{\arraystretch}{1.1}\setlength{\tabcolsep}{2pt}\begin{tabular}{cc|c|c|}
			& \multicolumn{1}{c}{} & \multicolumn{1}{c}{$s_{2,1}$}  & \multicolumn{1}{c}{$s_{2,2}$} \\\cline{3-4}
			& $s_{1,1}$ & $0,2,0$ & $2,1,1$ \\\cline{3-4}
			& $s_{1,2}$ & $0,2,0$ & $0,0,0$ \\\cline{3-4}
			& \multicolumn{1}{c}{} & \multicolumn{2}{c}{$s_{3,2}$}\\
	\end{tabular}}
	{\renewcommand{\arraystretch}{1.1}\setlength{\tabcolsep}{2pt}\begin{tabular}{cc|c|c|}	
			& \multicolumn{1}{c}{} & \multicolumn{1}{c}{$s_{2,1}$}  & \multicolumn{1}{c}{$s_{2,2}$} \\\cline{3-4}
			& $s_{1,1}$ & $0,0,1$ & $0,0,1$ \\\cline{3-4}
			& $s_{1,2}$ & $0, 0,1$ & $0,0,1$ \\\cline{3-4}
			& \multicolumn{1}{c}{} & \multicolumn{2}{c}{$s_{3,3}$}\\
	\end{tabular}}
	\caption{Three-player normal-form SG with no SCE-PAPE (players 1, 2, and 3 select rows, columns, and matrices, respectively).}
	\label{table:no_sce_pape}
\end{table}



%

\section{SCEs and Other Solution Concepts}\label{sec:prop_relations}

%
%
We show that the optimal correlated strategies to commit to introduced by~\citeauthor{conitzer2011commitment}~[\citeyear{conitzer2011commitment}] are a special case of SCEs.
Intuitively, in single-leader SGs, efficiency is equivalent to the maximization of leader's utility, while stability does not enforce additional constraints on the commitment.

\begin{restatable}{theorem}{theoremthree}
	Given an SG $(G,\{ 1 \}, N\setminus \{ 1 \})$, it holds $\bX^\textsc{SCE}=\bX^\textsc{SCE-PA}=\bX^\textsc{SCE-PAPE}$ and, given some $\bx = [x_\pi] \in \bX^\textsc{SCE}$, $x_\varnothing$ is an optimal correlated strategy to commit to.
\end{restatable}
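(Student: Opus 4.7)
The plan is to exploit the fact that with a single leader the ordered subsets of $L$ reduce to just $\Pi_L = \{\varnothing, \{1\}\}$, which collapses most of the structure in the definitions. First I would observe that for $\pi = \{1\}$ we have $L \setminus \pi = \varnothing$, so the stability condition on $x_{\{1\}}$ is vacuously true; hence stability and perfect stability coincide, and $\bX^\textsc{S} = \bX^\textsc{PS}$. Since SCE and SCE-PA differ only in which of these two sets we require efficiency on, they are automatically equal. For the SCE-PAPE direction, the only extra requirement is efficiency of $x_{\{1\}}$, but Pareto optimality with respect to an empty objective set (the utilities indexed by $L\setminus\{1\}=\varnothing$) is trivially satisfied by every $x_{\{1\}} \in \X^\textsc{CE}_N$. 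Thus $\bX^\textsc{SCE} = \bX^\textsc{SCE-PA} = \bX^\textsc{SCE-PAPE}$.

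For the second claim, I would unpack what efficiency of $x_\varnothing$ says when there is a single leader: Pareto optimality with respect to the single objective $u_1$ is just maximization, so $u_1(x_\varnothing) = \max\{u_1(x'_\varnothing) : \bx' = [x'_\pi] \in \bX^\textsc{S}\}$. The key step is to show that any optimal correlated strategy to commit to $x^* \in \argmax_{x \in \X^\textsc{CE}_{N\setminus\{1\}}} u_1(x)$ appears as the $\varnothing$-component of some $\bx^* \in \bX^\textsc{S}$. To do this, I would pair $x^*_\varnothing := x^*$ with any CE $x^*_{\{1\}} \in \X^\textsc{CE}_N$ (which is non-empty, as CEs always exist). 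Since $\X^\textsc{CE}_N \subseteq \X^\textsc{CE}_{N\setminus\{1\}}$, by optimality of $x^*$ we have $u_1(x^*_\varnothing) = u_1(x^*) \geq u_1(x^*_{\{1\}})$, so $\bx^* \in \bX^\textsc{S}$.

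From here the conclusion is immediate: the SCE value satisfies $u_1(x_\varnothing) \geq u_1(x^*_\varnothing) = u_1(x^*)$ because $x^*_\varnothing$ is a feasible candidate in the maximization defining efficiency; conversely $x_\varnothing \in \X^\textsc{CE}_{\varnothing \cup F} = \X^\textsc{CE}_{N \setminus \{1\}}$ forces $u_1(x_\varnothing) \leq u_1(x^*)$. Hence $u_1(x_\varnothing) = u_1(x^*)$, and $x_\varnothing$ is itself an optimal correlated strategy to commit to.

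I do not expect any real obstacle here; the proof is essentially bookkeeping with the definitions once one notices the two dualities: (i) the singleton $L\setminus\{1\}$ trivializes the $\pi = \{1\}$ conditions, collapsing the three solution concepts, and (ii) single-objective Pareto optimality is plain maximization. The only subtlety worth flagging in the write-up is ensuring $x^*$ is realized by \emph{some} stable vector, which is where the inclusion $\X^\textsc{CE}_N \subseteq \X^\textsc{CE}_{N\setminus\{1\}}$ is used.
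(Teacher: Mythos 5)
Your proof is correct and follows essentially the same route as the paper's: both collapse the three solution concepts by noting that the $\pi=\{1\}$ stability and efficiency conditions are vacuous with a single leader, identify single-objective Pareto optimality with maximization of $u_1$, and then argue via an exchange of the $\varnothing$-component that efficiency forces optimality over $\X^\textsc{CE}_{N\setminus\{1\}}$. Your direct version (exhibiting a stable vector containing the optimal commitment) is if anything slightly cleaner than the paper's proof by contradiction, which is stated with a non-strict inequality where a strict one is needed.
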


\begin{proof}[Proof sketch]
	Since the SG has only one leader (player 1), 
	$\bX^\textsc{S} = \bX^\textsc{PS}$,
	and, thus, $\bX^\textsc{SCE}=\bX^\textsc{SCE-PA}$.
	For the same reasons, 
	$\bX^\textsc{SCE-PA}=\bX^\textsc{SCE-PAPE}$.
	Moreover, Pareto optimality is the same as maximizing the leader's utility function $u_1$.
	Let $\bx = [x_\pi] \in \bX^\textsc{SCE}$ and assume, by contradiction, that $x_\varnothing$ is not an optimal correlated strategy to commit to.
	Then, there would be another $\hat x \in \X^\textsc{CE}_{N \setminus \{1\}}$ such that $u_1(\hat x) \geq u_1(x_\varnothing)$.
	However, replacing $x_\varnothing$ with $\hat x$ in $\bx$ would give us another $\hat \bx \in \bX^\textsc{S}$, contradicting the efficiency of $\bx$.
	%
\end{proof}

Given the relation between optimal correlated strategies to commit to and SEs in single-leader single-follower SGs:

\begin{corollary}\label{cor:corollary_se}
	Given an SG $(G,\{ 1 \}, \{ 2 \})$, any $\bx = [x_\pi] \in \bX^\textsc{SCE}$  is such that $u_1(x_\varnothing)$ is the leader's utility in an SE.
	%
\end{corollary}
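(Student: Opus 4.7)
The plan is to combine Theorem~3 (the result just preceding the corollary) with the classical equivalence between optimal correlated strategies to commit to and Stackelberg equilibria in two-player games, which is stated by~\citeauthor{conitzer2011commitment}~[\citeyear{conitzer2011commitment}]. The corollary is essentially an immediate consequence, so the proof should be short.

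First, I would invoke Theorem~3 in the special case $(G,\{1\},\{2\})$. It tells us that, for any $\bx = [x_\pi] \in \bX^\textsc{SCE}$, the distribution $x_\varnothing$ is an optimal correlated strategy to commit to in the sense of Definition~3, i.e., it maximizes $u_1$ over $\X^\textsc{CE}_{\{2\}}$. This reduces the corollary to the well-known fact that, in single-leader single-follower SGs, an optimal correlated strategy to commit to yields the same leader's utility as an SE.

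Second, I would justify this equivalence in one short paragraph, following the standard argument from~\citeauthor{conitzer2011commitment}~[\citeyear{conitzer2011commitment}]. Given any $x \in \X^\textsc{CE}_{\{2\}}$, decompose it by conditioning on the follower's recommendation $s_2 \in S_2$: each conditional distribution over $S_1$ defines a mixed strategy $x_1^{s_2} \in \Delta(S_1)$ of the leader to which $s_2$ is a best response (this is precisely what the single-follower CE incentive constraints of Eq.~(1) enforce). Hence $u_1(x)$ is a convex combination of leader utilities, each attainable by committing to some $x_1^{s_2}$ and having the follower play $s_2$ as a best response. At least one such pair $(x_1^{s_2}, s_2)$ must attain a utility at least equal to $u_1(x)$, and conversely the SE value upper-bounds $u_1(x)$ since $x_1^{s_2}$ is a feasible leader commitment. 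Thus the two values coincide. Combining with the first step gives $u_1(x_\varnothing)$ equal to the leader's utility in an SE.

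I do not expect a real obstacle here: the substantive work has been carried out in Theorem~3 and in the two-player equivalence from the literature, so the corollary is essentially a chaining of known facts. The only point to be careful about is tie-breaking in the definition of SE (strong vs.\ weak, as mentioned in the footnote to Definition~3); the decomposition argument above goes through for the strong SE convention, and the weak-SE version follows by the same scheme.
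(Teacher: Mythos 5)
Your proposal is correct and follows exactly the route the paper intends: the paper states this corollary without proof as an immediate consequence of the preceding theorem combined with the known equivalence (recalled in Section~2) between optimal correlated strategies to commit to and SEs in single-leader single-follower games, and your conditioning-on-the-follower's-recommendation argument is the standard justification of that equivalence. The only minor caveat is that the equivalence, and hence the corollary, is really a statement about the \emph{strong} SE convention; your closing claim that the weak-SE version ``follows by the same scheme'' is too quick, but the paper itself leaves tie-breaking unspecified here, so this does not affect the match with the intended argument.
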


For the relationships of SCEs with other non-Stackelberg solution concepts, we refer the reader to Appendix~\ref{sec:sce_and_cor}.

\section{Computational Complexity of SCEs}\label{sec:computational}

%
%
%
%
%
%

We study the computational complexity of SCEs and SCE-PAs in general SGs. 
We distinguish between the problem of finding \emph{an} equilibrium and that of computing an \emph{optimal} equilibrium, \emph{i.e.}, one maximizing a specific given linear function of leaders' utilities, such as the leader's social welfare.
We introduce the following formal definitions (problems \textsf{f-SCE-PA} and \textsf{o-SCE-PA($\lambda$)} are defined analogously for SCE-PAs).

\begin{definition}[\textsf{f-SCE}]\label{def:find_sce}
	Given an SG $(G,L,F)$, find an SCE.
\end{definition}
\begin{definition}[\textsf{o-SCE($\lambda$)}]\label{def:opt_sce}
	Given an SG $(G,L,F)$ and $\lambda = [\lambda_p] \in [0,1]^{|L|}$, find an SCE $\bx = [x_\pi] \in \bX^\textsc{SCE}$ maximizing the objective function $f_\lambda = \sum_{p \in L}\sum_{s \in S} \lambda_p u_p(s) x_\varnothing(s)$. 
\end{definition}

Let us remark that, in general, the size of a vector $\bx \in \bX$ is factorial in the number of players.
Thus, in the following, we assume that there is some compact representation for $\bx$.~\footnote{As we see next, for all our positive results we can safely assume that there is a compact representation for $\bx \in \bX$ (\emph{e.g.}, $\bx$ only requires a polynomial number of polynomially-sized distributions).}

We establish a tight connection between our problems and an auxiliary one, which is a generalization of the problem of finding an optimal CE.
%
In the rest of the section, we assume to have access to an oracle solving this auxiliary problem, which we call \emph{stability oracle}.
%
In Section~\ref{sec:games_main}, we then investigate for which games the oracle can be efficiently implemented.
%
%
%

%
%
%
%

\begin{definition}\label{def:oracle}
	A \emph{stability oracle} $\Or(G,{c},L,\{x_p\}_{p \in L' \subseteq L})$ is an algorithm that, given a finite game $G$, a coefficients vector ${c} = [c_p] \in [-1,1]^{n}$, a set of leaders $L \subseteq N$, and a collection of correlated distributions $x_p \in \X$ for $p \in L' \subseteq L$, returns an $x \in \X^\textsc{CE}_{N \setminus L}$ maximizing $ \sum_{p \in N} \sum_{s \in S} c_p  u_p(s) x(s)$ subject to the stability constraints, i.e., $u_p(x) \geq u_p(x_p)$ for all $p \in L'$.~\footnote{Note that, given a finite game $G$, $\Or(G,{c},\varnothing,\varnothing)$ returns an optimal CE $x \in \X^\textsc{CE}$ for the objective function defined by ${c} \in [0,1]^n$.}
	%
%
%
%
\end{definition}

%
%
%

In the following, we are interested in games where the stability oracle runs in polynomial time.
Thus, we assume that $\Or$ always returns a correlated distribution with size polynomial in the size of the game.~\footnote{Indeed, this assumption is not restrictive, as all the games we study in Section~\ref{sec:games_main} admit a poly-time oracle $\Or$ with this property.}
We also consider the decision form of the stability oracle, which reads as follows:

%
%
%
%

\begin{definition}\label{def:oracle_dec}
	The \emph{decision form} of a stability oracle $\Or$ is an algorithm $\Or^\textsc{d}(x,L,\{x_p\}_{p \in L' \subseteq L})$ that, given $x \in \mathcal{X}$, $L \subseteq N$, and $x_p \in \X$ for $p \in L' \subseteq L$, answers \textsc{Yes} if $x \in \X^\textsc{CE}_{N \setminus L}$ and $x$ satisfies the stability constraints, and \textsc{No} otherwise.
\end{definition}

%
%
%
%
%

In the following, given $L \subseteq N$ and $\lambda = [\lambda_p] \in [0,1]^{|L|}$, we let ${c}_\lambda = [c_{\lambda,p}] \in [0,1]^n$ be such that $c_{\lambda,p} = \lambda_p$ if $p \in L$, while $c_{\lambda,p}=0$ if not.
Moreover, given $p \in N$, we let ${c}_{p} = [c_{p,q}] \in [0,1]^n$ be such that $c_{p,p}=-1$ and $c_{p,q}=0$ for all $q \in N \setminus \{p\}$.
Note that $c_\lambda$ is the coefficients vector of the objective $f_\lambda$, while $c_p$ corresponds to minimizing $u_p$.

\subsection{Computing SCEs}

We show that, in games admitting a polynomial-time stability oracle, an optimal SCE can be computed in polynomial time.
Intuitively, \textsf{o-SCE}($\lambda$) is solved by $\bx = [x_\pi]$ computed as: $x_{\{p\}}=\Or(G,{c}_p,L \setminus \{p\},\varnothing)$ for $p \in L$, $x_\varnothing=\Or(G,{c}_\lambda,L, \{x_{\{p\}}\}_{p\in L})$, and $x_\pi=\Or(G,{c}_\lambda,\varnothing, \varnothing)$ for every other ordered subset $\pi\in \Pi_L$.  
Formally:

%
%
%
%

%
%
%
%

\begin{restatable}{theorem}{theoremsix}
\label{thm:sce_easy}
	Given an SG $(G,L,F)$ and $\lambda \in [0,1]^{|L|}$, \emph{\textsf{o-SCE}($\lambda$)} can be solved with $|L|+2$ queries to an oracle $\Or$. 
	%
	%
\end{restatable}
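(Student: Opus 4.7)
The plan is to verify that the construction sketched in the paragraph immediately preceding the theorem produces an SCE that is optimal for $f_\lambda$, thereby solving \textsf{o-SCE}$(\lambda)$ with exactly $|L|+2$ oracle queries. Concretely, I would first make $|L|$ queries of the form $x_{\{p\}} = \Or(G, {c}_p, L\setminus\{p\}, \varnothing)$, one per $p \in L$; then one further query $x_\varnothing = \Or(G, {c}_\lambda, L, \{x_{\{p\}}\}_{p \in L})$; and finally a single query $x^* = \Or(G, {c}_\lambda, \varnothing, \varnothing)$, setting $x_\pi = x^*$ for every $\pi \in \Pi_L$ with $|\pi| \geq 2$. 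The query count is then immediate, and the work is to check that the resulting $\bx = [x_\pi]$ lies in $\bX^\textsc{SCE}$ and attains the maximum of $f_\lambda$ over it.

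The first routine step is feasibility: by the specification of $\Or$, $x_{\{p\}} \in \X^\textsc{CE}_{N \setminus (L \setminus \{p\})} = \X^\textsc{CE}_{\{p\} \cup F}$, $x_\varnothing \in \X^\textsc{CE}_F$, and $x^* \in \X^\textsc{CE} \subseteq \X^\textsc{CE}_{\pi \cup F}$ for every $\pi$, so each $x_\pi$ belongs to the set prescribed by Definition~\ref{def:stackelberg_game_for_x}. Stability of $\bx$ is likewise direct from Definition~\ref{def:stability}: $\bx \in \bX^\textsc{S}$ reduces to the inequalities $u_p(x_\varnothing) \geq u_p(x_{\{p\}})$ for all $p \in L$, which are precisely the stability constraints enforced in the second oracle call.

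The conceptual core of the argument is efficiency of $x_\varnothing$ on $\bX^\textsc{S}$. The key lemma I would prove is that the projection of $\bX^\textsc{S}$ onto its $\varnothing$-coordinate admits the closed-form description
\begin{equation*}
\Bigl\{ y \in \X^\textsc{CE}_F : u_p(y) \geq m_p \text{ for all } p \in L \Bigr\}, \text{ where } m_p = \min_{y \in \X^\textsc{CE}_{\{p\} \cup F}} u_p(y).
\end{equation*}
The inclusion $\supseteq$ is witnessed by choosing $x'_{\{p\}}$ to attain $m_p$ and $x'_\pi$ arbitrary (e.g.\ any CE) for $|\pi|\geq 2$; inclusion $\subseteq$ is immediate from stability applied at $\varnothing$. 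Because the $|L|$ queries of the first kind are exactly the minimizations defining $m_p$, we have $u_p(x_{\{p\}}) = m_p$, so the second oracle call maximizes $f_\lambda$ over this projection. Since $\lambda \geq 0$, such a maximizer lies on the Pareto frontier of the projection, which is precisely what efficiency of $x_\varnothing$ on $\bX^\textsc{S}$ demands; hence $\bx \in \bX^\textsc{SCE}$. Optimality then follows from $\bX^\textsc{SCE} \subseteq \bX^\textsc{S}$, since no SCE can beat the $f_\lambda$-maximum over the larger set.

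The main obstacle I anticipate is the passage from ``$f_\lambda$-maximizer'' to ``Pareto optimal point'' when some coordinates of $\lambda$ vanish: a linear maximizer automatically sits on the Pareto frontier when the weights are strictly positive, but under $\lambda \geq 0$ one must argue that at least one $f_\lambda$-maximizer is Pareto optimal and that the oracle may be assumed to return such a point (for instance, by a small perturbation of ${c}_\lambda$, by lexicographic tie-breaking, or by a bookkeeping argument that replaces the output by a dominating feasible point while preserving $f_\lambda$). Everything else in the proof is a direct unpacking of Definitions~\ref{def:stability},~\ref{def:efficiency},~and~\ref{def:oracle}.
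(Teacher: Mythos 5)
Your proposal is correct and follows essentially the same route as the paper's proof: the identical $|L|+2$-query construction, the same feasibility and stability checks, and the same key observation that stability at $\varnothing$ forces $u_p(x'_\varnothing) \geq u_p(x_{\{p\}})$ for any stable $\bx'$ (your projection lemma is just the paper's contradiction argument stated in closed form). The one place you go beyond the paper is in flagging the tie-breaking issue when some $\lambda_p = 0$ --- the paper simply asserts that an $f_\lambda$-maximizer over the polytope $\bX^\textsc{S}$ is Pareto optimal, which indeed requires either strictly positive weights or a lexicographic/perturbation refinement of the kind you describe.
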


\begin{corollary}\label{cor:sce_easy}
	Given an SG $(G,L,F)$, if there is a poly-time oracle $\Or$, then \emph{\textsf{o-SCE}($\lambda$)} can be solved in polynomial time.
\end{corollary}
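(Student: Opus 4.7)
The plan is to verify the construction stated just before the theorem: set $x_{\{p\}} = \Or(G, c_p, L \setminus \{p\}, \varnothing)$ for each $p \in L$, then $x_\varnothing = \Or(G, c_\lambda, L, \{x_{\{p\}}\}_{p \in L})$, and reuse a single call $x_\pi = \Or(G, c_\lambda, \varnothing, \varnothing)$ for every $\pi \in \Pi_L$ with $|\pi| \geq 2$. Together these amount to $|L|+2$ queries.

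First, I would check that the produced vector $\bx = [x_\pi]$ lies in $\bX$ and is stable. By the oracle's specification, $\Or(G, c_p, L \setminus \{p\}, \varnothing) \in \X^\textsc{CE}_{N \setminus (L \setminus \{p\})} = \X^\textsc{CE}_{\{p\} \cup F}$, $x_\varnothing \in \X^\textsc{CE}_F$, and the shared $x_\pi$ lies in $\X^\textsc{CE}_N \subseteq \X^\textsc{CE}_{\pi \cup F}$ for every $\pi$ with $|\pi|\geq 2$. Stability at $\varnothing$ is immediate, since the oracle call producing $x_\varnothing$ explicitly enforces $u_p(x_\varnothing) \geq u_p(x_{\{p\}})$ for every $p \in L$; hence $\bx \in \bX^\textsc{S}$.

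The main substantive step is to show that $f_\lambda(\bx)$ attains the maximum of $f_\lambda$ over $\bX^\textsc{S}$ (and hence over the smaller set $\bX^\textsc{SCE}$). The decisive observation is that the coefficient vector $c_p$ encodes minimization of $u_p$, so $x_{\{p\}}$ achieves $\min\{u_p(y) : y \in \X^\textsc{CE}_{\{p\} \cup F}\}$. Consequently, for any competing $\bx' = [x'_\pi] \in \bX^\textsc{S}$ and each $p \in L$, $u_p(x'_{\{p\}}) \geq u_p(x_{\{p\}})$, and stability of $\bx'$ at $\varnothing$ yields $u_p(x'_\varnothing) \geq u_p(x'_{\{p\}}) \geq u_p(x_{\{p\}})$. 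Thus $x'_\varnothing$ is feasible for the program solved by the oracle call returning $x_\varnothing$, and by optimality of that call, $f_\lambda(x_\varnothing) \geq f_\lambda(x'_\varnothing)$.

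The final piece is to verify $\bx \in \bX^\textsc{SCE}$, i.e., that $x_\varnothing$ is Pareto optimal for the leaders over $\{y_\varnothing : \bx' \in \bX^\textsc{S}\}$. When every entry of $\lambda$ is strictly positive this is automatic, since a linear maximizer with strictly positive coefficients must lie on the Pareto frontier of the feasible polytope. The mildly delicate case, and the main obstacle I anticipate, is $\lambda$ with zero coordinates: here one restores Pareto optimality by lexicographic tie-breaking (or an infinitesimal perturbation of the coefficients passed to the last oracle call), which alters neither $f_\lambda(\bx)$ nor the query count, leaving the total at $|L|+2$.
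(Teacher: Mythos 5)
Your proposal is correct and follows essentially the same route as the paper: the corollary is an immediate consequence of Theorem~\ref{thm:sce_easy}, whose proof uses exactly the construction and optimality argument you give (the paper phrases the final step as a proof by contradiction, but the content --- $u_p(x'_\varnothing) \geq u_p(x'_{\{p\}}) \geq u_p(x_{\{p\}})$, hence $x'_\varnothing$ is feasible for the oracle call producing $x_\varnothing$ --- is identical). Your extra care about Pareto optimality when $\lambda$ has zero coordinates addresses a point the paper glosses over, since it simply asserts that maximizing the linear objective $f_\lambda$ yields $x_\varnothing \in \mathcal{P}_L(\bX^\textsc{S})$; your perturbation or lexicographic fix is a welcome refinement rather than a deviation.
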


\subsection{Computing SCE-PAs}

First, we provide a positive result: one can find \emph{an} SCE-PA with polynomially many invocations to a stability oracle.
%
It is sufficient to compute $\bx = [x_\pi]$ where $x_{\{p\}}=\Or(G,{c}_p,\varnothing, \varnothing)$ for $p \in L$ and $x_\varnothing=\Or(G,{c}_\lambda, L , \{x_{\{p\}}\}_{p \in L})$.
Thus:

%
%
%

\begin{restatable}{theorem}{theoremseven}
\label{thm:sce_pa_easy}
	Given an SG $(G,L,F)$, \emph{\textsf{f-SCE-PA}} can be solved with $|L|+1$ queries to an oracle $\Or$.
	%
\end{restatable}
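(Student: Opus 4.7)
My plan is to verify that the explicit construction sketched just before the statement indeed yields a member of $\bX^\textsc{SCE-PA}$, using only $|L|+1$ oracle queries. Throughout, I would fix a strictly positive weight vector $\lambda$ (e.g., $\lambda_p = 1$ for every $p \in L$) so that any $f_\lambda$-maximiser over a feasibility set is automatically Pareto optimal on it.

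The construction goes as follows. I would first perform the $|L|$ singleton queries $x_{\{p\}} := \Or(G, c_p, \varnothing, \varnothing)$; each result is a full correlated equilibrium (in $\X^\textsc{CE}_N$) minimising $u_p$. For every $\pi \in \Pi_L$ with $|\pi| \geq 2$ I would set, at no further query cost, $x_\pi := x_{\{p_1(\pi)\}}$, where $p_1(\pi)$ denotes the first leader listed in $\pi$; the inclusion $\X^\textsc{CE}_N \subseteq \X^\textsc{CE}_{\pi \cup F}$ (since $\pi \cup F \subseteq N$) then guarantees $x_\pi \in \X^\textsc{CE}_{\pi \cup F}$. Finally I would perform the root query $x_\varnothing := \Or(G, c_\lambda, L, \{x_{\{p\}}\}_{p \in L})$, which is feasible because $\bar x := \frac{1}{|L|}\sum_{q \in L} x_{\{q\}}$ belongs to $\X^\textsc{CE}_F$ (convex mixture of full CEs) and satisfies $u_p(\bar x) \geq u_p(x_{\{p\}})$ for every $p \in L$, since $x_{\{p\}}$ minimises $u_p$ over $\X^\textsc{CE}_N$ and every $x_{\{q\}}$ lies in that set.

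Perfect stability of $\bx$ would then follow immediately: at $\pi = \varnothing$ the inequality $u_p(x_\varnothing) \geq u_p(x_{\{p\}})$ is built into the last oracle call; for any $\pi$ with $|\pi| \geq 1$ and any $q \in L \setminus \pi$ one has $p_1(\pi) = p_1(\pi q)$, hence $x_\pi = x_{\pi q}$ and the stability inequality holds with equality.

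The main obstacle is to establish efficiency of $x_\varnothing$ on the full set $\bX^\textsc{PS}$, since the oracle only enforces stability against the specific threats $x_{\{p\}}$, whereas other perfectly stable completions might in principle use sharper threats and enlarge $\{x'_\varnothing : \bx' \in \bX^\textsc{PS}\}$. I would handle it via a superset-transfer argument: suppose, for contradiction, that some $\bx' \in \bX^\textsc{PS}$ produces $x'_\varnothing$ that Pareto-dominates $x_\varnothing$ from the leaders' viewpoint. Chaining Pareto dominance with the stability of $x_\varnothing$ gives $u_p(x'_\varnothing) \geq u_p(x_\varnothing) \geq u_p(x_{\{p\}})$ for every $p \in L$; together with $x'_\varnothing \in \X^\textsc{CE}_F$ this places $x'_\varnothing$ inside the feasibility set of the root oracle call. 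Since $\lambda > 0$ and the Pareto dominance is strict on at least one coordinate, $f_\lambda(x'_\varnothing) > f_\lambda(x_\varnothing)$, contradicting the optimality of $x_\varnothing$. Hence $x_\varnothing \in \mathcal{P}_L(\bX^\textsc{PS})$ and $\bx \in \bX^\textsc{SCE-PA}$, produced with exactly $|L|+1$ queries, as required.
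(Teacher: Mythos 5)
Your proposal is correct and follows essentially the same route as the paper's proof: the same $|L|$ minimizing queries for the threat points $x_{\{p\}}$, the same assignment $x_\pi = x_{\{p_1(\pi)\}}$ keyed to the first defector, the same root query, and the same contradiction argument (chaining Pareto dominance with the oracle's stability constraints) for efficiency over $\bX^\textsc{PS}$. The only addition is your explicit feasibility check of the root oracle call via the mixture $\bar x$, a detail the paper leaves implicit but which is a welcome (and correct) elaboration.
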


\begin{corollary}\label{cor:sce_pa_easy}
	Given an SG $(G,L,F)$, if there is a poly-time oracle $\Or$, then \emph{\textsf{f-SCE-PA}} can be solved in polynomial time.
\end{corollary}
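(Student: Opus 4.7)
The plan is to exhibit a concrete vector $\bx = [x_\pi]$, assembled using only $|L|+1$ oracle queries, and then verify by hand that it is perfectly stable and efficient on $\bX^\textsc{PS}$. For each $p \in L$, I would issue one query $x_{\{p\}} = \Or(G, {c}_p, \varnothing, \varnothing)$, which returns a full correlated equilibrium of $G$ minimizing $u_p$ (this amounts to $|L|$ queries). Then, fixing $\lambda_p = 1$ for every $p \in L$, I would issue the single further query $x_\varnothing = \Or(G, {c}_\lambda, L, \{x_{\{p\}}\}_{p \in L})$, which returns a distribution in $\X^\textsc{CE}_F$ that maximizes $\sum_{p \in L} u_p(x_\varnothing)$ subject to the stability constraints $u_p(x_\varnothing) \geq u_p(x_{\{p\}})$ for every $p \in L$. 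For every remaining $\pi \in \Pi_L$ with $|\pi| \geq 2$, I would set $x_\pi := x_{\{q\}}$, where $q$ is the first element of $\pi$; this requires no further queries, keeping the total at $|L|+1$. Since each $x_{\{q\}}$ lies in $\X^\textsc{CE}$, the membership $x_\pi \in \X^\textsc{CE}_{\pi \cup F}$ is immediate, so $\bx \in \bX$.

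The perfect-stability check is short. At $\pi = \varnothing$, the inequalities $u_p(x_\varnothing) \geq u_p(x_{\{p\}})$ are precisely the constraints imposed on the second oracle call. At any $\pi \in \Pi_L$ with $|\pi| \geq 1$, letting $q$ denote the first element of $\pi$, every extension $\pi r$ also has $q$ as its first element, so $x_{\pi r} = x_{\{q\}} = x_\pi$ by construction, and the stability inequality at $\pi$ holds with equality. Hence $\bx \in \bX^\textsc{PS}$.

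The main conceptual step is efficiency on $\bX^\textsc{PS}$. Let $Y = \{y \in \X^\textsc{CE}_F : u_p(y) \geq u_p(x_{\{p\}}) \text{ for all } p \in L\}$, so that by construction $x_\varnothing$ maximizes $\sum_{p \in L} u_p$ over $Y$. Suppose, for contradiction, that some $y \in \{x'_\varnothing : \bx' \in \bX^\textsc{PS}\}$ Pareto dominates $x_\varnothing$ with respect to $\{u_p\}_{p \in L}$, i.e., $u_p(y) \geq u_p(x_\varnothing)$ for all $p \in L$ with strict inequality for some $p^*$. Since any element of $\bX^\textsc{PS}$ has its $\varnothing$-component in $\X^\textsc{CE}_F$, we get $y \in \X^\textsc{CE}_F$, and chaining gives $u_p(y) \geq u_p(x_\varnothing) \geq u_p(x_{\{p\}})$ for every $p \in L$, so $y \in Y$. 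But then $\sum_{p \in L} u_p(y) > \sum_{p \in L} u_p(x_\varnothing)$, contradicting the optimality of $x_\varnothing$ over $Y$.

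The delicate aspect of this last argument, and what I expect to need the most care in writing up, is that the threat distributions $x_{\{p\}}$ are drawn from $\X^\textsc{CE}$, which is typically a strict subset of $\X^\textsc{CE}_{\{p\} \cup F}$; thus the stability constraints imposed on $x_\varnothing$ via the second oracle call could in principle be more restrictive than what perfect stability actually demands on the $\varnothing$-component. The chain of inequalities above shows that this tightening does not cost any Pareto optimality, because any would-be dominator of $x_\varnothing$ automatically satisfies our chosen stability constraints and therefore lies in the set $Y$ over which we have optimized.
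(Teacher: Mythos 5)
Your proposal is correct and follows essentially the same route as the paper's proof of Theorem~\ref{thm:sce_pa_easy} (from which the corollary follows): the same $|L|$ threat queries $\Or(G,c_p,\varnothing,\varnothing)$, the same single constrained query for $x_\varnothing$, the same assignment $x_\pi = x_{\{q\}}$ indexed by the first defector $q$, and the same chain $u_p(x'_\varnothing)\geq u_p(x_\varnothing)\geq u_p(x_{\{p\}})$ to derive the contradiction with the oracle's optimality (the paper takes $\lambda\in(0,1]^{|L|}$ where you fix $\lambda_p=1$, which is an instance of the same choice). The subtlety you flag about $\X^\textsc{CE}$ being a possibly strict subset of $\X^\textsc{CE}_{\{p\}\cup F}$ is handled the same way in the paper, implicitly through that chain of inequalities.
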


Now, we switch to the problem of computing an optimal SCE-PA, showing that it cannot be solved efficiently, even with access to a polynomial-time stability oracle.
Specifically, we prove a stronger negative result: even the easier problem of verifying the perfect stability of a given $\bx\in \bX$ is computationally intractable.
Our statement is based on a reduction from the \textsf{coNP}-complete problem of deciding whether a given formula in \emph{disjunctive normal form} (DNF) is a tautology or not~\cite{arora2009computational}.
%

%
%
%
%
%

\begin{restatable}{theorem}{theoremeight}
\label{thm:hardness_verify}
	Given an SG $(G,L,F)$ and $\bx \in \bX$, verifying whether $\bx \in$ or $\not \in  \bX^\textsc{PS}$ is not in \emph{\textsf{P}} unless \emph{\textsf{NP} = \textsf{coNP}}, even with access to a polynomial-time decision-form oracle $\Or^\textsc{d}$.
\end{restatable}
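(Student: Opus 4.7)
The plan is to establish \textsf{coNP}-hardness of the verification problem by a polynomial-time reduction from \textsf{DNF-TAUTOLOGY}, which is known to be \textsf{coNP}-complete. Given an instance $\phi = \bigvee_{j=1}^{m} C_j$ on variables $v_1,\ldots,v_n$, I would construct an SG $(G,L,F)$ with $|L|=n$ leaders (one per variable), together with a compactly-described vector $\bx = [x_\pi] \in \bX$, such that $\bx \in \bX^\textsc{PS}$ iff $\phi$ is a tautology. This would imply that a polynomial-time verifier for perfect stability places a \textsf{coNP}-complete problem in \textsf{P}, yielding $\textsf{NP}=\textsf{coNP}$. Crucially, the reduction must not preclude access to the oracle: the decision-form oracle $\Or^\textsc{d}(x_\pi,L\setminus\pi,\{x_{\pi p}\}_{p\in L\setminus\pi})$ already answers, in polynomial time, whether $x_\pi \in \X^\textsc{CE}_{\pi\cup F}$ and whether the stability inequalities at the single node $\pi$ hold; hence the hardness arises solely from having to scan the $\Theta(|L|!)$ ordered subsets, which is only possible because $\bx$ is given through a compact representation.

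The encoding identifies each $\pi \in \Pi_L$ with the truth assignment $a_\pi \in \{0,1\}^n$ defined by $v_i = 0$ iff $i \in \pi$. The compact description of $\bx$ is a polynomial-time procedure that, on input $\pi$, outputs a short-support $x_\pi$ derived from $\pi$ and $\phi$. The game $G$ is built with a small ``clause gadget'' (constant-sized auxiliary actions, possibly plus a follower) so that, for every $\pi$ and every $p \in L \setminus \pi$, the single stability inequality $u_p(x_\pi) \geq u_p(x_{\pi p})$ holds iff some clause of $\phi$ is satisfied by $a_{\pi p}$. Consequently $\bx$ is perfectly stable iff every assignment reachable through some sequence of \textsc{Opt-Out} actions satisfies at least one clause of $\phi$, i.e., iff $\phi$ is a tautology. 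Any failure to be a tautology exhibits a falsifying $a$, which in turn produces a violating $(\pi,p)$ witnessing $\bx \notin \bX^\textsc{PS}$, and vice versa.

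The main obstacle will be engineering the gadget so that the marginal effect on $u_p$ of passing from $x_\pi$ to $x_{\pi p}$ strictly reflects clause-level satisfaction of $\phi$ at $a_{\pi p}$, while simultaneously guaranteeing $x_\pi \in \X^\textsc{CE}_{\pi\cup F}$ for every $\pi$ (so that the instance is not trivially rejected by $\Or^\textsc{d}$). A natural template is to let $x_\pi$ be a convex combination, with polynomial-size support, of a robustly incentive-compatible ``base'' CE $x^{\textsc{base}}$ and a ``reward'' profile $x^{\textsc{rew}}_{a_\pi}$ whose weight is a simple function of the number of clauses of $\phi$ satisfied by $a_\pi$. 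Choosing the base CE's incentive margins strictly larger than the reward perturbation preserves the correlated-equilibrium inequalities under every reweighting, while calibrating $x^{\textsc{rew}}$ so that its contribution to $u_p$ strictly increases from $x_\pi$ to $x_{\pi p}$ exactly when $\phi(a_{\pi p})$ fails encodes clause satisfaction in the stability inequality. Boundary cases (e.g., $\pi = L$, where no stability constraints remain, or the all-ones assignment) can be absorbed by padding $\phi$ with a single dummy literal or by a vacuous convention.
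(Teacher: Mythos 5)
Your high-level architecture matches the paper's exactly: a reduction from \textsf{DNF-TAUTOLOGY}, one leader per variable, the identification of each ordered subset $\pi$ with the assignment setting to false precisely the variables of the defected leaders, a compactly described $\bx$, and the conclusion that a polynomial-time verifier would collapse \textsf{NP} and \textsf{coNP}. You also correctly locate where the oracle fails to help (factorially many nodes to scan). However, the entire technical content of the theorem is the concrete gadget, and you explicitly defer it as ``the main obstacle'' while offering only a template whose correctness is not established. The paper's construction is quite specific: each $x_\pi$ is a \emph{point mass} on the profile where exactly the leaders in $\pi$ play $s_\textsc{F}$ and a single follower points at the last defector; the leaders' payoffs (Table~\ref{table:reduction}) are calibrated so that along any defection path the defector's utility stays at $\#\textsc{F}(\tau^s)=|\pi|$ as long as the current assignment satisfies $\Phi$, and jumps to $|V|>|\pi|$ at the first falsifying assignment encountered. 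Your template (a convex combination of a base CE and a reward profile weighted by the \emph{number} of satisfied clauses) does not obviously produce the needed $0/1$ dichotomy --- stability must hinge on whether \emph{at least one} clause is satisfied, not on how many --- and you never verify that the resulting $x_\pi$ lie in $\X^\textsc{CE}_{\pi\cup F}$ (note only the \emph{defected} leaders and the followers are subject to the incentive constraints, a point your sketch does not engage with; a strict-margin base CE is also incompatible with small-support distributions in general).

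A second, sharper gap: your stated design goal is that the stability inequality at $(\pi,p)$ holds iff $\phi(a_{\pi p})$ is satisfied. Since the all-true assignment $a_\varnothing$ is never of the form $a_{\pi p}$, this criterion certifies perfect stability even when $\phi$ is falsified \emph{only} by the all-true assignment, so the claimed equivalence with tautology fails on that instance. The paper handles this case separately (the last column of Table~\ref{table:reduction} gives the leaders $-1$ at the all-true profile when $\Phi(\tau^s)=\textsc{F}$, so that $x_\varnothing$ itself is unstable), and also does \emph{not} require instability at every falsifying node --- only at the first one reached along a path --- which is what makes the payoff engineering feasible. Your proposed fix (``padding $\phi$ with a single dummy literal or a vacuous convention'') is not a proof and, as stated, does not make the all-true assignment reachable as some $a_{\pi p}$. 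Until the gadget is written down and both directions are checked (including CE membership of every $x_\pi$ and the boundary node $\varnothing$), the proposal is a plan rather than a proof.
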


\begin{proof}[Proof sketch.] 
	Given a formula $\Phi$ in DNF, we construct an SG $(G,L,F)$ involving a leader $p_v$ for each variable $v \in V$ and a single follower $p_f$.
	Each $p_v$ has two actions, $s_\textsc{T}$ and $s_\textsc{F}$, which define the truth value of $v$. As a result, any $s \in S$ corresponds to a truth assignment $\tau^s$ defined by leaders' strategies.
	The follower has a strategy $s_v$ for each variable $v \in V$.
	Table~\ref{table:reduction} reports the leaders' utilities (the follower always gets $0$).
	We build $\bx = [x_\pi] \in \bX$ with $x_\varnothing(s) =1$ for some $s \in S$ such that $s_{p_v} = s_\textsc{T}$ for all $v \in V$.
	%
	Furthermore, for every $v \in V$ and $\pi \in \Pi_{L \setminus \{p_v\}}$, we let $x_{\pi p_v}(s) = 1$ for $s \in S$ with $s_{p}= s_\textsc{F}$ for all $p \in \pi p_v$, $s_p=s_\textsc{T}$ for all $p \in L\setminus \pi p_v$, and $s_{p_f}=s_{v}$.

	\emph{If.}
	%
	Suppose $\Phi$ is a tautology.
	%
	For every $\pi \in \Pi_L$, $x_\pi$ recommends all the leaders in $\pi$ to play $s_\textsc{F}$.
	%
	%
	%
	Note that, for every $v \in V$ and $\pi \in \Pi_{L \setminus \{p_v\}}$, $u_{p_v}(x_\pi)= \#\textsc{F}(\tau^s) =|\pi|$, while, if $p_v$ decides to \textsc{Opt-Out}, she is recommended to play $s_\textsc{F}$ (one more variable is set to false) and, being $s_{p_f}=s_v$, she gets the same utility. 
	As a result, all distributions $x_\pi$ are stable.

	\emph{Only if.}
	Suppose $\Phi$ is not a tautology.
	%
	Let $s \in S$ be such that $\Phi(\tau^s) = \textsc{F}$.
	If $s_{p_v} =s_\textsc{T}$ for every $v \in V$, then $x_\varnothing$ is not stable as the leaders would \textsc{Opt-Out} (getting at least $0 > -1$).
	Otherwise, there exists $s' \in S$ such that $\Phi(\tau^{s'}) = \textsc{T}$ and $x_{\pi}(s')=1$, $x_{\pi p_v}(s)=1$ for some $v \in V$, $\pi\in \Pi_{L \setminus \{p_v\}}$.
	Then, $u_{p_v}(x_\pi)= \#\textsc{F}(\tau^{s'}) = |\pi|$ and $u_{p_v}(x_{\pi p_v})= |V| > |\pi|$.
	Thus, $x_\pi$ is not stable, as leader $p_v$ would \textsc{Opt-Out}.
	%
	%
	%
	%
	%
	%
%
%
%
%
\end{proof}
\begin{table}[t]
	\centering
	{\renewcommand{\arraystretch}{1.1}\setlength{\tabcolsep}{3pt}\begin{tabular}{c|c|c|c|c|}	
			\multicolumn{1}{c}{}& \multicolumn{2}{c}{$\Phi(\tau^s)=\textsc{T}$} & \multicolumn{2}{c}{$\Phi(\tau^s)=\textsc{F}$}\\
			\multicolumn{1}{c}{} & \multicolumn{1}{c}{ $s_{p_f}=s_v$}  & \multicolumn{1}{c}{$s_{p_f} \neq s_v$} & \multicolumn{1}{c}{$\exists v : s_{p_v} = s_\textsc{F} $} &  
			\multicolumn{1}{c}{$\forall v : s_{p_v}=s_\textsc{T}$}  \\ \cline{2-5}
			$s_\textsc{T}$ & $0$ & $\#\textsc{F}(\tau^s)$ &$|V|$& $-1$ \\ \cline{2-5}
			$s_\textsc{F}$ & $\#\textsc{F}(\tau^s)-1$ & $|V|$ &$|V|$& $0$\\\cline{2-5}
	\end{tabular}}
	\caption{Leader $p_v$'s ($v \in V$) utilities in the SG for the reduction of Theorem~\ref{thm:hardness_verify}. On rows, there are $p_v$'s strategies $s_\textsc{T}$ and $s_\textsc{F}$, whereas, on columns, we report the four possible cases for $s \in S$. $\#\textsc{F}(\tau^s)$ denotes the number of variables set to false by $\tau^s$.}
	\label{table:reduction}
\end{table}

As a byproduct of Theorem~\ref{thm:sce_easy} we have that, when looking for optimal SCEs, one can restrict the attention to those $\bx \in \bX$ admitting a representation whose size is polynomial in the size of the game.
For Theorem~\ref{thm:sce_pa_easy}, the same holds when searching for an SCE-PA.
However, Theorem~\ref{thm:hardness_verify} implies that optimal SCE-PAs require an exponential number of different distributions.
Moreover, even when $\bx \in \bX$ can be easily represented in a compact form (as in the proof of Theorem~\ref{thm:hardness_verify}), we cannot check in polynomial time whether $\bx \in \bX^\textsc{PS}$ or not.
%

This poses a new intriguing question: can we restrict the attention to $\bx \in \bX$ whose size is less than factorial in the number of players?
%
%
We show that the answer is positive.
%
It is sufficient to consider $\bx\in \bX$ whose size is exponential in the number of players, 
as only the unordered set of defecting leaders and the last of them who decided to \textsc{Opt-Out} matter.

%
%
%
%

\begin{restatable}{theorem}{theoremnine}
\label{thm:fact_to_exp}
	Given an SG $(G,L,F)$ and $\bx=[x_\pi]\in \bX^\textsc{PS}$, there is an $\bx'=[x'_\pi]\in \bX^\textsc{PS}$ s.t. $x'_\varnothing = x_\varnothing$	and $x'_{\pi p}=x'_{\pi' p}$ for every $p\in L$ and $\pi, \pi' \in \Pi_{L\setminus \{p\}}$ defining the same set.
\end{restatable}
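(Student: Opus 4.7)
The plan is to define $\bx'$ from $\bx$ by choosing, for every pair $(P,p)$ with $\varnothing \neq P\subseteq L$ and $p\in P$, a canonical representative ordering
$\pi^\star_{P,p}\in\argmin\{u_p(x_\pi): \pi\in\Pi_L \text{ is an ordering of } P \text{ ending with } p\}$,
and then setting $x'_\sigma := x_{\pi^\star_{P,p}}$ for every $\sigma\in\Pi_L$ whose underlying set is $P$ and whose last element is $p$ (with $x'_\varnothing := x_\varnothing$). By construction $\bx'$ satisfies the symmetry condition in the statement, and each $x'_\sigma$ lies in $\X^\textsc{CE}_{P\cup F}$ because the corresponding $x_{\pi^\star_{P,p}}$ does (as $\bx \in \bX^\textsc{PS}$).

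The substantive step is verifying perfect stability of $\bx'$, i.e.\ $u_q(x'_\sigma) \geq u_q(x'_{\sigma q})$ for every $\sigma$ and every $q\in L$ not in the underlying set of $\sigma$. For $\sigma \neq \varnothing$ with underlying set $P$ and last element $p$, the key observation is that $\pi^\star_{P,p}\, q$ is itself an ordering of $P\cup\{q\}$ ending with $q$; hence the argmin defining $\pi^\star_{P\cup\{q\},q}$ ranges over a set containing $\pi^\star_{P,p}\, q$, which yields $u_q(x_{\pi^\star_{P\cup\{q\},q}}) \leq u_q(x_{\pi^\star_{P,p}\, q})$. Perfect stability of $\bx$ applied to the ordering $\pi^\star_{P,p}$ gives $u_q(x_{\pi^\star_{P,p}}) \geq u_q(x_{\pi^\star_{P,p}\, q})$. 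Chaining the two inequalities produces the desired $u_q(x'_\sigma) \geq u_q(x'_{\sigma q})$. The case $\sigma=\varnothing$ is immediate: $\pi^\star_{\{q\},q}$ can only be the singleton $(q)$, so $x'_{(q)}=x_{(q)}$ and perfect stability of $\bx$ at $x_\varnothing$ finishes the job.

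The main obstacle to appreciate is why this particular choice of representative works while simpler recipes (an arbitrary fixed ordering per $(P,p)$, or a uniform convex combination over orderings ending with $p$) fail. A single representative at the deeper level $(P\cup\{q\},q)$ must simultaneously satisfy one stability inequality for every $p'\in P$ at the shallower level $(P,p')$; the argmin rule for $u_q$ produces a representative whose $u_q$-value is at most $u_q(x_{\pi^\star_{P,p'}\, q})$ for \emph{every} shallower representative $\pi^\star_{P,p'}$, and then perfect stability of $\bx$ absorbs the remaining gap through a single inequality per $(P,p')$. In other words, performing the minimization at each deeper level decouples, uniformly in $p'$, the constraints that the symmetry requirement would otherwise entangle.
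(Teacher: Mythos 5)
Your proposal is correct and follows essentially the same route as the paper: both define $x'_{\pi p}$ as the $x_{\pi' p}$ minimizing $u_q$-value (for the last defector) over all orderings of the same underlying set ending in the same player, and both verify perfect stability by chaining the argmin inequality at the deeper level with the stability inequality of $\bx$ at the chosen representative. Your closing discussion of why the minimization is needed is a nice addition but does not change the substance of the argument.
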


%
%
%

%
%
%
%

Theorem~\ref{thm:fact_to_exp} allows us to reduce the number of queries to a stability oracle that are necessary to find an optimal SCE-PA.

\begin{restatable}{theorem}{theoremten}
\label{thm:opt_sce_pa}
	Given an SG $(G,L,F)$ and $\lambda \in [0,1]^{|L|}$, \emph{\textsf{o-SCE-PA($\lambda$)}} can be solved with $|L| 2^{|L| - 1} + 1$ queries to $\Or$.
%
\end{restatable}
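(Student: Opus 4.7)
My plan is to leverage Theorem~\ref{thm:fact_to_exp} to reduce the factorial number of distributions to $|L|2^{|L|-1}+1$ distinct ones, then compute them bottom-up by oracle calls that mirror a backward induction for subgame perfection.

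By Theorem~\ref{thm:fact_to_exp}, I restrict attention to $\bx = [x_\pi] \in \bX^\textsc{PS}$ in which $x_{\pi p}$ depends only on the unordered set $T := \pi \cup \{p\} \subseteq L$ of defected leaders and on the identity $p \in T$ of the last defector; I write this distribution as $x_{T,p}$. The number of such pairs is $\sum_{k=1}^{|L|} k\binom{|L|}{k} = |L|2^{|L|-1}$, which together with the single distribution $x_\varnothing$ matches the claimed query count. The structural observation driving the algorithm is that the stability constraint at a node with defection set $T \subsetneq L$ reads $u_q(x_T) \geq u_q(x_{T \cup \{q\}, q})$, one per active leader $q \in L \setminus T$; each inequality involves $u_q$ only, and depends solely on the child reached when $q$ is the next defector. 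Hence, smaller values of $u_q(x_{T \cup \{q\}, q})$ weakly enlarge the feasible polytope at $T$, suggesting the greedy rule: for each non-root $(T,p)$, choose $x_{T,p}$ to minimize $u_p$ (since it is $p$'s utility that appears in the parent's stability constraint), subject to stability against the already-computed children.

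Concretely, I iterate $T$ from $T = L$ down to singletons; for each $p \in T$ I invoke $\Or(G, c_p, L \setminus T, \{x_{T \cup \{q\}, q}\}_{q \in L \setminus T})$, observing that at the bottom level $T = L$ the third and fourth arguments are empty, producing a plain CE that minimizes $u_p$. Once all non-root distributions are assembled, a final call $\Or(G, c_\lambda, L, \{x_{\{p\},p}\}_{p \in L})$ yields $x_\varnothing$ maximizing $f_\lambda$ subject to the single-defection stability constraints. This totals $|L| + \sum_{k=1}^{|L|-1} k\binom{|L|}{k} + 1 = |L|2^{|L|-1}+1$ queries.

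Correctness has two parts. Perfect stability is immediate, since every produced distribution satisfies, by the oracle's guarantee, exactly the stability constraints of Definition~\ref{def:stability}. Optimality follows by backward induction on $|L|-|T|$: for any competing $\bx' \in \bX^\textsc{PS}$ (which, by Theorem~\ref{thm:fact_to_exp}, I may also take in the exponential form), one shows $u_p(x_{T,p}) \leq u_p(x'_{T,p})$ for every non-root $(T,p)$; this propagates to ensure the feasible region of our final oracle call for $x_\varnothing$ contains the corresponding region faced by $x'_\varnothing$, whence our $f_\lambda$-value dominates. The main obstacle is this induction: one must exploit the per-leader decoupling of stability constraints to argue that minimizing $u_p$ independently at each $(T,p)$ is both feasible (given the perfect stability already secured below) and globally $f_\lambda$-optimal, never precluding a perfectly stable completion nor foregoing a larger value at the root.
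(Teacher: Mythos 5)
Your proposal is correct and follows essentially the same route as the paper: restrict to the exponential-form vectors justified by Theorem~\ref{thm:fact_to_exp}, compute each $x_{T,p}$ bottom-up by calling $\Or$ with objective $c_p$ subject to stability against the already-computed children (plain $u_p$-minimizing CEs at the leaves $T=L$), and finish with one $c_\lambda$-maximizing call for $x_\varnothing$, for $|L|2^{|L|-1}+1$ queries in total. The backward-induction optimality argument you sketch is the same as the paper's case analysis, just phrased inductively.
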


%
%
%
%
%
%

Finally, we can provide an example showing that Theorem~\ref{thm:opt_sce_pa} is tight, which leads to the following proposition.

\begin{restatable}{prop}{propositionsix}
	Solving \emph{\textsf{o-SCE-PA($\lambda$)}} requires to take into account the last player who performed \textsc{Opt-Out}, while focusing only on the set of defecting leaders is not sufficient.
\end{restatable}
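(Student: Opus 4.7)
The plan is to exhibit an explicit SG with $|L|=2$ leaders (and, for simplicity, no followers) in which every optimal SCE-PA with respect to, say, the leaders' social-welfare objective $\lambda=(1,1)$ is forced to use two genuinely different ``punishment'' distributions $x_{\{1,2\}}$ and $x_{\{2,1\}}$, even though the underlying sets of defecting leaders coincide. This directly contradicts any putative strengthening of Theorem~\ref{thm:fact_to_exp} that would only keep one distribution per unordered set of defectors, and thereby establishes the proposition.

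The first step is to design the payoff matrix. I would pick a $2\times 2$ (or slightly larger) normal-form game containing three salient profiles: a ``cooperation'' profile with payoffs $(b,b)$ for some large $b$, a ``punish player $2$'' profile with payoffs roughly $(a,0)$, and a ``punish player $1$'' profile with payoffs roughly $(0,a)$, for a suitably chosen $a<b$. The single-defection distributions $x_{\{1\}}$ and $x_{\{2\}}$ must be chosen to maximize the remaining leader's utility (since she is still committed), and their perfect stability is controlled, respectively, by $u_2(x_{\{1,2\}})$ and $u_1(x_{\{2,1\}})$: the smaller these threat utilities, the tighter the stability constraints bind and hence the larger $u_1(x_\varnothing)+u_2(x_\varnothing)$ may be. Choosing $x_{\{1,2\}}$ to be the ``punish $2$'' profile and $x_{\{2,1\}}$ to be the ``punish $1$'' profile drives both threats to zero simultaneously and yields a concrete optimal social welfare $W^\star$.

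The second step is the lower-bound argument: assume, for contradiction, that some optimal SCE-PA satisfies $x_{\{1,2\}}=x_{\{2,1\}}=y$ for a common distribution $y\in\X$. A short case analysis on $y$ shows that $u_1(y)+u_2(y)$ is bounded below by a strictly positive constant in this game (there is no single distribution that is ``bad'' for both players simultaneously), so at least one of $u_1(y), u_2(y)$ exceeds the value attained by the asymmetric threats; this loosens one of the stability constraints at $\pi=\{1\}$ or $\pi=\{2\}$, which in turn forces $u_1(x_{\{2\}})$ or $u_2(x_{\{1\}})$ strictly below the level attained in the asymmetric construction. Propagating this through the stability constraints for $x_\varnothing$ yields $u_1(x_\varnothing)+u_2(x_\varnothing)<W^\star$, contradicting optimality.

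The main obstacle is choosing the payoffs carefully enough that the ``no common punishment works'' step is airtight: one has to exclude every mixed distribution $y$, not just pure profiles, and show that the two players cannot be simultaneously minimized below the thresholds demanded by the asymmetric solution. I would handle this by making the game zero-sum-like on the punishment profiles, so that $u_1(y)+u_2(y)$ is constant (or bounded below by a strict constant) over all $y$, thereby turning the case analysis into a single linear inequality that rules out every common $y$ at once.
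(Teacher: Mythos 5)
Your high-level strategy --- exhibit a concrete SG in which every optimal SCE-PA must use $x_{\{1,2\}}\neq x_{\{2,1\}}$ --- is exactly the paper's, but the specific two-leader construction you sketch does not work, and the gap is structural rather than cosmetic. In your skeleton the two ``punishment'' profiles $(a,0)$ and $(0,a)$ are (essentially) pure Nash equilibria, hence they already belong to $\X^{\textsc{CE}}_{\{1\}}$ and $\X^{\textsc{CE}}_{\{2\}}$ and satisfy any reasonable second-level stability constraint. So the first-level punishments can be taken to be $x_{\{1\}}=(0,a)$ and $x_{\{2\}}=(a,0)$ directly, driving the defection thresholds $u_1(x_{\{1\}})$ and $u_2(x_{\{2\}})$ to $0$ \emph{without any help from} $x_{\{1,2\}}$ or $x_{\{2,1\}}$; the constraints $u_i(x_\varnothing)=b\geq 0$ on your cooperation profile are then slack, and a set-based vector (taking a single common $y$ at the bottom level) is exactly as good. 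More generally, with two leaders and the objective $\lambda=(1,1)$ you face a tension you do not address: the profile you want to support at $x_\varnothing$ is the very one whose coordinates must sit \emph{exactly at} the binding defection thresholds for the deeper punishments to matter, which is incompatible with choosing $(b,b)$ with $b$ large. This is precisely why the paper's example uses a \emph{third} leader: she carries the entire objective value ($10$) at a profile where leaders $1$ and $2$ both receive their minimum utility $0$, so their stability constraints are maximally tight and can only be met by order-dependent punishments; her committed strategy (the choice of matrix) is also the switch that selects \emph{which} of the two defectors gets punished, something two self-enforcing CEs of a two-player game cannot easily replicate.

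Two further issues. First, the causal chain in your lower-bound step has its directions reversed: a common threat $y$ with $u_1(y)>u_1(x_{\{2,1\}})$ \emph{tightens} (not loosens) the stability constraint $u_1(x_{\{2\}})\geq u_1(x_{\{2,1\}})$, which forces $u_2(x_{\{2\}})$ strictly \emph{higher} (not lower), which shrinks the feasible set for $x_\varnothing$; and the strict monotonicity of the two nested minimizations that this argument needs is asserted rather than verified. Second, the proposal never commits to actual payoffs, whereas the entire content of this proposition is a verified counterexample. If you want to salvage a two-leader version you would have to engineer the CE polytope so that no single CE simultaneously attains $\min u_1$ and $\min u_2$, \emph{and} so that both chains of constraints bind strictly at the welfare optimum; it is far simpler to follow the paper and add a third leader.
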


\footnotetext[9]{We remark that, for normal-form games, a polynomial-time stability oracle $\Or$ can be implemented by using a variation of the LP for finding optimal CEs~\cite{shoham2008multiagent}.}

\section{Stability Oracle for Compact Games}\label{sec:games_main}

We study which classes of games admit a polynomial-time stability oracle $\Or$, focusing on those with polynomial type.~\footnotemark 
%
%
%

\noindent
In this section, we only provide our main final result; a detailed description of all the ancillary results is in Appendix~\ref{sec:games}.

Inspired by the classical approaches for finding CEs in games with polynomial type \cite{papadimitriou2008computing,leyton2011ellipsoid,jiang2015polynomial}, we solve $\Or(G,c,L,\{x_p  \}_{p\in L' \subseteq L})$ in polynomial time using the ellipsoid method.
This requires that a suitably defined separation problem (\textsf{Sep}$(z,t)$) can be computed in polynomial time.
Our main result is that \textsf{Sep}$(z,t)$ can be reduced to the \emph{weighted deviation-adjusted social welfare problem} (\textsf{w-DaSW}$(y,v,t)$) introduced by~\citeauthor{leyton2011ellipsoid}~[\citeyear{leyton2011ellipsoid}] for finding an optimal (according to some linear function of players' utilities) CE.
This establishes a strict connection between the problem solved by our stability oracle and that of computing optimal CEs. 
As a consequence, given the results of~\citeauthor{leyton2011ellipsoid}~[\citeyear{leyton2011ellipsoid}], $\Or$ can be computed in polynomial time for all the compact games where finding an optimal CE is computationally tractable.
Thus:

\begin{theorem}\label{thm:compact_games}
	The following games admit a polynomial-time stability oracle $\Or$: anonymous games, symmetric games, and bounded-treewidth graphical and polymatrix games.
\end{theorem}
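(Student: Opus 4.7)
The plan is to derive the theorem as a direct corollary of the ellipsoid-based pipeline sketched in the section, composed with known tractability results for the weighted deviation-adjusted social welfare problem on the listed game classes. First, I would cast $\Or(G,c,L,\{x_p\}_{p\in L'})$ as a linear program whose variables are the probabilities $x(s)$ for $s\in S$, whose objective is $\sum_{p\in N}\sum_{s\in S} c_p\,u_p(s)\,x(s)$, and whose constraints are: the simplex constraints; the CE incentive constraints of Eq.~\eqref{eq:incentive_ce} restricted to $p\in N\setminus L$; and the stability constraints $u_p(x)\geq u_p(x_p)$ for $p\in L'$, whose right-hand sides are scalars fully determined by the given $x_p$. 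In polynomial-type games $|S|$ can be exponential, so this LP has exponentially many variables but only polynomially many nontrivial constraints.

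Following the scheme of Papadimitriou--Roughgarden and Jiang--Leyton-Brown, I would then pass to the dual LP, which has polynomially many variables (one per incentive and stability constraint) and exponentially many constraints (one per $s\in S$). Running the ellipsoid method on the dual reduces the oracle to a separation problem \textsf{Sep}$(z,t)$: given a candidate dual vector $z$ and a threshold $t$, either certify dual feasibility or exhibit a profile $s\in S$ whose corresponding dual constraint is violated. The heart of the argument is a structural reduction \textsf{Sep}$(z,t)\to$ \textsf{w-DaSW}$(y,v,t)$: each dual constraint aggregates two kinds of per-profile quantities, namely the deviation gains $u_p(s_p,s_{-p})-u_p(s_p',s_{-p})$ coming from the incentive constraints and the bare utility terms $u_p(s)$ coming from the stability constraints (with the constants $u_p(x_p)$ absorbed into the threshold $t$). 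Both are exactly the building blocks that \textsf{w-DaSW} sums against arbitrary weights $(y,v)$, so a careful choice of these weights rewrites the separation problem as a single instance of \textsf{w-DaSW}.

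Given this reduction, the theorem follows by invoking~\citeauthor{leyton2011ellipsoid}~[\citeyear{leyton2011ellipsoid}], who established that \textsf{w-DaSW}$(y,v,t)$ admits a polynomial-time algorithm on precisely the four classes in the statement: anonymous games, symmetric games, and bounded-treewidth graphical and polymatrix games. Combined with the ellipsoid-based implementation of $\Or$, this yields a polynomial-time stability oracle on each of these classes. Since the oracle returns a vertex of the primal polytope, which is the convex combination of at most polynomially many strategy profiles generated along the ellipsoid run, $\Or$ also meets the polynomial-support requirement imposed in Section~\ref{sec:computational}.

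The main obstacle I expect is the reduction \textsf{Sep}$\to$\textsf{w-DaSW} in the second paragraph. Stability constraints look superficially different from CE incentive constraints: they compare $u_p(x)$ to an \emph{externally given} scalar $u_p(x_p)$ rather than to a within-profile deviation payoff, so one must verify that, once the dual variables are indexed appropriately, the per-profile part of every dual constraint collapses into a weighted sum of $u_p(s)$'s that \textsf{w-DaSW} can handle, while the profile-independent constants (including $u_p(x_p)$ and the constant coming from deviations to self) are correctly pushed into the separation threshold $t$. Making this bookkeeping precise without blowing up the input size of \textsf{w-DaSW} is the technical step that the appendix should carry out in full.
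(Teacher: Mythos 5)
Your proposal follows essentially the same route as the paper's proof in Appendix~\ref{sec:games}: the oracle is cast as an LP with exponentially many variables, dualized and solved by the ellipsoid method, the resulting separation problem \textsf{Sep}$(z,t)$ is reduced to \textsf{w-DaSW}$(y,v,t)$ by zeroing the deviation weights for leaders, shifting the utility weights to $\hat v_p = c_p + y_p$, and absorbing the constants $\sum_{p\in L} y_p u_p(x_p)$ into the threshold, and tractability on the four game classes is then imported from~\citeauthor{leyton2011ellipsoid}~[\citeyear{leyton2011ellipsoid}]. The bookkeeping you flag as the main obstacle is exactly what the paper's Theorem~\ref{thm:sep_reduction} carries out, and your closing remark on polynomial support matches Corollary~\ref{cor:poly_support}.
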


Finally, our results also imply that the polynomial-time stability oracle $\Or$ always outputs a polynomially-sized correlated distribution (see Corollary~\ref{cor:poly_support} in Appendix~\ref{sec:games}).

\section{Discussion}\label{sec:discussion}

This paper introduces a new way to apply the Stackelberg paradigm to any (underlying) finite game.
Differently from previous works, our approach deals with scenarios involving multiple leaders by introducing a preliminary agreement stage in which each leader can decide whether to be a leader or become a follower.
We introduce and study three natural solution concepts that differ depending on the properties that they require on the agreement stage (others, \emph{e.g.}, requiring stability and perfect efficiency, will be explored in future). 

Our equilibria generalize the optimal correlated strategies to commit to introduced by~\citeauthor{conitzer2011commitment}~[\citeyear{conitzer2011commitment}] for single-leader multi-follower Stackelberg games.
At the same time, they also provide a significant advancement over the multi-leader solution concepts introduced in the security context (see, \emph{e.g.}, \cite{gan2018stackelberg}).
First, correlated-strategy commitments are more natural than leaders' strategies satisfying some Nash-like constraints.
Second, our equilibria are funded on strong game-theoretic groundings, as they are guaranteed to exist independently of the game structure.
Last but not least our solutions apply to general games.

Finally, our computational findings exploit a general framework relying on a game-independent stability oracle.
%
%
Thus, our positive results can be extended to other game classes by simply designing polynomial-time oracles.

%

\section*{Acknowledgments}

This work has been partially supported by the Italian MIUR PRIN 2017 Project ALGADIMAR ``Algorithms, Games, and Digital Market''.

\clearpage

\bibliographystyle{named}
\bibliography{refs}

\clearpage

\appendix

\section{SCEs and non-Stackelberg Correlation}\label{sec:sce_and_cor}

We analyze how our solution concepts relate to other non-Stackelberg solutions involving correlation. 
Specifically, we focus on CEs (see Section~\ref{sec:prelim} for a formal definition) and their \emph{coarse} variant, which we define in the following.

The {coarse} CE weakens the CE by only enforcing protection against \emph{a priori} defections, \emph{i.e.}, before the recommendations are revealed to the players~\cite{moulin1978}.
Formally, $x \in \X$ is a \emph{coarse correlated equilibrium} (CCE) if, for every player $p \in N$ and strategy $s_p' \in S_p$, the following constraint holds: 
\begin{equation}\label{eq:incentive_cce}
\sum_{s \in S} x(s) \left( u_p(s) - u_p(s_p',s_{-p})  \right) \geq 0.
\end{equation}
We denote with $\X^\textsc{CCE}$ the set of CCEs of the game.

In our analysis, we compare CEs and CCEs with the correlated distributions $x_\varnothing$ resulting from our solution concepts in general SGs.
Given an SG $(G,L,F)$, we define $\X^\textsc{S} \subseteq \X$ and $\X^\textsc{PS} \subseteq \X$ as the sets of $x_\varnothing$ such that $\bx =[x_\pi] \in \bX^\textsc{S}$ and $\bx \in \bX^\textsc{PS}$, respectively.
Our goal is to investigate the relationships involving the sets $\X^\textsc{S}$ and $\X^\textsc{PS}$ with the sets of CEs and CCEs of the underlying game $G$, namely $\X^\textsc{CE}$ and $\X^\textsc{CCE}$. 
%
%
Figure~\ref{fig:diagram} depicts these relationships.

\begin{figure}[H]
	\centering
	\includegraphics[width=0.27\textwidth]{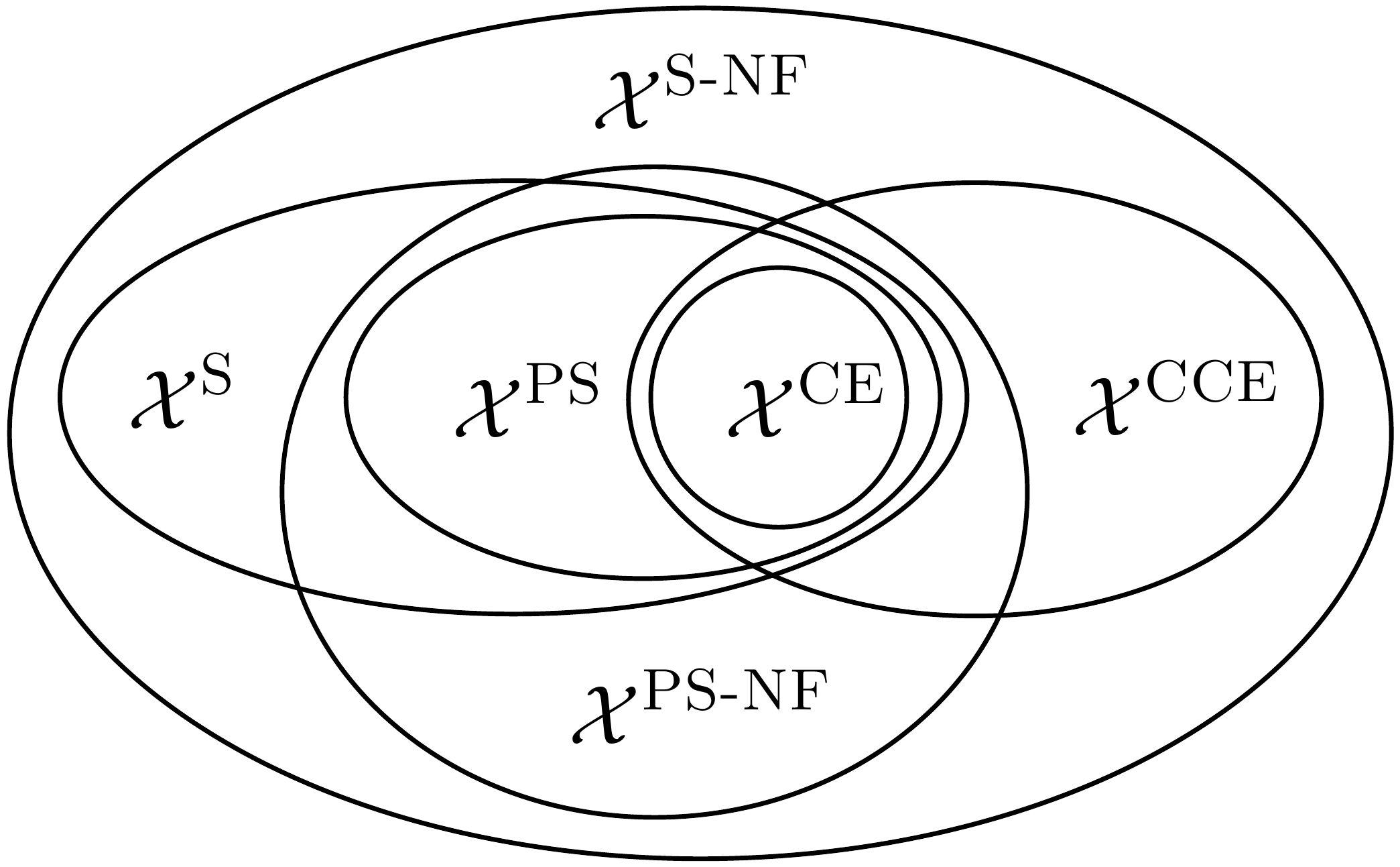}
	\caption{Relations among $\X^{\textsc{S}}$, $\X^{\textsc{PS}}$, $\X^{\textsc{CE}}$, $\X^{\textsc{CCE}}$, $\X^{\textsc{S-NF}}$, $\X^{\textsc{PS-NF}}$.}
	\label{fig:diagram}
\end{figure}

Let us remark that the relations $\X^\textsc{CE} \subseteq \X^\textsc{CCE}$, $\X^\textsc{CE} \subseteq \X^\textsc{PS}$, and $\X^\textsc{PS} \subseteq \X^\textsc{S}$ hold by definition, while it is easy to show that $\X^\textsc{CE} \subseteq \X^\textsc{PS}$ (see the proof of Theorem~\ref{thm:existence_sce}).

First, we look at the connection between (perfectly) stable distributions and CCEs.
Given the relation between SEs and SCEs (see Corollary~\ref{cor:corollary_se}) in single-leader single-follower SGs, the following result holds as a direct consequence of~\cite[Remark~13]{von2010leadership}.

\begin{restatable}{prop}{propositiontwo}
	There are SGs where $\X^\textsc{CCE} \nsubseteq  \X^\textsc{S} $.
\end{restatable}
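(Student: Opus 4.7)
The plan is to reduce the claim to the simplest Stackelberg setting, a single-leader single-follower SG $(G,\{1\},\{2\})$, and then import a counterexample from the literature.

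First, I would establish that $\max_{y \in \X^\textsc{S}} u_1(y) = u_1^\textsc{SE}$, where $u_1^\textsc{SE}$ denotes the leader's Stackelberg value. Indeed, with $L = \{1\}$, Pareto optimality on $\X^\textsc{S}$ coincides with maximization of $u_1$; any such maximizer $y^*$ extends to a vector $\bx \in \bX^\textsc{S}$ (for instance, by setting $x_{\{1\}}$ to an arbitrary CE of $G$, which exists by Theorem~\ref{thm:existence_sce}), the resulting $\bx$ is then an SCE, and Corollary~\ref{cor:corollary_se} identifies $u_1(y^*) = u_1(x_\varnothing)$ with $u_1^\textsc{SE}$. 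Consequently every $y \in \X^\textsc{S}$ satisfies $u_1(y) \leq u_1^\textsc{SE}$.

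Second, I would invoke \cite[Remark~13]{von2010leadership}, which exhibits a two-player game admitting a coarse correlated equilibrium $x \in \X^\textsc{CCE}$ whose leader payoff strictly exceeds the Stackelberg value, i.e.\ $u_1(x) > u_1^\textsc{SE}$. Combining the two steps gives $u_1(x) > \max_{y \in \X^\textsc{S}} u_1(y)$, so $x \notin \X^\textsc{S}$, yielding the desired separation $\X^\textsc{CCE} \nsubseteq \X^\textsc{S}$. I do not anticipate any serious obstacle: once the identity $u_1^\textsc{SE} = \max_{\X^\textsc{S}} u_1$ is noted, the cited example does the remaining work, and the argument is essentially an accounting comparison of three payoff values.
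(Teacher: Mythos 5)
Your proposal is correct and matches the paper's own argument: the paper likewise reduces to the single-leader single-follower case, uses Corollary~\ref{cor:corollary_se} to identify $\max_{y \in \X^\textsc{S}} u_1(y)$ with the leader's SE value, and then cites Remark~13 of \cite{von2010leadership} for a CCE whose leader payoff strictly exceeds it. No substantive difference.
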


Moreover, not all perfectly stable distributions are CCEs.

\begin{restatable}{prop}{propositionthree}
	\label{obs:PS-CCE}
	There are SGs where $\X^\textsc{PS} \nsubseteq \X^\textsc{CCE}$.
\end{restatable}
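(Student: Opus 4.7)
The plan is to exhibit a concrete single-leader single-follower SG, with $L=\{1\}$ and $F=\{2\}$, whose underlying game is a $2\times 2$ normal-form game in which one of the leader's actions (call it $s$) strictly dominates the other (call it $s'$). In such a game the Stackelberg paradigm grants the leader extra commitment power: she can commit to a correlated distribution putting positive mass on the dominated row $s'$ in order to steer the follower toward the action the leader prefers, and this commitment strictly beats the unique CE of the underlying game, which by strict dominance must place all mass on $s$. A variant of the classical $2\times 2$ Stackelberg example fits, with payoffs chosen so that the follower is rendered indifferent (or weakly prefers the leader-favored action) by the commitment.

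I would set $x_\varnothing$ to be such a commitment; it lies in $\X^\textsc{CE}_F = \X^\textsc{CE}_{\{2\}}$ since only the follower's constraints of Eq.~\eqref{eq:incentive_ce} are imposed on it, and is therefore admissible as the first component of $\bx$. For $x_{\{1\}}$ I would take the unique CE of the underlying game, which belongs to $\X^\textsc{CE} = \X^\textsc{CE}_{\{1,2\}}$, as required at the node where the lone leader has defected. The consistency check is stability at each of the two nodes of the agreement stage: at $x_{\{1\}}$ it is vacuous because no leader remains, while at $x_\varnothing$ it reduces to the single inequality $u_1(x_\varnothing) \geq u_1(x_{\{1\}})$, which holds by construction since the commitment strictly dominates the unique CE in the leader's payoff. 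Hence $\bx \in \bX^\textsc{PS}$, and therefore $x_\varnothing \in \X^\textsc{PS}$.

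Finally, I would show $x_\varnothing \notin \X^\textsc{CCE}$ by exhibiting the a priori deviation in which player 1 always plays her strictly dominant action $s$: since $x_\varnothing$ puts positive mass on the $s'$-row, substituting $s$ for $s'$ row-wise strictly increases the leader's expected payoff by strict dominance, violating the incentive constraint of Eq.~\eqref{eq:incentive_cce}. The only mild obstacle is to arrange payoffs that simultaneously realise (i) follower incentive-compatibility at the commitment, (ii) the commitment strictly beating the unique CE in the leader's payoff, and (iii) a strictly profitable CCE-deviation to $s$; these three properties are trivially achievable in essentially any textbook Stackelberg $2\times 2$ instance with a strictly dominated leader action.
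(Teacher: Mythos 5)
Your proposal is correct, but it reaches the conclusion by a genuinely different route from the paper. The paper's witness is the two-leader prisoner's-dilemma-style game of Table~\ref{table:games_for_existence}~(Left) with $F=\varnothing$: it sets $x_\varnothing$ to the cooperative profile $(s_{1,1},s_{2,1})$ and every other $x_\pi$ to the unique CE $(s_{1,2},s_{2,2})$, so each leader gets $2$ inside the agreement and only $1$ after defecting; perfect stability is thus sustained by the threat of reverting to the bad equilibrium, while $x_\varnothing$ violates Eq.~\eqref{eq:incentive_cce} because both players have strictly dominant deviations. You instead take a single leader and a single follower and let $x_\varnothing$ be the classical Stackelberg commitment placing mass on the leader's strictly dominated row; there perfect stability collapses to the single inequality $u_1(x_\varnothing)\geq u_1(x_{\{1\}})$ against a CE, which the commitment satisfies by construction, and the CCE failure is the leader's own a priori switch to her dominant action. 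Both arguments are sound, and your framing of $\X^{\textsc{PS}}$ in the one-leader case (only the follower's incentive constraints on $x_\varnothing$, a vacuous stability check at $x_{\{1\}}$) matches the paper's definitions. What the two approaches buy is different: the paper's example shows the separation can be driven purely by multi-leader threat structure (and the same game is reused for the adjacent separations involving $\X^{\textsc{S\text{-}NF}}$ and $\X^{\textsc{PS\text{-}NF}}$), whereas yours shows the non-inclusion already occurs with a single leader, i.e., it is a consequence of commitment power alone. The one gap in your write-up is that no explicit payoff matrix is given; you should pin one down, e.g., the standard instance with leader payoffs $2,4$ on the dominant row and $1,3$ on the dominated row and follower payoffs $1,0$ and $0,1$, where the half-half commitment on the right column earns the leader $3.5$, the unique CE earns her $2$, and the a priori deviation to the dominant row earns her $4$.
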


\begin{proof}
	Consider the SG in Table~\ref{table:games_for_existence}~(Left), where $L = \{1,2\}$ and $F = \varnothing$. 
	Since $s_{1,1}$ and $s_{2,1}$ are strictly dominated, there is a unique CCE $x \in \X^\textsc{CCE}$ with $x(s_{1,2},s_{2,2})=1$.
	Let $\bx = [x_\pi] \in \bX$ be such that $x_\varnothing(s_{1,1},s_{2,1})=1$ and $x_\pi(s_{1,2},s_{2,2}) = 1$ for all $\pi \neq \varnothing \in \Pi_L  $.
	Notice that each $x_\pi$ with $\pi \neq \varnothing$ satisfies the incentive constraints of Eq.~\eqref{eq:incentive_ce} for every player, and, thus, $x_\pi \in \X^\textsc{CE}_{\pi }$.
	Moreover, for each leader $p \in L$, $u_p(x_\varnothing) = 2$ and $u_p(x_\pi)=1$ for all $\pi \in \Pi_L\setminus\{\varnothing\}$.
	Thus, each $x_\pi$ is stable and $\bx \in \bX^\textsc{PS}$.
	%
\end{proof}
\begin{table}[!htp]
	\centering
	\begin{minipage}{0.18\textwidth}
		{\renewcommand{\arraystretch}{1.1}\setlength{\tabcolsep}{2pt}\begin{tabular}{cc|c|c|}	
				& \multicolumn{1}{c}{} & \multicolumn{1}{c}{$s_{2,1}$}  & \multicolumn{1}{c}{$s_{2,2}$} \\\cline{3-4}
				& $s_{1,1}$ & $2,2$ & $0,3$ \\\cline{3-4}
				& $s_{1,2}$ & $3,0$ & $1,1$ \\\cline{3-4}
		\end{tabular}}
	\end{minipage}
	\begin{minipage}{0.29\textwidth}
		{\renewcommand{\arraystretch}{1.1}\setlength{\tabcolsep}{2pt}\begin{tabular}{c|c|c|c|c|}	
				\multicolumn{1}{c}{}&\multicolumn{1}{c}{$s_{2,1}$}  & \multicolumn{1}{c}{$s_{2,2}$} & \multicolumn{1}{c}{$s_{2,3}$} & \multicolumn{1}{c}{$s_{2,4}$} \\\cline{2-5}
				$s_{1,1}$ & $0,0$ & $-2,4$ &$1,-8$&$1,-2$\\\cline{2-5}
				$s_{1,2}$ & $1,-8$ & $0,0$ &$-2,4$&$1,-2$\\\cline{2-5}
				$s_{1,3}$ & $-2,4$ & $1,-8$ &$0,0$&$1,-2$\\\cline{2-5}
		\end{tabular}}
	\end{minipage}
	\caption{\emph{Left}: Two-player normal-form SG where $\X^\textsc{PS} \nsubseteq  \X^\textsc{CCE} $. \emph{Right:} Two-player normal-form SG where $\X^\textsc{CCE} \nsubseteq \X^\textsc{PS-NF}$.}
	\label{table:games_for_existence}
\end{table}

Next, we analyze the relationships with the sets $\X^\textsc{S-NF}$ and $\X^\textsc{PS-NF}$, which are defined as $\X^\textsc{S}$ and $\X^\textsc{PS}$, but for the SG $(G,N,\varnothing)$ where each player is a leader.
Our goal is to study the impact of players' roles in SGs having the same underlying finite game.
The following result shows that enlarging the set of leaders can only introduce new stable distributions. 

\begin{restatable}{theorem}{theoremfour}
	$\X^\textsc{S} \subseteq \X^\textsc{S-NF}$ and $\X^\textsc{PS} \subseteq \X^\textsc{PS-NF}$.
\end{restatable}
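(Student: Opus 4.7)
The plan is to exhibit, for every $\bx = [x_\pi]_{\pi \in \Pi_L}$ witnessing membership in $\bX^\textsc{S}$ (resp.\ $\bX^\textsc{PS}$), an extension $\bx' = [x'_\pi]_{\pi \in \Pi_N}$ witnessing membership in $\bX^\textsc{S-NF}$ (resp.\ $\bX^\textsc{PS-NF}$) with $x'_\varnothing = x_\varnothing$. The natural construction is: for every $\pi \in \Pi_N$, let $\pi_L$ denote the ordered subsequence of $\pi$ consisting of the elements of $L$ (keeping their original order), and set $x'_\pi := x_{\pi_L}$. Clearly $x'_\varnothing = x_\varnothing$, so once the extension is verified to be (perfectly) stable in the NF sense, the two inclusions follow.

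First I would check that $\bx'$ lies in the feasibility set for the NF game, i.e.\ $x'_\pi \in \X^\textsc{CE}_\pi$ for every $\pi \in \Pi_N$. Since $\bx$ is feasible for $(G,L,F)$, we have $x_{\pi_L} \in \X^\textsc{CE}_{\pi_L \cup F}$. Because $\pi \subseteq L \cup F$ and $\pi \cap L = \pi_L$, the set $\pi$ is contained in $\pi_L \cup F$, and hence $\X^\textsc{CE}_{\pi_L \cup F} \subseteq \X^\textsc{CE}_\pi$ (the CE incentive constraints of Eq.~\eqref{eq:incentive_ce} for a larger set of players are stronger). Thus $x'_\pi \in \X^\textsc{CE}_\pi$, as required.

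Next I would verify the stability conditions. Fix $\pi \in \Pi_N$ and $p \in N \setminus \pi$; we want $u_p(x'_\pi) \geq u_p(x'_{\pi p})$. Split into two cases. If $p \in L$, then $(\pi p)_L = \pi_L p$ with $p \in L \setminus \pi_L$, so the inequality to verify reads $u_p(x_{\pi_L}) \geq u_p(x_{\pi_L p})$, which is exactly the stability of $x_{\pi_L}$ at player $p$. If $p \in F$, then $(\pi p)_L = \pi_L$, so $x'_\pi = x'_{\pi p}$ and the inequality holds with equality. Therefore perfect stability of $\bx$ (i.e.\ stability of every $x_{\pi_L}$) immediately transfers to perfect stability of $\bx'$, giving $\X^\textsc{PS} \subseteq \X^\textsc{PS-NF}$. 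For the weaker inclusion $\X^\textsc{S} \subseteq \X^\textsc{S-NF}$, it suffices to run the same argument at $\pi = \varnothing$ (where $\pi_L = \varnothing$), which uses only the stability of $x_\varnothing$.

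I do not anticipate a serious obstacle: the main conceptual point is simply that letting a follower ``defect'' in the NF agreement stage is vacuous, because followers were already obeying CE constraints in the original SG and their defection leaves the effective correlated distribution unchanged under the proposed extension. The only subtlety worth double-checking is that $x_{\pi_L} \in \X^\textsc{CE}_{\pi_L \cup F}$ really does entail $x_{\pi_L} \in \X^\textsc{CE}_\pi$ for every $\pi \in \Pi_N$ with $\pi \cap L = \pi_L$, which follows by monotonicity of the incentive-constraint system in the set of constrained players.
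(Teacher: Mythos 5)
Your proposal is correct and follows essentially the same route as the paper's proof: both define $x'_\pi := x_{\pi \cap L}$ (your $x_{\pi_L}$), verify feasibility via $\X^\textsc{CE}_{(\pi\cap L)\cup F} \subseteq \X^\textsc{CE}_{\pi}$, and observe that leaders inherit stability from $\bx$ while followers' defections are vacuous since $x'_{\pi p} = x'_{\pi}$ for $p \in F$. The only cosmetic difference is that the paper proves the perfectly stable case and notes the stable case is analogous, whereas you spell out both.
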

\begin{proof}
	We only prove the result for $\X^\textsc{PS}$, as similar arguments hold for $\X^\textsc{S}$.
	Given any SG $(G,L,F)$, for every perfectly stable $\bx =[x_\pi] \in\bX^\textsc{PS}$ of $(G,L,F)$, we show that there exists a perfectly stable $\bx' = [x'_\pi] \in \bX^\textsc{PS}$ of $(G,N,\varnothing)$, such that $x_\varnothing=x'_\varnothing$.
	Let us define $x'_\pi = x_{\pi \cap L}$, for all $\pi \in \Pi_N$.
	Clearly, it holds $x'_\pi \in \X^\textsc{CE}_\pi$, as $x'_\pi \in \X^\textsc{CE}_{(\pi \cap L)\cup F}  \subseteq \X^\textsc{CE}_{\pi}$.
	For every player $p \in L$ and $\pi \in \Pi_N$ such that $p \notin \pi$, we have $u_p(x'_\pi) = u_p(x_{\pi \cap L})$ and $u_p(x'_{\pi p}) = u_p(x_{\pi p \cap L})$.
	Thus, given that $\bx \in \bX^\textsc{PS}$, $\bx'$ satisfies the stability constraints for the players in $L$.
	Now, in order to show that $\bx' \in \bX^\textsc{PS}$, it is sufficient to prove that players in $F$ do not have an incentive to \textsc{Opt-Out} in $(G,N,\varnothing)$.
	This is the case as, for $p \in F$ and $\pi \in \Pi_N$ with $p \notin \pi$, we have $x'_{\pi p} = x'_{\pi}$.
	%
	%
	%
	%
	%
\end{proof}

Furthermore, we can also provide examples showing that:

\begin{restatable}{prop}{propositionfour}
	There are SGs where $\X^\textsc{PS-NF} \nsubseteq \X^\textsc{S}$.
\end{restatable}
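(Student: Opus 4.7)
The plan is to exhibit a concrete two-player SG together with a correlated distribution lying in $\X^\textsc{PS-NF}$ but not in $\X^\textsc{S}$. I will reuse the normal-form game from Proposition~\ref{obs:PS-CCE} (left of Table~\ref{table:games_for_existence}), but now viewed as the SG $(G,\{1\},\{2\})$, i.e., with player $1$ as leader and player $2$ as follower; the candidate distribution is the pure profile $x = \delta_{(s_{1,1},s_{2,1})}$, yielding utilities $(2,2)$.

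First, I will show $x \notin \X^\textsc{S}$. Since $F = \{2\}$, the definition of $\bX$ forces $x_\varnothing \in \X^\textsc{CE}_{\{2\}}$ for every $\bx = [x_\pi] \in \bX^\textsc{S}$ with $x_\varnothing = x$. However, given recommendation $s_{2,1}$, the follower expects $s_{1,1}$ with probability one and could raise her payoff from $2$ to $3$ by deviating to $s_{2,2}$; hence Eq.~\eqref{eq:incentive_ce} is violated for player $2$, and no such $\bx$ can exist.

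Next, I will argue $x \in \X^\textsc{PS-NF}$ by invoking the perfectly stable vector already constructed in the proof of Proposition~\ref{obs:PS-CCE}. There, for the all-leaders SG $(G,\{1,2\},\varnothing)$, the vector $\bx = [x_\pi]$ with $x_\varnothing = \delta_{(s_{1,1},s_{2,1})}$ and $x_\pi = \delta_{(s_{1,2},s_{2,2})}$ for every nonempty $\pi \in \Pi_{\{1,2\}}$ is shown to lie in $\bX^\textsc{PS}$. This is precisely the condition placing $x$ into $\X^\textsc{PS-NF}$ of the original SG $(G,\{1\},\{2\})$.

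Conceptually, the asymmetry with the reverse inclusion $\X^\textsc{PS} \subseteq \X^\textsc{PS-NF}$ established earlier in the appendix is expected: promoting a follower to leader drops her CE constraint on $x_\varnothing$, while only weakly affecting the punishment side in post-defection subgames, so profiles that violate the follower's incentives in $(G,\{1\},\{2\})$ can become perfectly stable once she is redeclared a leader. No genuine obstacle is anticipated, since both subclaims reduce to checking Eq.~\eqref{eq:incentive_ce} and the stability inequalities on an already-explicit $\bx$.
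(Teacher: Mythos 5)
Your proposal is correct and follows essentially the same route as the paper: the same game from Table~\ref{table:games_for_existence}~(Left) with $L=\{1\}$, $F=\{2\}$, the same candidate distribution placing all mass on $(s_{1,1},s_{2,1})$, membership in $\X^\textsc{PS-NF}$ via the construction from Proposition~\ref{obs:PS-CCE}, and exclusion from $\X^\textsc{S}$ because the follower's incentive constraint rules out any mass on the dominated strategy $s_{2,1}$.
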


\begin{proof}
	Consider the SG in Table~\ref{table:games_for_existence}~(Left), where $L = \{1\}$ and $F =  \{2\}$.
	There is an $\bx=[x_\pi] \in \bX^\textsc{PS-NF}$ of $(G,N,\varnothing)$ in which $x_\varnothing(s_{1,1},s_{2,1})=1$ (see the proof of Proposition~\ref{obs:PS-CCE}). 
	Let $\bx'=[x'_\pi] \in \bX^\textsc{S}$ of $(G,L,F)$. Since $x'_\varnothing \in \mathcal{X}^\textsc{CE}_{\{2\}}$ and $s_{2,1}$ is strictly dominated, it must be $x'_\varnothing(s_{1,1},s_{2,1})=0$.
\end{proof}

\begin{restatable}{prop}{propositionfive}
	There are SGs where $\X^\textsc{CCE} \nsubseteq  \X^\textsc{PS-NF} $.
\end{restatable}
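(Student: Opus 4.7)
The plan is to exhibit a specific SG together with an explicit CCE of its underlying game and to show that this CCE cannot arise as the root distribution $x_\varnothing$ of any perfectly stable vector $\bx \in \bX^\textsc{PS-NF}$. Following the hint from Table~\ref{table:games_for_existence}~(Right), I would take the SG $(G,\{1,2\},\varnothing)$ where $G$ is the two-player game in that table and both players are leaders. A key observation is that $G$ is invariant under the cyclic permutation that rotates the first three strategies of each player; this symmetry is the main tool for making the otherwise intractable LPs on the relevant polytopes tractable, since averaging over the group orbit preserves both feasibility (for the CE and $\X^\textsc{CE}_{\{p\}}$ constraints) and each $u_p$ (as each payoff function is invariant under the cyclic group).

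I would first identify a candidate CCE $x^* \in \X^\textsc{CCE}$ of $G$ and verify the CCE constraints by direct checking of each player's pure-strategy deviations against the marginal of $x^*$. Then, assuming toward a contradiction that $x^* = x_\varnothing$ for some $\bx = [x_\pi]\in \bX^\textsc{PS-NF}$, I would propagate the perfect-stability constraints from the leaves of the agreement tree upwards: (i) the leaves $x_{\{1,2\}}$ and $x_{\{2,1\}}$ are correlated equilibria of $G$, whose achievable utility profiles are pinned down by the symmetrization trick applied to $\X^\textsc{CE}$; (ii) each intermediate node $x_{\{p\}}\in \X^\textsc{CE}_{\{p\}}$ must satisfy the sub-stability inequality $u_{3-p}(x_{\{p\}}) \geq u_{3-p}(x_{\{p,3-p\}})$, which, together with the symmetrized LP maximum of $u_{3-p}$ over $\X^\textsc{CE}_{\{p\}}$, forces $u_p(x_{\{p\}})$ to a specific value $v_p^*$; (iii) the root stability of $\bx$ requires $u_p(x^*) \geq v_p^*$ for each $p\in\{1,2\}$. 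The contradiction is then that $x^*$ was chosen to violate this inequality for at least one player.

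The main obstacle is isolating a CCE $x^*$ whose utility profile lies strictly below $(v_1^*, v_2^*)$: the construction must exploit the specific payoff structure of $G$ to create a gap between the floor imposed on $x^*$ by the CCE pure-deviation constraints and the floor imposed on $x_\varnothing$ by the sub-stability LP. The cyclic symmetry of $G$ reduces both LPs to symmetric programs in a handful of variables, and LP duality on $\max u_{3-p}$ over $\X^\textsc{CE}_{\{p\}}$ provides a clean dual certificate for $v_p^*$; the remaining delicate step is to verify, by direct payoff arithmetic on Table~\ref{table:games_for_existence}~(Right), that the strict inequality $u_p(x^*) < v_p^*$ is feasible for the chosen CCE, which is where the quantitative choice of payoffs in the table becomes essential.
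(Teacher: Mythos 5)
Your outline reproduces the paper's high-level strategy (same game, backward induction on the agreement tree, a utility floor $v_p^*$ at each node $\{p\}$, contradiction at the root), but it contains no proof: you never name the CCE $x^*$, never compute the floors $v_p^*$, and explicitly defer the one step on which the whole proposition rests (``verify \dots that the strict inequality $u_p(x^*) < v_p^*$ is feasible''). That is not a minor bookkeeping omission, because when the arithmetic is actually carried out on the game of Table~\ref{table:games_for_existence}~(Right) the inequality you need is \emph{not} feasible. Let $\bar{x}$ be the uniform product distribution putting mass $1/9$ on each cell of $\{s_{1,1},s_{1,2},s_{1,3}\}\times\{s_{2,1},s_{2,2},s_{2,3}\}$. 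Each row of player 1's payoffs over the first three columns sums to $-1$, and each of the first three columns of player 2's payoffs sums to $-4$, while $s_{2,4}$ yields $-2<-4/3$; hence $\bar{x}$ is a Nash, and a fortiori correlated, equilibrium with $u_1(\bar{x})=-1/3$ and $u_2(\bar{x})=-4/3$. Setting $x_\pi=\bar{x}$ for every $\pi\neq\varnothing$ makes all non-root stability constraints hold with equality, so your floors satisfy $v_1^*\leq -1/3$ and $v_2^*\leq -4/3$. On the other hand, the same column and row sums show that \emph{every} CCE of this game gives player 1 at least her maximin value $-1/3$ and player 2 at least $-4/3$. Consequently no CCE $x^*$ of this game can satisfy $u_p(x^*)<v_p^*$ for either player; in particular the natural candidate, the uniform diagonal distribution with payoffs $(0,0)$, is the root of a perfectly stable vector and lies in $\X^\textsc{PS-NF}$.

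The mechanism you would need to make the plan work is the one the paper's own proof pivots on: showing that any $x_{\{1\}}\in\X^\textsc{CE}_{\{1\}}$ that gives player 2 at least her opt-out value must concentrate on column $s_{2,4}$ and therefore reward player 1 with utility $1>0=u_1(x_\varnothing)$. Your sketch never articulates this forcing step, and $\bar{x}$ shows that the quantitative claim behind it ($u_2(x_{\{1\}})<-2$ whenever $x_{\{1\}}$ avoids $s_{2,4}$) does not hold once all of player 1's incentive constraints are imposed. The symmetrization you propose as the main technical tool does not rescue the argument; it only makes the optimum of the relevant LP (attained at $\bar{x}$, value $-4/3$) easier to find. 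To complete a proof of the proposition you need either a different underlying game or a genuinely different separating mechanism between the CCE floor (each player's maximin value) and the stability floor induced by the worst correlated equilibrium punishments.
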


\begin{proof}
	Consider the SG in Table~\ref{table:games_for_existence}~(Right), where $L = N =  \{1,2\}$.
	There is a CCE $x \in \X^\textsc{CCE}$ with $x(s_{1,1},s_{2,1})=x(s_{1,2},s_{2,2})=x(s_{1,3},s_{2,3})=\frac{1}{3}$.
	We show that there is no $\bx = [x_\pi] \in \bX^\textsc{PS}$ with $x_\varnothing=x$.
	By contradiction, assume there exists such $\bx$.
	Given that $u_1(x_\varnothing)=0$, it should be the case that $u_1(x_{\{1\}}) \leq 0$, by stability of $x_\varnothing$.
	%
	%
	Take the incentive constraints of player 1 (Eq.~\eqref{eq:incentive_ce}).
	%
	%
	Since there must be no incentive to deviate from $s_{1,1}$ to $s_{1,2}$, it holds 
	%
	$$
	x_{\{1\}}(s_{1,1},s_{2,3}) \geq \frac{1}{3}x_{\{1\}}(s_{1,1},s_{2,1}) +\frac{2}{3} x_{\{1\}}(s_{1,1},s_{2,2}).
	$$ 
	Similar conditions also hold for the deviation from $s_{1,2}$ to $s_{1,3}$ and that from $s_{1,3}$ to $s_{1,1}$.
	Thus, we can write:
	\begin{align*}
	x_{\{1\}}(s_{1,2},s_{2,1}) \geq \frac{1}{3}x_{\{1\}}(s_{1,2},s_{2,2}) +\frac{2}{3} x_{\{1\}}(s_{1,2},s_{2,3}), \\
	x_{\{1\}}(s_{1,3},s_{2,2}) \geq \frac{1}{3}x_{\{1\}}(s_{1,3},s_{2,3}) +\frac{2}{3} x_{\{1\}}(s_{1,3},s_{2,1}).
	\end{align*}
	As a result, we can conclude that, if $x_{\{1\}}$ only recommends player 2 to play $s_{2,1}$, $s_{2,2}$, and $s_{2,3}$, then $u_2(x_{\{1\}}) < -2$.
	However, if player 2 decides to \textsc{Opt-Out}, then she would get at least $-2$, as $x_{\{1,2\}} \in \X^\textsc{CE}$ and player 2 is guaranteed to get $-2$ by playing $s_{2,4}$.
	Thus, being $x_{\{1\}}$ stable, it must be the case that player 2 is always recommended to play $s_{2,4}$ in $x_{\{1\}}$.
	Thus, $u_1(x_{\{1\}})=1$, which is a contradiction.
	%
	%
	%
\end{proof}

Finally, we prove that the stable distributions for the SG without followers encompass those defining CCEs.

\begin{restatable}{theorem}{theoremfive}
	$\X^\textsc{CCE} \subseteq \X^\textsc{S-NF}$.
\end{restatable}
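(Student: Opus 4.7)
The plan is to show that any CCE $x \in \X^\textsc{CCE}$ of $G$ arises as $x_\varnothing$ of some stable vector $\bx = [x_\pi] \in \bX^\textsc{S}$ for the SG $(G,N,\varnothing)$ in which every player is a leader. Since $F=\varnothing$, stability of $\bx$ reduces to the single condition $u_p(x_\varnothing) \geq u_p(x_{\{p\}})$ for every $p \in N$, while each component $x_\pi$ must lie in $\X^\textsc{CE}_\pi$. Hence it suffices to exhibit, for each player $p$, a correlated distribution $x_{\{p\}} \in \X^\textsc{CE}_{\{p\}}$ whose $p$-utility is no larger than $u_p(x)$; the remaining components of $\bx$ can be filled with any fixed CE of $G$ (which exists) without affecting stability.

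The key construction is as follows. Set $x_\varnothing := x$. For every $p \in N$, let
\[
s^*_p \in \argmax_{s'_p \in S_p} \sum_{s \in S} x(s)\, u_p(s'_p,s_{-p})
\]
be a best pure deviation against the marginal distribution of opponents in $x$, and define $x_{\{p\}}$ to be the distribution obtained by replacing $p$'s action with $s^*_p$ while preserving the marginal over $S_{-p}$:
\[
x_{\{p\}}(s^*_p,s_{-p}) := \sum_{s_p \in S_p} x(s_p,s_{-p}), \qquad x_{\{p\}}(s'_p,s_{-p}) := 0 \text{ for } s'_p \neq s^*_p.
\]
Pick an arbitrary CE $y \in \X^\textsc{CE}$ and set $x_\pi := y$ for every remaining $\pi \in \Pi_N$. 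Since $\X^\textsc{CE} \subseteq \X^\textsc{CE}_\pi$ for all $\pi$, each such $x_\pi$ automatically satisfies the required incentive constraints.

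It then remains to verify two facts. First, $x_{\{p\}} \in \X^\textsc{CE}_{\{p\}}$: the only player whose incentive constraints matter is $p$, who is always recommended $s^*_p$; her incentive inequality against any alternative $s'_p$ reduces precisely to
\[
\sum_{s \in S} x(s)\, u_p(s^*_p,s_{-p}) \geq \sum_{s \in S} x(s)\, u_p(s'_p,s_{-p}),
\]
which holds by the definition of $s^*_p$. Second, stability of $x_\varnothing$: by construction
\[
u_p(x_{\{p\}}) = \sum_{s \in S} x(s)\, u_p(s^*_p,s_{-p}),
\]
and this quantity is bounded above by $u_p(x)=u_p(x_\varnothing)$ thanks to the CCE constraint of Eq.~\eqref{eq:incentive_cce} applied to $x$ with deviation $s^*_p$. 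Hence $\bx \in \bX^\textsc{S}$, so $x \in \X^\textsc{S-NF}$, concluding the argument.

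No step is genuinely delicate: the only real point to watch is that we chose $x_{\{p\}}$ so that (i) it records a pure deviation by $p$, making $p$'s own incentive constraints trivial to check via the maximality of $s^*_p$, and (ii) the opponents' marginal is exactly that of $x$, so the CCE inequality for $x$ transfers verbatim into the stability inequality. Everything else is a routine book-keeping exercise of filling in the remaining entries of $\bx$ with an off-the-shelf CE.
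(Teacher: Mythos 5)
Your proposal is correct and follows essentially the same route as the paper: you collapse player $p$'s recommendation onto a best pure response $s^*_p$ against the opponents' marginal of $x$, which makes $x_{\{p\}}\in\X^\textsc{CE}_{\{p\}}$ by maximality of $s^*_p$ and gives $u_p(x_{\{p\}})\leq u_p(x)$ directly from the CCE inequality. The only (harmless) addition is that you explicitly fill the remaining components $x_\pi$ with an off-the-shelf CE, a detail the paper leaves implicit.
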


\begin{proof}
	Let $x \in \X^\textsc{CCE}$ be a CCE of a given finite game $G$.
	We prove that the SG $(G,N,\varnothing)$ admits a stable distribution $\bx = [x_\pi] \in \bX^\textsc{S}$ with $x_\varnothing= x$.
	In order to do so, for every leader $p \in N$, we let $x_{\{p\}}$ be such that $u_p(x_{\{p\}}) \leq u_p(x)$, as shown in the following.
	Let us fix a player $p \in N$ and let $\hat s_p \in S_p$ be such that, for every $s_p' \in S_p$:
	\begin{equation}\label{eq:proof_CCE_SNF_eq1}
	\sum_{s \in S} x(s) \left( u_p(\hat s_p,s_{-p}) - u_p(s_p',s_{-p}) \right) \geq 0,
	\end{equation}
	\emph{i.e.}, $\hat s_p$ is the best player $p$'s strategy against the correlated distribution $x$.
	Given that $x \in \X^\textsc{CCE}$:
	\begin{equation}\label{eq:proof_CCE_SNF_eq2}
	\sum_{s \in S} x(s) \left( u_p(s) - u_p(\hat s_p,s_{-p})  \right) \geq 0.
	\end{equation}
	We define $x_{\{p\}}$ as follows:
	\begin{itemize}
		\item $x_{\{p\}}(\hat s_p,s_{-p})=\sum_{s_p \in S_p} x(s_p,s_{-p}) \,\, \forall s_{-p} \in S_{-p}$;
		\item $x_{\{p\}}(s_p,s_{-p})=0 \,\, \forall s_p \neq \hat s_p \in S_p, s_{-p}\in S_{-p}$. 
	\end{itemize}
	Given how $x_{\{p\}}$ is defined and Eq.~\eqref{eq:proof_CCE_SNF_eq1}, we have that the incentive constraints of player $p$ (Eq.~\eqref{eq:incentive_ce}) are satisfied, and, thus, $x_{\{p\}} \in \X^\textsc{CE}_{\{p\}}$.
	Moreover, Eq.~\eqref{eq:proof_CCE_SNF_eq2} implies that $u_p(x_{\{p\}}) \leq u_p(x)$, which concludes the proof.
	%
	%
	%
	%
	%
	%
	%
\end{proof}

Observe that, when one looks for equilibria maximizing a linear function of leaders' utilities (\emph{e.g.}, the leaders' social welfare), larger sets result in better solutions.~\footnote{Let us remark that, since $\bX^\textsc{S}$ and $\bX^\textsc{PS}$ are polytopes (see Lemma~\ref{lem:polytope} in Appendix~\ref{sec:props_existence_app}), maximizing a linear function of leaders' utilities over the sets $\bX^\textsc{S}$ and $\bX^\textsc{PS}$ also provides Pareto optimality, and, thus, efficiency over the corresponding set.}
Moreover, we can provide examples where the difference in terms of leaders' social welfare between two solution concepts can be arbitrarily large.
For instance, the following holds.~\footnote{Similar results hold for the other pairs of solution concepts.}

\begin{restatable}{prop}{propositionSW}
	There are SGs $(G,L,F)$ with leaders' social welfare in SCE-PAs arbitrarily larger than in any CE of $G$.
\end{restatable}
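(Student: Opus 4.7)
The plan is to exhibit a one-parameter family of two-leader SGs whose unique underlying CE has social welfare equal to a fixed constant, while admitting an SCE-PA whose leaders' social welfare grows without bound in the parameter. Concretely, I would consider a normal-form underlying game $G$ on strategies $\{C,D\}$ for each of two players, parametrized by $M>0$, with payoffs $(C,C)\mapsto(M,M)$, $(C,D)\mapsto(-1,M+1)$, $(D,C)\mapsto(M+1,-1)$, and $(D,D)\mapsto(0,0)$, and set $L=\{1,2\}$, $F=\varnothing$.

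The first step is to characterize the CEs of $G$. Writing the incentive constraints that forbid either player from deviating from $C$ to $D$, one obtains $x(C,C)+x(C,D)\leq 0$ from player $1$ and $x(C,C)+x(D,C)\leq 0$ from player $2$, which together force $x(D,D)=1$. The remaining two incentive constraints (deviating from $D$ to $C$) are trivially satisfied. Hence $G$ admits the unique CE concentrated on $(D,D)$, yielding leaders' social welfare equal to $0$.

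The second step is to construct an SCE-PA achieving social welfare $2M$. I would define $\bx=[x_\pi]$ by $x_\varnothing(C,C)=1$, $x_{\{p\}}(D,D)=1$ for each $p\in L$, and $x_\pi(D,D)=1$ for $\pi\in\{(1,2),(2,1)\}$. Membership $x_{\{p\}}\in\X^\textsc{CE}_{\{p\}}$ is immediate since, when the other leader plays $D$, the defector weakly prefers $D$ over $C$. Perfect stability then reduces to two easy inequalities: at $\pi=\varnothing$, each leader $p$ satisfies $u_p(x_\varnothing)=M\geq 0=u_p(x_{\{p\}})$; at $\pi=\{p\}$, the remaining leader $q$ satisfies $u_q(x_{\{p\}})=0\geq 0=u_q(x_{\{p,q\}})$. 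Efficiency of $x_\varnothing$ on $\bX^\textsc{PS}$ follows from the observation that the linear functional $u_1+u_2$ is uniquely maximized over $\X$ at $(C,C)$ with value $2M$, so $x_\varnothing$ is Pareto optimal and a fortiori Pareto optimal on the smaller set $\{x'_\varnothing\mid \bx'\in\bX^\textsc{PS}\}\subseteq\X$. Hence $\bx\in\bX^\textsc{SCE-PA}$ with leaders' social welfare $2M$, while every CE of $G$ gives social welfare $0$; letting $M\to\infty$ makes the gap arbitrarily large.

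All verifications are elementary: the only nontrivial computation is the CE characterization above, which is a direct inspection of four linear constraints, and the sole conceptual point is that the commitment power in the agreement stage replaces $G$'s own incentive constraints at $x_\varnothing$ by the much weaker requirement that no leader strictly prefers to become a follower. No substantive obstacle is anticipated in turning this sketch into a formal proof.
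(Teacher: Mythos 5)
Your proposal is correct and follows essentially the same route as the paper: both exhibit a prisoner's-dilemma-style two-leader game (the paper uses payoffs $k,k$ / $0,k{+}1$ / $k{+}1,0$ / $1,1$) whose unique CE is the strictly dominant defection profile, and both build an SCE-PA by placing $x_\varnothing$ on the welfare-maximizing cooperative profile while all off-path distributions play the defection profile, so that stability holds trivially and efficiency follows from $x_\varnothing$ maximizing $u_1+u_2$ over all of $\X$. The only differences are cosmetic payoff normalizations.
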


\begin{proof}
	Consider the SG in Table~\ref{table:MLbetter}, where $L = \{1,2\}$ and $F=\varnothing$. 
	Since strategies $s_{1,1}$ and $s_{2,1}$ are strictly dominated, the only CE is $x \in \X^\textsc{CE}$ with $x(s_{1,2},s_{2,2})=1$.
	Let $\bx = [x_\pi] \in \bX$ be such that $x_\varnothing(s_{1,1},s_{2,1})=1$ and $x_\pi(s_{1,2},s_{2,2}) = 1$ for all $\pi \neq \varnothing \in \Pi_L $.
	It is easy to check that $\bx \in \bX^\textsc{SCE-PA}$.
	Moreover, the social welfare of the CE is $2$, while the social welfare of the SCE-PA is $2k$. 
\end{proof}	

\begin{table}[!htp]
	\centering
	{\renewcommand{\arraystretch}{1.1}\setlength{\tabcolsep}{2pt}\begin{tabular}{cc|c|c|}	
			& \multicolumn{1}{c}{} & \multicolumn{1}{c}{$s_{2,1}$}  & \multicolumn{1}{c}{$s_{2,2}$} \\\cline{3-4}
			& $s_{1,1}$ & $k,k$ & $0,k+1$ \\\cline{3-4}
			& $s_{1,2}$ & $k+1,0$ & $1,1$ \\\cline{3-4}
	\end{tabular}}
	\caption{Two-player normal-form SG (with $k > 0$) where the leaders' social welfare of an SCE-PA is arbitrary larger than in any CE.}
	\label{table:MLbetter}
\end{table}

\section{Omitted Proofs for Section~\ref{sec:props_existence}}\label{sec:props_existence_app}



\begin{lemma}\label{lem:polytope}
	The sets $\bX^\textsc{S}$ and $\bX^\textsc{PS}$ are polytopes.
\end{lemma}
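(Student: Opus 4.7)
The plan is to observe that both sets are cut out of the product polytope $\bX$ by finitely many linear inequalities, and therefore are themselves polytopes.

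First, I would verify that $\bX$ itself is a polytope. For each subset $P \subseteq N$, the set $\X^\textsc{CE}_P$ is defined by the simplex constraints $\sum_{s \in S} x(s) = 1$, $x(s) \geq 0$ for all $s \in S$, together with the incentive constraints in Eq.~\eqref{eq:incentive_ce} restricted to players in $P$. All of these are linear in the entries of $x$, so $\X^\textsc{CE}_P$ is a polytope. Hence $\bX = \bigtimes_{\pi \in \Pi_L} \X^\textsc{CE}_{\pi \cup F}$ is a product of finitely many polytopes, and thus itself a polytope in the space whose coordinates are the entries $x_\pi(s)$ ranging over $\pi \in \Pi_L$ and $s \in S$.

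Next, I would check that the stability inequalities are linear in these coordinates. For any player $p$, the expected utility $u_p(x) = \sum_{s \in S} u_p(s)\, x(s)$ is a linear function of $x$. Therefore, for each $\pi \in \Pi_L$ and each $p \in L \setminus \pi$, the stability condition
\[
u_p(x_\pi) - u_p(x_{\pi p}) = \sum_{s \in S} u_p(s)\, x_\pi(s) \;-\; \sum_{s \in S} u_p(s)\, x_{\pi p}(s) \;\geq\; 0
\]
is a linear inequality in the coordinates of $\bx$ (involving only the blocks $x_\pi$ and $x_{\pi p}$).

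Finally, I would conclude by characterizing the two sets as intersections of $\bX$ with finite collections of such half-spaces. By Definition~\ref{def:stability}, $\bX^\textsc{S}$ is the subset of $\bx = [x_\pi] \in \bX$ satisfying $u_p(x_\varnothing) \geq u_p(x_{\{p\}})$ for every $p \in L$; this adds $|L|$ linear constraints to $\bX$, so $\bX^\textsc{S}$ is a polytope. Similarly, $\bX^\textsc{PS}$ is the subset of $\bx \in \bX$ satisfying $u_p(x_\pi) \geq u_p(x_{\pi p})$ for every $\pi \in \Pi_L$ and every $p \in L \setminus \pi$; this adds finitely many (at most $|L|\cdot |\Pi_L|$) linear constraints, so $\bX^\textsc{PS}$ is also a polytope. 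There is no real obstacle here: the argument is purely bookkeeping once one notes the linearity of $u_p(\cdot)$ and of the CE incentive constraints.
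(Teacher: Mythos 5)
Your argument is correct and follows the same route as the paper's proof: show that $\bX$ is a polytope because each $\X^\textsc{CE}_{\pi \cup F}$ is cut out by the linear incentive constraints of Eq.~\eqref{eq:incentive_ce}, and then observe that the stability conditions $u_p(x_\pi) \geq u_p(x_{\pi p})$ are finitely many additional linear inequalities in the coordinates of $\bx$. Your write-up is in fact slightly more explicit than the paper's (spelling out the product structure and counting the constraints), but the content is identical.
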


\begin{proof}
	$\bX \subseteq \mathbb{R}^{|\Pi_L| \cdot |S|}$ is the set of vectors $\bx = [x_\pi]$ such that $x_\pi \in \X^\textsc{CE}_{\pi \cup F}$ for all $\pi \in \Pi_L$. 
	Each $\X^\textsc{CE}_{\pi \cup F}$ is defined by the linear constraints of Eq.~\eqref{eq:incentive_ce}, thus $\bX$ is a polytope.
	Moreover, if $\bx \in \bX^\textsc{PS} \subseteq \bX$, $x_\pi$ is stable for all $\pi \in \Pi_L$, \emph{i.e.}, $u_p(x_\pi) \geq u_p(x_{\pi p})$ for all $p \in L \setminus \pi$.
	Thus, being these constraints linear, $\bX^\textsc{PS}$ is a polytope.
	A similar argument holds for $\bX^\textsc{S}$.
	%
	%
	%
	%
	%
\end{proof}

\theoremtwo*

\begin{proof}
	Given an SG $(G,L,F)$, let $x \in \X^\textsc{CE}$ and $\bx =[x_\pi] \in \bX$ be such that $x_\pi = x$ for all $\pi \in \Pi_L$.
	We prove that $\bx \in \bX^\textsc{PS}$.
	First, for each $\pi \in \Pi_L$, $x_\pi \in \X^\textsc{CE}_{\pi \cup F}$, since $ \X^\textsc{CE} \subseteq \X^\textsc{CE}_{\pi \cup F}$.
	Moreover, each $x_\pi$ is stable, since $u_p(x_\pi) = u_p(x_{\pi p})$ for all $p \in L \setminus \pi$.
	This shows that $\bX^\textsc{PS} \neq \varnothing$.
	Finally, being $\bX^\textsc{PS}$ a polytope by Lemma~\ref{lem:polytope}, there exists $\bx = [x_\pi] \in \bX^\textsc{PS}$ such that $x_\varnothing \in \mathcal{P}_L(\bX^\textsc{PS})$.
	%
	Thus, $\bX^\textsc{SCE-PA} \neq \varnothing$.
	A similar reasoning holds for the sets $\bX^\textsc{S}$ and $\bX^\textsc{SCE}$.
	%
%
\end{proof}

\propositionone*
\begin{proof}
	Consider the SG in Table~\ref{table:no_sce_pape},
	%
	where $L=\{1,2,3\}$ and $F=\varnothing$.
	Suppose, by contradiction, that there exists $\bx = [x_\pi] \in \bX^\textsc{SCE-PAPE}$. 
	%
	%
	First, for every $x_\pi$ with player 3 in $\pi$, $u_3(x_\pi) = 1$ (otherwise $x_\pi \notin \X^\textsc{CE}_{\pi \cup F}$, as player 3 always gets $1$ by deviating to $s_{3,3}$). 
	Let us consider the sequences of \textsc{Opt-Out} defined by the ordered subsets ${\{1,2\}}$ and ${\{2,1\}}$.
	Given that the definition of stability requires $u_3(x_{\{1,2\}}) \geq u_3(x_{\{1,2,3\}}) = 1$ and $u_3(x_{\{2,1\}}) \geq u_3(x_{\{2,1,3\}}) = 1$, we have that $x_{\{1,2\}}$ and $x_{\{2,1\}}$ must place strictly positive probability only on strategy profiles $(s_{1,2},s_{2,2},s_{3,1})$, $(s_{1,1},s_{2,2},s_{3,2})$, and those recommending $s_{3,3}$ to player 3.
	Moreover, player 1 cannot be told to play $s_{1,2}$, as she would have an incentive to deviate to $s_{1,1}$.
	The same holds for player 2 and strategy $s_{2,2}$.
	As a result, $x_{\{1,2\}}$ and $x_{\{2,1\}}$ must always recommend $s_{3,3}$ to player 3.
	%
	%
	%
	Now, let us take the sequence of \textsc{Opt-Out} defined by $\{1\}$.
	By stability of $x_{\{1\}}$, it must hold $u_3(x_{\{1\}}) \geq u_3(x_{\{1,3\}}) = 1$ and $u_2(x_{\{1\}}) \geq u_2(x_{\{1,2\}}) = 0$.
	Hence, given $x_{\{1\}} \in \X^\textsc{CE}_{\{1\}}$, we can conclude that, in order to satisfy $x_{\{1\}} \in \mathcal{P}_{L \setminus \{1\}}(\bX^\textsc{PS})$, $x_{\{1\}}$ must always recommend the strategy profile $(s_{1,1},s_{2,2},s_{3,2})$, where player 1 gets a utility of $2$.
	Similarly, for the sequence defined by $\{2\}$, $x_{\{2\}}$ must always recommend $(s_{1,2},s_{2,2},s_{3,1})$ and, thus, player 2 receives a utility of $2$.
	Thus, for stability, $x_\varnothing$ must satisfy $u_1(x_\varnothing), u_2(x_\varnothing) \geq 2$, which is clearly impossible.
	%
%
%
%
\end{proof}

\section{Omitted Proofs for Section~\ref{sec:prop_relations}}\label{sec:prop_relations_app}


\theoremthree*

\begin{proof}
	Since the SG has only one leader (player 1), stability and perfect stability are equivalent, and, thus, $\bX^\textsc{S} = \bX^\textsc{PS}$.
	As a result, $\bX^\textsc{SCE}=\bX^\textsc{SCE-PA}$.
	Moreover, for the same reasons, also efficiency and perfect efficiency are equivalent, and $\bX^\textsc{SCE-PA}=\bX^\textsc{SCE-PAPE}$.
	Note that requiring Pareto optimality is the same as maximizing the leader's utility function $u_1$.
	Let $\bx = [x_\pi] \in \bX^\textsc{SCE}$ and assume, by contradiction, that $x_\varnothing$ is not an optimal correlated strategy to commit to.
	This would imply that there exists another $\hat x \in \X^\textsc{CE}_{N \setminus \{1\}}$ such that $u_1(\hat x) \geq u_1(x_\varnothing)$.
	However, replacing $x_\varnothing$ with $\hat x$ in $\bx$ would give us another $\hat \bx \in \bX^\textsc{S}$ (stability constraints are trivially satisfied).
	This would contradict the efficiency of $\bx$.
\end{proof}

\section{Omitted Proofs for Section~\ref{sec:computational}}\label{sec:computational_app}

\theoremsix*
\begin{proof}
	We build an $\bx = [x_\pi] \in \bX^\textsc{SCE}$ that maximizes $f_\lambda$ by invoking a stability oracle $\Or$ multiple times.
	For every $p \in L$, we define $x_{\{p\}}=\Or(G,{c}_p,L \setminus \{p\},\varnothing)$.
	Moreover, we let $x_\varnothing=\Or(G,{c}_\lambda,L, \{x_{\{p\}}\}_{p\in L})$ and $x_\pi=\Or(G,{c}_\lambda,\varnothing, \varnothing)$ for every $\pi\in \Pi_L$ with $|\pi| \geq 2$.  
	%
	Clearly, we need $|L|+2$ calls to $\Or$.
	First, $x_\pi \in \X^\textsc{CE}_{\pi \cup F}$ for every $\pi \in \Pi_L$, by definition of $\Or$.
	%
	%
	For the same reason, we have $u_p(x_\varnothing) \geq u_p(x_{\{p\}})$ for all $p \in L$.
	Thus, we can conclude that $\bx \in \bX^\textsc{S}$.
	Let $f_\lambda$ be the value of the objective for $\bx$.	
	We show that $f_\lambda$ is maximized over $\bX^\textsc{S}$.
	Being $f_\lambda$ a linear combination of leader's utility functions, we immediately get that $x_\varnothing \in \mathcal{P}_{L}(\bX^\textsc{S})$, and $\bx \in \bX^\textsc{SCE}$.
	By contradiction, suppose that there exists an $\bx' = [x'_\pi] \in \bX^\textsc{S}$ with objective function value $f'_\lambda > f_\lambda$.
	This implies that there exists a leader $p \in L$ with $u_p(x'_{\{p\}})<u_p(x_{\{p\}})$, otherwise the solution $x_\varnothing$ returned by $\Or(G,{c}_\lambda,L, \{x_{\{p\}}\}_{p\in L})$ would not be optimal.
	This is a contradiction, since $x_{\{p\}}$ minimizes player $p$'s utility on the set $\X_{\{p\}\cup F}^\textsc{CE}$, and $x'_{\{p\}} \in \X_{\{p\}\cup F}^\textsc{CE}$.
\end{proof}

\theoremseven*
\begin{proof}
	Using $\Or$, we construct an $\bx \in \bX^\textsc{SCE-PA}$.
	Let $x_{\{p\}}=\Or(G,{c}_p,\varnothing, \varnothing)$, \emph{i.e.}, $x_{\{p\}}$ is a CE that minimizes player $p$'s utility.
	Moreover, we define $x_\varnothing=\Or(G,{c}_\lambda, L , \{x_{\{p\}}\}_{p \in L})$ for some $\lambda \in (0,1]^{|L|}$.
	By setting, for every leader $p \in L$, $x_{\pi}=x_{\{p\}}$ for all $\pi \in\Pi_L$ where $p$ is the first to \textsc{Opt-Out}, we have $\bx \in \bX^\textsc{PS}$.
	Clearly, we only require $|L|+1$ queries to $\Or$.
	Now, we prove that $x_\varnothing \in \mathcal{P}_L(\bX^\textsc{PS})$, and, thus, $\bx \in \bX^\textsc{SCE-PA}$.
	By contradiction, suppose that it is not the case, \emph{i.e.}, there exists an $\bx' = [x'_\pi] \in \bX^\textsc{PS}$ with $u_p(x'_\varnothing) \geq u_p(x_\varnothing)$ for all $p \in L$ and $u_q(x'_\varnothing) > u_q(x_\varnothing)$ for some leader $q \in L$.
	By stability of $x_\varnothing$, we have that $u_p(x'_\varnothing) \geq u_p(x_\varnothing) \geq u_p(x_{\{p\}})$ for every $p\in L$. 
	Thus, $x'_\varnothing$ satisfies $u_p(x'_\varnothing) \geq u_p(x_{\{p\}})$ for every leader $p \in L$ (stability), and
	$$
		\sum_{p \in N} \sum_{s \in S} c_{\lambda,p} u_p(s) x'_\varnothing(s) > \sum_{p \in N} \sum_{s \in S} c_{\lambda,p} u_p(s) x_\varnothing(s),
	$$
	which implies that $x'_\varnothing$ verifies the constraints for a solution to $\Or(G,{c}_\lambda,L, \{x_{\{p\}}\}_{p \in L})$, while providing an objective grater than that of $x_\varnothing$.
	This contradicts the correctness of $\Or$.	
	%
%
%
\end{proof}

\theoremeight*
\begin{proof}
	Given a DNF formula $\Phi$, we build an SG and an $\bx \in \bX$ such that $\bx \in \bX^\textsc{PS}$ if and only if $\Phi$ is a tautology. 
	Thus, if one could verify the perfect stability of $\bx$ in polynomial time, then there would be a polynomial-time checkable certificate for the \textsf{coNP}-complete problem of determining whether a DNF formula is a tautology or not~\cite{arora2009computational}.
	This would imply \textsf{NP} = \textsf{coNP}.
	Moreover, given how the SG is built, the result holds even if we get access to a polynomial-time decision oracle $\Or^\textsc{d}$.
%

	\emph{Construction.}
	Given a DNF formula $\Phi$, let $V$ denote the set of variables appearing in $\Phi$.
	We construct an SG $(G,L,F)$ involving a leader for each variable and a single follower, \emph{i.e.}, $L= \{ p_v \mid v \in V \}$ and $F = \{ p_f \}$.
	Moreover, we let $S_{p_f} = \{s_v \mid v \in V \}$ be the set of follower's strategies, one per variable, while the leaders share the strategies $S_{p_v} = \{ s_\textsc{T}, s_\textsc{F} \}$, corresponding to truth values .
	As a result, any strategy profile $s \in S$ corresponds to a truth assignment $\tau^s$ defined by leaders' strategies.
	%
	%
	We write $\Phi(\tau^s)=\textsc{T}$ if $\tau^s$ satisfies $\Phi$, while $\Phi(\tau^s)=\textsc{F}$ otherwise.
	We also denote with $\#\textsc{F}(\tau^s)$ the number of false variables in $\tau^s$.
	%
	%
	Table~\ref{table:reduction} reports the leaders' utilities, while the follower's one is always $0$.
	Then, we build an $\bx = [x_\pi] \in \bX$ with $x_\varnothing(s) =1$ for some $s \in S$ such that $s_{p_v} = s_\textsc{T}$ for every $v \in V$.
	%
	Furthermore, for every $v \in V$ and $\pi \in \Pi_{L \setminus \{p_v\}}$, we let $x_{\pi p_v}(s) = 1$ for $s \in S$ with $s_{p}= s_\textsc{F}$ for every $p \in \pi p_v$, $s_p=s_\textsc{T}$ for every $p \in L\setminus \pi p_v$, and $s_{p_f}=s_{v}$.
	Let us remark that our SG admits a polynomial-time decision oracle $\Or^\textsc{d}(x,L,\{x_p\}_{p \in L' \subseteq L})$, since it can be queried in polynomial time only on polynomially-sized distributions.
	%
	%

	\emph{If.}
	We prove that, if $\Phi$ is a tautology, then $\bx \in \bX^\textsc{PS}$.
	For every $\pi \in \Pi_L$, $x_\pi$ recommends all the leaders in $\pi$ to play $s_\textsc{F}$.
	Moreover, being $\Phi$ a tautology, strategy $s_\textsc{F}$ (weakly) dominates $s_\textsc{T}$ (as it is always the case that $\Phi(\tau^s)=\textsc{T}$).
	Thus, $x_\pi \in \X^\textsc{CE}_{\pi \cup F}$. 
	Note that, for every $v \in V$ and $\pi \in \Pi_{L \setminus \{p_v\}}$, $u_{p_v}(x_\pi)= \#\textsc{F}(\tau^s) =|\pi|$, while, if $p$ decides to \textsc{Opt-Out}, she is recommended to play $s_\textsc{F}$ and, being $s_{p_f}=s_v$, she gets the same utility. 
	As a result, all distributions $x_\pi$ are stable.

	\emph{Only if.}
	We prove that, if $\Phi$ is not a tautology, then $\bx \notin \bX^\textsc{PS}$.
	Let $s \in S$ be such that $\phi(\tau^s) = \textsc{F}$.
	Two cases are possible.
	If $s_{p_v} =s_\textsc{T}$ for every $v \in V$, then $x_\varnothing$ is not stable as the leaders would have an incentive to \textsc{Opt-Out} (since they get at least $0 > -1$).
	If this is not the case, then there exist $s,s' \in S$ such that $\Phi(\tau^s) = \textsc{T}$ and $\Phi(\tau^{s'}) = \textsc{F}$, where $x_{\pi}(s)=1$ and $x_{\pi p_v}(s')=1$ for some $v \in V$ and $\pi\in \Pi_{L \setminus \{p_v\}}$.
	In this case, $u_{p_v}(x_\pi)= \#\textsc{F}(\tau^{s}) = |\pi|$ and $u_{p_v}(x_{\pi p_v})= |V| > |\pi|$.
	Thus, $x_\pi$ is not stable, as leader $p_v$ would have an incentive to \textsc{Opt-Out}.
\end{proof}

\setcounter{theorem}{7}
\begin{corollary}
	\label{corollary:hardness_verify}
	Given an SG $(G,L,F)$ and $\bx \in \bX$, verifying whether $\bx \in \bX$ is an SCE-PA maximizing the social welfare is not in \emph{\textsf{P}} unless \emph{\textsf{NP} = \textsf{coNP}}, even with access to a polynomial-time decision-form oracle $\Or^\textsc{d}$.
\end{corollary}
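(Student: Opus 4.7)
The plan is to reduce the perfect-stability verification problem of Theorem~\ref{thm:hardness_verify} to the problem of verifying that a given $\bx$ is a social-welfare-maximizing SCE-PA. The easy half is immediate: if $\bx \not\in \bX^\textsc{PS}$, then $\bx \not\in \bX^\textsc{SCE-PA}$, hence it is not a SW-maximal SCE-PA. So, using the SG and distribution $\bx$ of Theorem~\ref{thm:hardness_verify}, ``$\Phi$ is not a tautology'' immediately translates into ``$\bx$ is not a SW-maximal SCE-PA''.

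The nontrivial half is the ``yes'' direction: when $\Phi$ is a tautology, I must additionally ensure that the candidate $\bx$ attains the maximum possible social welfare among SCE-PAs. To enforce this, I would augment the SG of Theorem~\ref{thm:hardness_verify} by giving each leader $p_v$ one extra strategy $s^\star$ (and the follower a corresponding dummy strategy) and by defining the utilities so that (i) the profile $s^{\star\star}$ in which every leader plays $s^\star$ yields a large bonus $M$ to every leader, while (ii) any profile in which only a strict subset of leaders plays $s^\star$ is heavily penalized. I would then replace $x_\varnothing$ with the distribution concentrated on $s^{\star\star}$, and keep $x_\pi$ for $\pi \neq \varnothing$ essentially as in Theorem~\ref{thm:hardness_verify}, tweaked so that the leaders in $\pi$ are never recommended $s^\star$. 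For $M$ chosen large enough, $s^{\star\star}$ is the unique social-welfare maximizer over all of $\X$, so any SCE-PA concentrated on $s^{\star\star}$ is automatically SW-maximal.

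It remains to show that the augmented construction preserves the ``tautology iff perfectly stable'' characterization. I would argue this along two lines. In the tautology case, no leader $p_v$ has an incentive to \textsc{Opt-Out} of $x_\varnothing$: by doing so she drops from utility $M$ to at most whatever is attainable in the Opt-Out subtree, which by the penalty structure cannot involve $s^\star$ and hence is bounded by the original utilities of Theorem~\ref{thm:hardness_verify}. In the non-tautology case, the Opt-Out deviation identified in Theorem~\ref{thm:hardness_verify} still breaks stability at some $x_\pi$ with $\pi \neq \varnothing$, because the added strategies $s^\star$ are forbidden in those subtrees and therefore do not enlarge the set of achievable continuation utilities.

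The main obstacle is calibrating the bonus/penalty structure for $s^\star$ so that three conditions simultaneously hold: $M$ is attainable \emph{only} on $s^{\star\star}$; the additional strategies do not create new profitable single-player deviations violating the CE/stability constraints of any $x_\pi$; and they do not introduce spurious Opt-Out incentives in the tautology case. This requires careful bookkeeping of the incentive constraints for the follower and for each leader who becomes a follower in a subtree, but is otherwise a routine extension of Theorem~\ref{thm:hardness_verify}'s argument. With all this in place, verifying whether $\bx$ is a SW-maximal SCE-PA would certify non-tautology status for DNF formulas, contradicting \textsf{NP} $\neq$ \textsf{coNP}.
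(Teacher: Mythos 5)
Your high-level plan (augment the construction of Theorem~\ref{thm:hardness_verify} so that, in the tautology case, the candidate $\bx$ is forced to be social-welfare maximal) is the right one, but your concrete mechanism breaks the reduction. The problem is that you relocate $x_\varnothing$ onto the new profile $s^{\star\star}$. In the original ``only if'' argument, non-tautologies split into two cases: either the falsifying assignment sets some variable to false, in which case stability fails at some $x_\pi$ with $\pi \neq \varnothing$, or the \emph{only} falsifying assignment is the all-true one, in which case the instability is detected precisely at $x_\varnothing$ (the leaders receive $-1$ on the all-$s_\textsc{T}$ profile and prefer to \textsc{Opt-Out}, getting at least $0$). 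Your claim that ``the Opt-Out deviation identified in Theorem~\ref{thm:hardness_verify} still breaks stability at some $x_\pi$ with $\pi \neq \varnothing$'' is false in the second case: if $\Phi$ is true on every assignment except the all-true one, then for every $\pi \neq \varnothing$ the supports of $x_\pi$ and $x_{\pi p_v}$ both carry satisfying assignments, so $u_{p_v}(x_\pi) = \#\textsc{F}(\tau^s) = |\pi| = u_{p_v}(x_{\pi p_v})$ and all those nodes remain stable; meanwhile your new $x_\varnothing$ on $s^{\star\star}$ pays every leader the bonus $M$, so nobody opts out at the root either. Hence your $\bx$ would be a perfectly stable, SW-maximal SCE-PA even though $\Phi$ is not a tautology, and the ``iff'' fails. (This could in principle be patched by first evaluating $\Phi$ on the all-true assignment and handling that case separately, but you do not do this, and it adds exactly the kind of case analysis your construction was supposed to avoid.)

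The paper's proof sidesteps all of this with a much lighter modification: it keeps the game, the distribution $\bx$, and in particular $x_\varnothing$ (concentrated on an all-$s_\textsc{T}$ profile) completely unchanged, and merely adds one extra leader with a \emph{single} action whose utility is $|V|^2$ on profiles where every $p_v$ plays $s_\textsc{T}$ and $0$ otherwise. Since this player has one action, she introduces no new incentive or stability constraints, so both directions of Theorem~\ref{thm:hardness_verify} carry over verbatim, and the large payoff on the all-$s_\textsc{T}$ profiles makes $x_\varnothing$ social-welfare maximal among perfectly stable distributions when $\Phi$ is a tautology. If you want to salvage your route, the lesson is to add the bonus without moving $x_\varnothing$ off the all-$s_\textsc{T}$ support, because that support is what detects the all-true falsifying assignment.
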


\begin{proof}
	We can modify the proof of Theorem~\ref{thm:hardness_verify} so that, when $\Phi$ is a tautology, $\bx \in \bX$ is the only perfectly stable distribution maximizing the social welfare.
	In order to do this, it is enough to add a leader with a single action and utility $|V|^2$ if $s_{p_v}=s_\textsc{T}$ for all $v \in V$, while $0$ otherwise.
\end{proof}

\theoremnine*
\begin{proof}
	Let us take some $\bx \in \bX^\textsc{PS}$. 
	For every $p \in L$ and $\pi \in \Pi_{L \setminus \{p\}}$, we define $x'_{\pi p} = x_{\pi' p}$ where $x_{\pi' p}$ minimizes $u_p(x_{\pi' p})$ over all $\pi' \in \Pi_{L\setminus \{p\}}$ such that $\pi$ and $\pi'$ define the same set.
	Moreover, let $x'_\varnothing = x_\varnothing$.
	Clearly, $x'_\pi \in\X^\textsc{CE}_{\pi \cup F}$ for all $\pi \in \Pi_L$ (as each $x'_{\pi}$ is set equal to an $x_{\pi'}$ such that $\pi$ and $\pi'$ correspond to the same set of leaders who performed \textsc{Opt-Out}).
	Moreover, it is easy to check that $\bx'$ is perfectly stable, as follows.
	Let us consider some $p \in L$ and $\pi \in \Pi_{L \setminus \{p\}}$.
	%
	By definition, for every $q \in L \setminus \pi p$, it holds $u_q(x'_{\pi p}) = u_q(x_{\pi' p})$, for some $\pi' \in \Pi_{L \setminus \{p\}}$.
	Moreover, $u_q(x_{\pi' p}) \geq u_q(x_{\pi' p q})$ by stability of $\bx$, and $u_q(x_{\pi' p q}) \geq u_q(x'_{\pi'' p q})$ for some $\pi'' \in \Pi_{L \setminus \{p\}}$.
	Finally, by definition of $\bx'$, we have that $u_q(x'_{\pi'' p q}) = u_q(x'_{\pi p q})$, which shows that $u_q(x'_{\pi p}) \geq u_q(x'_{\pi p q})$.
	Since this holds for any $p \in L$ and $\pi \in \Pi_{L \setminus \{p\}}$, we conclude that $\bx' \in \bX^\textsc{PS}$.
	%
	%
%
\end{proof}

\theoremten*
\begin{proof}
	We build an $\bx =[x_\pi] \in \bX^\textsc{SCE-PA}$ that maximizes $f_\lambda$ by using a stability oracle $\Or$.
	For every $p \in L$ and $\pi\in \Pi_{L \setminus \{p\}}$ with $\pi p=L$, we let $x_{\pi p} = \Or(G,{c}_p,\varnothing, \varnothing)$.
	Otherwise, whenever $\pi p \neq L$, letting $\pi' =\pi p$, we set $x_{\pi'}=\Or(G,{c}_p,L \setminus \pi', \{x_{\pi' q}\}_{q \in L \setminus \pi'})$.
	Moreover, $x_\varnothing=\Or(G,{c}_\lambda,L, \{x_{\{p\}}\}_{p \in L})$.
	Notice that $x_{\pi p} = x_{\pi' p}$ for every $p \in L$ and $\pi,\pi' \in \Pi_{L \setminus \{p\}}$ with $\pi$ and $\pi'$ defining the same set.
	Thus, the number of queries to $\Or$ is 
	$
		\sum_{i =1}^{|L|} |L| \binom{|L|-1}{i-1} +1= |L| 2^{|L|-1} + 1 
	$.
	Clearly, by definition of $\Or$, all the incentive constraints of Eq.~\eqref{eq:incentive_ce} are satisfied.
	Furthermore, it is easy to check that $x_\pi$ is stable for every $\pi \in \Pi_L$.
	As a result, we can conclude that $\bx \in \bX^\textsc{PS}$.
	Now, we prove that $\bx$ maximizes the objective $f_\lambda$ over the set $\bX^\textsc{PS}$.
	This also proves the efficiency of $\bx$, and, thus, $\bx \in \bX^\textsc{SCE-PA}$.
	By contradiction, suppose there exists another $\bx' = [x'_\pi] \in \bX^\textsc{SCE-PA}$ with objective value $f'_\lambda > f_\lambda$. 
	Three cases are possible:
	\begin{itemize}
		\item there exist $p \in L$ and $\pi \in \Pi_{L \setminus \{p\}}$ with $\pi p =L$ such that $u_p(x'_{\pi p}) < u_p(x_{\pi p})$;
		\item there exist $p \in L$ and $\pi \in \Pi_{L \setminus \{p\}}$ with $\pi p  \neq L$ such that $u_p(x'_{\pi p}) < u_p(x_{\pi p})$ and, letting $\pi' = \pi p$, $u_q(x'_{\pi' q}) \geq u_q(x_{\pi' q})$ for all $q \in L \setminus \pi'$;
		\item $u_p(x'_{\{p\}}) \geq u_p(x_{\{p\}})$ for all $p \in L$.
	\end{itemize}
	All the three cases contradict the correctness of $\Or$.
\end{proof}

\propositionsix*
\begin{table}
	{\renewcommand{\arraystretch}{1.1}\setlength{\tabcolsep}{1.7pt}\begin{tabular}{cc|c|c|}
		& \multicolumn{1}{c}{} & \multicolumn{1}{c}{$s_{2,1}$}  & \multicolumn{1}{c}{$s_{2,2}$} \\\cline{3-4}
		& $s_{1,1}$ & $1,2,0$ & $0,1,0$ \\\cline{3-4}
		& $s_{1,2}$ & $1,2,0$ & $0,1,0$ \\\cline{3-4}
		& \multicolumn{1}{c}{} & \multicolumn{2}{c}{$s_{3,1}$}\\
	\end{tabular}}
	{\renewcommand{\arraystretch}{1.1}\setlength{\tabcolsep}{1.7pt}\begin{tabular}{cc|c|c|}
		& \multicolumn{1}{c}{} & \multicolumn{1}{c}{$s_{2,1}$}  & \multicolumn{1}{c}{$s_{2,2}$} \\\cline{3-4}
		& $s_{1,1}$ & $2,1,0$ & $2,1,0$ \\\cline{3-4}
		& $s_{1,2}$ & $1,0,0$ & $1,0,0$ \\\cline{3-4}
		& \multicolumn{1}{c}{} & \multicolumn{2}{c}{$s_{3,2}$}\\
	\end{tabular}}
	{\renewcommand{\arraystretch}{1.1}\setlength{\tabcolsep}{1.7pt}\begin{tabular}{cc|c|c|}	
		& \multicolumn{1}{c}{} & \multicolumn{1}{c}{$s_{2,1}$}  & \multicolumn{1}{c}{$s_{2,2}$} \\\cline{3-4}
		& $s_{1,1}$ & $2,2,0$ & $2,2,0$ \\\cline{3-4}
		& $s_{1,2}$ & $2, 2,0$ & $0,0,10$ \\\cline{3-4}
		& \multicolumn{1}{c}{} & \multicolumn{2}{c}{$s_{3,3}$}\\
	\end{tabular}}
	\caption{Three-player normal-form SG showing that, when searching for an optimal SCE-PA, it is necessary to consider the last leader who performed \textsc{Opt-Out} (players 1, 2, and 3 select rows, columns, and matrices, respectively).}
	\label{table:ordering}
\end{table}

\begin{proof}
	Consider the SG in Table~\ref{table:ordering}, with $L=\{1,2,3\}$ and $F = \varnothing$.
	There exists an $\bx = [x_\pi] \in \bX^\textsc{SCE-PA}$ such that $x_\varnothing(s_{1,2},s_{2,2},s_{3,3})=1$, and the same holds for $x_{\{1\}}(s_{1,1},s_{2,2},s_{3,1})$, $x_{\{2,1\}}(s_{1,2},s_{2,2},s_{3,1})$, $x_{\{2\}}(s_{1,2},s_{2,1},s_{3,2})$, and $x_{\{1,2\}}(s_{1,2},s_{2,1},s_{3,2})$.
	Moreover, for every $\pi\in \Pi_L$ including player 3, $x_\pi(s_{1,1},s_{2,1},s_{3,3})=1$.
	Notice that $x_\pi$ depends on the last player who decides to \textsc{Opt-Out} since $x_{\{1,2\}} \neq x_{\{2,1\}}$.
	We show that there is no $\bx' =[x'_\pi] \in \bX^\textsc{SCE-PA}$ where $x_\pi$ does not depend on the last leader to \textsc{Opt-Out} and $x'_\varnothing(s_{1,2},s_{2,2},s_{3,3})=1$.
	Assume, by contradiction, that there exists such $\bx'$.
	If player 1 performs \textsc{Opt-Out}, she will get more than $0$, unless only the strategy profiles $(s_{1,2},s_{2,2},s_{3,3})$, $(s_{1,1},s_{2,2},s_{3,1})$, and $(s_{1,2},s_{2,2},s_{3,1})$ are recommended by $x'_{\{1\}}$. 
	The other strategy profiles providing player 1 with a utility of $0$ cannot be recommended, otherwise incentive constraints of Eq.~\eqref{eq:incentive_ce} are not satisfied. 
	As a result, only strategy profiles $(s_{1,1},s_{2,1},s_{3,2})$ and $(s_{1,1},s_{2,2},s_{3,2})$ are recommended in $x'_{\{1,2\}}$ (otherwise player 2 would have an incentive to \textsc{Opt-Out}).
	Instead, consider the case in which player 2 performs \textsc{Opt-Out}. 
	Since players 1 and 2 are symmetric, $x'_{\{2,1\}}$ can only recommend strategy profiles $(s_{1,1},s_{2,1},s_{3,1})$ and $(s_{1,2},s_{2,1},s_{3,1})$. 
	Thus, $x'_{\{2,1\}}$ must be different from $x'_{\{1,2\}}$, a contradiction.
\end{proof}

\section{Omitted Proofs for Section~\ref{sec:games_main}}\label{sec:games}

%

In this section, we provide a complete proof of Theorem~\ref{thm:compact_games}.

%
%
%
For ease of presentation, we treat $x \in \X$ as an $|S|$-dimensional vector.
Moreover, given $c = [c_p] \in [-1,1]^n$, let $w = [w_s] \in \mathbb{R}^{|S|}$ be a vector with $w_s= \sum_{p \in N} c_p u_p(s)$, and, given a collection of correlated distributions $\{ x_p \}_{p\in L' \subseteq L}$, let $b_p =  \sum_{s \in S} u_p(s) x_p(s)$ for every $p \in L' \subseteq L$.

%
%
%

The solutions returned by $\Or(G,c,L,\{x_p  \}_{p\in L' \setminus L})$ are the optimal solutions to the following LP: 
\begin{equation*}
	\mathfrak{P}\ :\ \left\{\begin{aligned}
	\max \quad & w^T x \\
	\textnormal{s.t.} \quad & U x \geq 0 \\
	&  \mathbf{1}^T x = 1, \,\, x \geq 0, 
	\end{aligned}\right.
\end{equation*}
where $U$ is a matrix of dimensions $C \times |S|$ (with $C = \sum_{p \in N \setminus L} |S_p|^2 + |L|$) encoding the coefficients of the incentive constraints of Eq.~\eqref{eq:incentive_ce} for the players in $N \setminus L$, and those of the additional stability constraints, \emph{i.e.}, for every $p \in L$
$$
	\sum_{s \in S} \left( u_p(s) - b_p \right) x(s) \geq 0. 
$$
We denote with $U_s$ the column of $U$ corresponding to $s \in S$.

%
%
%
%
%
%

We can write the dual of problem $\mathfrak{P}$ as:
\begin{equation*}
\mathfrak{D}\ :\ \left\{\begin{aligned}
\min \quad & t \\
\textnormal{s.t.} \quad & U^T z + w \leq t \mathbf{1} \\
&  z \geq 0,
\end{aligned}\right.
\end{equation*}
where $z = [z_{s_p, s'_p}^p; z_p] \in \mathbb{R}^C$ is a vector of dual variables: $z_{s_p, s'_p}^p$ for all $p \in N \setminus L$ and $s_p,s'_p \in S_p$, and $z_p$ for all $p \in L$.

%
%
%
%
%
%

A separation problem for $\mathfrak{D}$ asks whether a given pair $(z, t)$ is feasible, and if not, it calls for a hyperplane separating $(z, t)$ from the feasible set. 
Following~\citeauthor{leyton2011ellipsoid}~[\citeyear{leyton2011ellipsoid}], we focus on a restricted form of separation, requiring a violated constraint for infeasible points. 
%
%
Formally:

\begin{definition}[\textsf{Sep}$(z,t)$]\label{def:separation_prob}
	Given a pair $(z, t)$ such that $z \geq 0$, determine if there exists an $s \in S$ such that $(U_s)^T z + w_s > t$; if so output such an $s$.
\end{definition}

%
%
%
%

Notice that, for every $s \in S$, 
\begin{align*}
(U_s)^T z = & \sum_{p \in N \setminus L} \sum_{s'_p \in S_p} z_{s_p,s'_p}^p \left(u_p(s)- u_p(s'_p, s_{-p}) \right) + \\
& + \sum_{p \in L} z_p \left( u_p(s)- b_p \right).
\end{align*}

The following holds:

\setcounter{theorem}{10}
\begin{theorem}\label{thm:poly_ellipsopid}
	If \emph{\textsf{Sep}$(z,t)$} can be solved in polynomial time, then $\Or$ can be computed in polynomial time.
\end{theorem}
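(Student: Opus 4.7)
The plan is to apply the standard ellipsoid-method argument of Gr\"otschel--Lov\'asz--Schrijver, adapted in the style of~\citeauthor{papadimitriou2008computing}~[\citeyear{papadimitriou2008computing}] and~\citeauthor{leyton2011ellipsoid}~[\citeyear{leyton2011ellipsoid}] to the LP pair $(\mathfrak{P},\mathfrak{D})$ defined above. The primal $\mathfrak{P}$ has $|S|$ variables (exponential in the input size for polynomial-type games), but only polynomially many constraints, since $|N\setminus L|\leq n$ and the stability constraints add only $|L'|\leq |L|$ rows. Dually, $\mathfrak{D}$ has polynomially many variables $(z,t)$ but exponentially many constraints, one per $s\in S$. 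This is precisely the setting in which separation equals optimization.

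First, I would verify that problem $\mathfrak{D}$ is bounded and that the feasible region can be enclosed in, and is either empty or contains, balls of polynomially-bounded radius whose bit-lengths are polynomial in the input size of $G$; this is standard since the entries of $U$, $w$ and $b_p$ are all rational numbers with bit-length polynomial in the size of the game (recall that, by assumption, each $x_p$ in the oracle input has polynomial representation). Next, I would run the ellipsoid method on $\mathfrak{D}$, using the polynomial-time routine for $\textsf{Sep}(z,t)$ as the separation oracle: at each iteration, either $\textsf{Sep}$ certifies feasibility of $(z,t)$, or it returns some $s\in S$ whose corresponding constraint $(U_s)^T z+w_s\leq t$ is violated, providing a separating hyperplane. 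Since $\mathfrak{D}$ has polynomially many variables, the ellipsoid method terminates in polynomially many iterations and yields an optimal dual solution along with a polynomial-size list $\widetilde S\subseteq S$ of strategy profiles that were ever used to separate.

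Then, I would form the \emph{restricted primal} $\mathfrak{P}'$ obtained from $\mathfrak{P}$ by setting $x(s)=0$ for all $s\notin \widetilde S$, thereby retaining only the variables $\{x(s)\}_{s\in \widetilde S}$. By the standard argument, the dual of $\mathfrak{P}'$ coincides with $\mathfrak{D}$ restricted to the constraints indexed by $\widetilde S$, and the ellipsoid run on this restricted dual produces exactly the same trajectory and the same optimum as on the full $\mathfrak{D}$; hence the optimal value of $\mathfrak{P}'$ equals that of $\mathfrak{P}$. Since $|\widetilde S|$ is polynomial, $\mathfrak{P}'$ is a polynomially-sized LP, which can be solved exactly in polynomial time by any standard LP algorithm. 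Extending its optimal solution by zeros on $S\setminus \widetilde S$ yields a feasible and optimal solution to $\mathfrak{P}$, which is precisely the output required of $\Or(G,c,L,\{x_p\}_{p\in L'\subseteq L})$.

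The main technical obstacle is the bookkeeping needed to guarantee the polynomial bit-length of the iterates of the ellipsoid and the polynomial encoding of the output distribution; however, this is handled by the by-now-classical machinery, and it also immediately gives that $\Or$ returns an $x\in\mathcal{X}$ whose support has polynomial size, justifying the assumption stated after Definition~\ref{def:oracle}. Combining with the reduction of $\textsf{Sep}(z,t)$ to $\textsf{w-DaSW}(y,v,t)$ (shown in the appendix), one then obtains Theorem~\ref{thm:compact_games} for every game class in which optimal CEs can be computed in polynomial time.
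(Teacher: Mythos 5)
Your proposal is correct and follows essentially the same route as the paper: use the polynomial-time \textsf{Sep}$(z,t)$ routine as the separation oracle for the ellipsoid method on the dual $\mathfrak{D}$, then recover a primal optimum from the polynomially many violated constraints generated along the way (the restricted-primal step you spell out is exactly what the paper means by ``separating hyperplanes \ldots can be used to compute such solution $x$''). Your added detail on bit-length bounds and the polynomial support of the output matches the paper's Corollary on polynomially-sized distributions, so nothing is missing or divergent.
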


\begin{proof}
	Clearly, a polynomial-time algorithm for \textsf{Sep}$(z,t)$ can be used as separation oracle in the ellipsoid method, solving $\mathfrak{D}$ in polynomial time.
	By duality, the optimal objective for $\mathfrak{D}$ is the value $w^T x$ of a solution $x \in \X$ for $\Or$.
	Since we required that separating hyperplanes be constraints for $\mathfrak{D}$, they can be used to compute such solution $x$.
\end{proof}

\begin{corollary}\label{cor:poly_support}
	$\Or$ returns a polynomially-sized $x \in \X$.
\end{corollary}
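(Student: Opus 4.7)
The plan is to exploit the fact that, in the proof of Theorem~\ref{thm:poly_ellipsopid}, the ellipsoid method run on $\mathfrak{D}$ with separation oracle \textsf{Sep}$(z,t)$ terminates after polynomially many iterations, so that \textsf{Sep}$(z,t)$ is invoked polynomially often. Each invocation that reports infeasibility identifies exactly one violated constraint of $\mathfrak{D}$, which in turn corresponds to a single strategy profile $s \in S$ (since constraints of $\mathfrak{D}$ are indexed by $S$). Collecting all such profiles into a set $T \subseteq S$, we get $|T|$ polynomial in the size of the game.

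Next, I would form the reduced dual $\mathfrak{D}_T$, obtained from $\mathfrak{D}$ by keeping only the constraints $(U_s)^T z + w_s \leq t$ with $s \in T$, and its LP dual $\mathfrak{P}_T$, which coincides with $\mathfrak{P}$ except that the variables $x(s)$ for $s \notin T$ are fixed to zero. By the standard equivalence between separation and optimization, the optimum of $\mathfrak{D}_T$ equals that of $\mathfrak{D}$, because the run of the ellipsoid method used only constraints indexed by $T$ to certify optimality. Strong LP duality then gives $\mathrm{opt}(\mathfrak{P}_T) = \mathrm{opt}(\mathfrak{D}_T) = \mathrm{opt}(\mathfrak{P})$. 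Since $\mathfrak{P}_T$ has at most $|T|$ variables and polynomially many constraints, it can be solved exactly in polynomial time, returning a vertex solution $x \in \X$ that is optimal for $\mathfrak{P}$ and whose support lies inside $T$. This is the $x$ reconstructed in the last sentence of the proof of Theorem~\ref{thm:poly_ellipsopid}, and its representation size is polynomial.

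The one subtle point is to make sure that the hyperplanes used by the ellipsoid method are genuine constraints of $\mathfrak{D}$, rather than arbitrary valid cuts, so that each of them really corresponds to fixing a single variable of $\mathfrak{P}$ to zero in the reduction. This is guaranteed by the restricted form of separation in Definition~\ref{def:separation_prob}: \textsf{Sep}$(z,t)$ either declares feasibility or returns some $s \in S$ whose associated $\mathfrak{D}$-constraint is violated. The main obstacle is therefore not conceptual but book-keeping: one must verify that the ellipsoid procedure, run with this restricted oracle, produces its optimality certificate using only the constraints it has actually seen, so that replacing $\mathfrak{D}$ by $\mathfrak{D}_T$ preserves the optimum. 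Once this is in place, the polynomial support bound on $x$ follows immediately.
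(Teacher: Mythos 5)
Your proposal is correct and follows essentially the same route as the paper, which simply notes that the ellipsoid method of Theorem~\ref{thm:poly_ellipsopid} generates only polynomially many violated constraints; your argument is the standard elaboration of that one-line proof (collect the polynomially many profiles $s$ returned by \textsf{Sep}, restrict $\mathfrak{P}$ to those variables, and solve the reduced LP). The subtlety you flag --- that the separating hyperplanes must be genuine constraints of $\mathfrak{D}$ indexed by single profiles --- is exactly the point the paper addresses by insisting on the restricted form of separation in Definition~\ref{def:separation_prob}.
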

\begin{proof}
	This is an immediate consequence of the fact that the ellipsoid method, as applied in Theorem~\ref{thm:poly_ellipsopid}, generates a polynomial number of violated constraints.
\end{proof}

Now, we introduce some definitions from~\cite{leyton2011ellipsoid}.
Given a finite game $G$, we let $y = [y_{s_p,s'_p}^p] \in \mathbb{R}^{C'}$ (with $C' = \sum_{p\in N} |S_p|^2$) be a vector indexed by $p\in N$ and $s_p, s'_p \in S_p$.
Moreover, we let $v = [v_p] \in \mathbb{R}^n$ be a vector indexed by $p \in N$.

\begin{definition}
	Given a finite game $G$, a vector $y \in \mathbb{R}^{C'}$ such that $y \geq 0$, and a vector $v \in \mathbb{R}^n$, the \emph{weighted deviation-adjusted utility} for player $p \in N$ in $s \in S$ is:
	$$  
		\hat{u}^p_s(y,v)=v_p u_p(s)+ \sum_{s'_p \in S_p} y_{s_p,s'_p}^p \left( u_p(s)- u_p(s'_p,s_{-p}) \right),
	$$
	and the \emph{weighted deviation-adjusted social welfare} is $\hat{w}_s(y,v)=\sum_{p \in N} \hat{u}^p_s(y)$.
\end{definition}

The following is the formal definition of \emph{weighted deviation-adjusted social welfare problem}.~\footnote{The version proposed by~\citeauthor{leyton2011ellipsoid}~[\citeyear{leyton2011ellipsoid}] adds the additional constraints that $v_p \geq 0$ and $\sum_{p\in N} v_p= 1$.}
\begin{definition}[\textsf{w-DaSW}$(y,v,t)$] \label{problem:2}
	Given a triplet $(y, v, t)$ such that $y \ge 0$, determine if there exists an $s \in S$ such that $\hat{w}_s(y,v) > t$; if so output such an $s$.
\end{definition}

Our main result is the following:

\begin{theorem}\label{thm:sep_reduction}
	\emph{\textsf{Sep}$(z,t)$} reduces to \emph{\textsf{w-DaSW}$(y,v,t)$}.
\end{theorem}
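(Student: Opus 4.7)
The plan is to give a direct, constant-factor reduction that rewrites the linear form $(U_s)^T z + w_s$ appearing in \textsf{Sep}$(z,t)$ as a weighted deviation-adjusted social welfare $\hat{w}_s(y,v)$ plus an additive constant that depends only on the input $(z,t)$ and not on $s$. Given an instance $(z,t)$ of \textsf{Sep}$(z,t)$, I would construct $(y,v,t')$ of \textsf{w-DaSW}$(y,v,t')$ as follows: for every $p \in N \setminus L$ and $s_p,s_p' \in S_p$, set $y^p_{s_p,s_p'} = z^p_{s_p,s_p'}$; for every $p \in L$ and $s_p,s_p' \in S_p$, set $y^p_{s_p,s_p'} = 0$; for every $p \in N \setminus L$, set $v_p = c_p$; for every $p \in L$, set $v_p = z_p + c_p$; and finally $t' = t + \sum_{p \in L} z_p\, b_p$. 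Since $z \geq 0$, the resulting $y$ is non-negative, as required.

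The verification is an algebraic rearrangement. Splitting $w_s = \sum_{p \in N \setminus L} c_p u_p(s) + \sum_{p \in L} c_p u_p(s)$ and expanding $(U_s)^T z$ using the identity displayed right after Definition~\ref{def:separation_prob}, I would collect the terms linear in $u_p(s)$ and in $u_p(s) - u_p(s_p', s_{-p})$ separately for leaders and non-leaders. This yields
\[
(U_s)^T z + w_s \;=\; \sum_{p \in N \setminus L} \sum_{s_p' \in S_p} z^p_{s_p,s_p'}\bigl(u_p(s) - u_p(s_p', s_{-p})\bigr) + \sum_{p \in N \setminus L} c_p u_p(s) + \sum_{p \in L} (z_p + c_p) u_p(s) - \sum_{p \in L} z_p b_p,
\]
which under the assignment above is exactly $\hat{w}_s(y,v) - \sum_{p \in L} z_p b_p$. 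Hence $(U_s)^T z + w_s > t$ if and only if $\hat{w}_s(y,v) > t'$, so any $s$ output by a \textsf{w-DaSW}$(y,v,t')$ solver is a valid answer for \textsf{Sep}$(z,t)$ (and conversely a ``no'' answer transfers back). Since the mapping from $(z,t)$ to $(y,v,t')$ only requires computing the constants $b_p = \sum_{s \in S} u_p(s) x_p(s)$ for $p \in L'$ and a handful of sums, it is polynomial in the size of the game.

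The only delicate point is bookkeeping: one must be careful that the paper's version of \textsf{w-DaSW} (as stated in Definition~\ref{problem:2}) does \emph{not} impose the sign or normalization constraints $v_p \geq 0$ and $\sum_p v_p = 1$ that appear in the original Leyton-Brown formulation, as noted in the footnote. This matters because $c \in [-1,1]^n$ allows $v_p = c_p$ (or $v_p = z_p + c_p$) to be negative, which would be illegal in the constrained formulation. Apart from this indexing/sign care and the constant-shift identity, the argument is routine, so I expect no substantial obstacle beyond accurately matching the two expressions term by term.
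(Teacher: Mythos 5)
Your reduction is exactly the one in the paper: the same assignment ($y$ zeroed on leaders and copied from $z$ on followers, $v_p = c_p + z_p$ on leaders and $v_p = c_p$ on followers, threshold shifted by $\sum_{p \in L} z_p b_p$), verified by the same term-by-term rearrangement. Your remark about the footnoted relaxation of the $v_p \geq 0$ and normalization constraints is a correct and worthwhile point that the paper leaves implicit.
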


\begin{proof}
	Given $(z,t)$ with $z \geq 0$, asking $(U_s)^T z + w_s > t$ is equivalent to asking
	\begin{align*}
		&\sum_{p \in N \setminus L} \sum_{s'_p \in S_p} y_{s_p,s'_p}^p \left( u_p(s)- u_p(s'_p,s_{-p}) \right) + \\
		& \quad + \sum_{p \in L} \left( c_p + y_p \right) u_p(s)+ \sum_{p \in  N \setminus L} c_p u_p(s) - \sum_{p \in L} y_p b_p >t.
	\end{align*}
	In turn, this is equivalent to solving \textsf{w-DaSW}$(\hat y, \hat v, \hat t)$ with:
	\begin{itemize}
		\item $\hat{y}_{s_p,s'_p}^p=0 $ for all $p \in L, s_p,s'_p \in S_p$;
		\item $\hat{y}_{s_p,s'_p}^p=y_{s_p,s'_p}^p$ for all $ p \in N \setminus L, s_p,s'_p \in S_p$;
		\item $\hat{t}=t + \sum_{p \in L} y_p b_p$;
		\item $\hat{v}_p=c_p+y_p$ for all $p \in L$;
		\item $\hat{v}_p=c_p$ for all $p \in N \setminus L$.
	\end{itemize} 
This concludes the proof.
\end{proof}

%
%

In conclusion, the results in~\cite{leyton2011ellipsoid} together with Theorem~\ref{thm:sep_reduction} prove Theorem~\ref{thm:compact_games}.


%
%
%

\end{document}